\newtheorem{theorem}{Theorem}[section]                                          
\newtheorem{lemma}[theorem]{Lemma}
\newtheorem{condition}{Hypothesis}
\newtheorem{definition}{Definition}[section]
\newtheorem{remark}{Remark}[section]
\newcommand{\Tr}{\hbox{\rm tr}}
\begin{document}

\title{Basic properties of a mean field laser equation}

\author{ F. Fagnola  
\thanks{Dipartimento di Matematica, Politecnico di Milano,
Piazza Leonardo Da Vinci 32, I-20133, Milano, Italy.  e-mail: franco.fagnola@polimi.it } 
\and 
C.M. Mora 
\thanks{Departamento de Ingenier\'{\i}a Matem\'{a}tica, Universidad de Concepci\'on,
Barrio Universitario, Avenida   Esteban Iturra s/n, 4089100 , Casilla 160-C, Concepci\'on, Chile.
e-mail: cmora@ing-mat.udec.cl}
\thanks{Partially supported by the Universidad de Concepci\'on project VRID-Enlace 218.013.043-1.0.}
}

\date{ }

\maketitle

\abstract{
 We study the non-linear quantum master equation describing a laser under the mean field approximation. 
The quantum system is formed by a single mode optical cavity and two level atoms,
which interact with reservoirs.
Namely, we establish the existence and uniqueness of the regular solution to the non-linear operator
equation under consideration, as well as we get a probabilistic representation for this
solution in terms of a mean field stochastic Schr\"ondiger equation.
To this end,
we find a regular solution for the non-autonomous linear quantum master equation 
in Gorini-Kossakowski-Sudarshan-Lindblad form,
and 
we prove the uniqueness of the solution to the non-autonomous linear adjoint quantum master equation in Gorini-Kossakowski-Sudarshan-Lindblad form.
Moreover,
we obtain rigorously the Maxwell-Bloch equations from the mean field laser equation.
\newline \newline
\noindent \textbf{ Keywords}: 
Open quantum system, nonlinear quantum master equation, Maxwell-Bloch equations, 
quantum master equation in the Gorini-Kossakowski-Sudarshan-Lindblad form,
existence and uniqueness, regular solution, Ehrenfest-type theorem, 
stochastic Schr\"odinger equation.
}


\section{Introduction}
 

This paper provides the mathematical foundation for the nonlinear laser equation
 \begin{eqnarray}
 \label{eq:Laser1}
 \frac{d }{dt} \rho_t
 \hspace{-7pt} & =  \hspace{-7pt}&
 -\mathrm{i} \frac{\omega}{2} \left[ 2 \, a^\dagger a + \sigma^{3} ,  \rho_t \right] 
 \\
 \nonumber
 & &
+  g \left[ 
\Bigl(  \Tr \left( \sigma^{-}  \rho_t \right) a^{\dagger}  -  \Tr \left( \sigma^{+}  \rho_t  \right) a \Bigr)
+
\Bigl(  \Tr\left( a^{\dagger}  \rho_t  \right) \sigma^{-}  -  \Tr\left( a \,  \rho_t  \right) \sigma^{+} \Bigr)         
 ,  \rho_t \right] 
\\
\nonumber
&&
+  \kappa_- \left( \sigma^{-}  \rho_t \,\sigma^{+} 
-\frac{1}{2} \sigma^{+} \sigma^{-}  \rho_t 
-\frac{1}{2}  \rho_t \,\sigma^+\sigma^{-}\right) 
\\
\nonumber
& &
+  \kappa_+ \left( \sigma^+  \rho_t \,\sigma^{-} 
-\frac{1}{2} \sigma^{-}\sigma^+  \rho_t
-\frac{1}{2} \ \rho_t \,\sigma^{-}\sigma^+\right)
\\ 
\nonumber
& &
+  2 \kappa \left( a\,  \rho_t a^\dagger - \frac{1}{2} 
a^\dagger a  \rho_t - \frac{1}{2}  \rho_t a^\dagger a\right) ,
\end{eqnarray}
where 
$\omega \in \mathbb{R}$, $g$ is a non-zero real number,   
$\kappa,\kappa_+,\kappa_- > 0$
and
$\rho_t$ is an unknown non-negative trace-class operator on $\ell^2 \left(\mathbb{Z}_+ \right) \otimes \mathbb{C}^2 $.
As usual, 
$ \left[  \cdot, \cdot \right] $ stands for the commutator of two operators,
$$
\sigma^{+} =
\left( \begin{array}{cc}
 0 & 1 \\ 0 & 0
\end{array} \right) ,
\quad
\sigma^{-} =
\left( \begin{array}{cc}
 0 & 0 \\ 1 & 0
\end{array} \right) ,
\quad
\sigma^{3} =
\left( \begin{array}{cc}
 1 & 0 \\ 0 & -1
\end{array} \right) ,
$$
and $a$, $a^{\dagger}$
are the closed operators  on $\ell^2 \left(\mathbb{Z}_+ \right)$ given by
$$
ae_{n} 
=
\left\{
\begin{array}{ll}  
\sqrt{n} \, e_{n-1}  &  \text{ if }   n \in  \mathbb{N} 
 \\ 0  &  \text{ if } n = 0  
\end{array} \right.
$$
and 
$
a^{\dagger}e_{n} = \sqrt{ n+1} \, e_{n+1} 
$
for all 
$ n \in   \mathbb{Z}_{+}$.
Here and subsequently,
$(e_n)_{n\ge 0}$ denotes the canonical orthonormal basis of $\ell^2(\mathbb{Z_+})$.

Under the mean field approximation,
(\ref{eq:Laser1}) describes 
the dynamics of a laser consisting of a radiation field coupled to a set of identical non-interacting two-level systems 
(see, e.g., Section 3.7.3 of \cite{BreuerPetruccione} 
and  
\cite{HeppLieb1974,MerkliBerman2012,Mori2013,Spohn1980} for more details on mean field quantum master equations).
The first term of the right-hand side of (\ref{eq:Laser1}) 
is determined by the free Hamiltonians of the field mode and the atoms,
the second term governs the atom-field interaction,
and the  last three terms,
i.e., the Gorini-Kossakowski-Sudarshan-Lindblad superoperators \cite{Gorini1976,Lindblad1976},
represent decay/pumping in the atoms and radiation losses.
We are interested in establishing rigorously the well-posedness of (\ref{eq:Laser1}), 
the equations of motion of the observables 
$a+ a^{\dagger}$, $\sigma^{-} + \sigma^{+}$ and $\sigma^{3}$,
and a probabilistic representation of $\rho_t$. 
This gives the mathematical basis to study, for instance, 
dynamical properties of (\ref{eq:Laser1}) and the numerical solution of (\ref{eq:Laser1}).

Our approach to the non-linear quantum master equation (\ref{eq:Laser1}) involves the study of  non-autonomous linear quantum master equations in the Gorini-Kossakowski-Sudarshan-Lindblad (GKSL) form \cite{AlickiLendi2007,BreuerPetruccione,Gorini1976,Lindblad1976}.
In the time-homogeneous setup, 
E. B. Davies and A. M. Chebotarev \cite{Chebotarev1991,Davies1977} constructed the minimal solution 
of GKSL linear master equations with unbounded coefficients (see, e.g., \cite{Chebotarev2000,Fagnola}).
Using semigroup methods, 
\cite{ChebGarQue98,ChebFagn1,ChebFagn2,Fagnola} prove that these equations have a  unique solution
under a quantum version of the Lyapunov condition for nonexplosion of classical Markov processes.
Applying probabilistic techniques,
one deduces that 
the GKSL quantum master equation preserves the regularity of the initial state (see, e.g., \cite{MoraAP} ),
and one also obtains the well-posedness of the GKSL adjoint quantum master equation with an initial condition given by an unbounded operator (see, e.g., \cite{MoraJFA}).
Using a limit procedure, 
one gets a conservative solution to a linear adjoint quantum master equation with time-dependent coefficients
(see, e.g, \cite{ChebGarQue97}).
In this article,
we address a class of time-local linear master equations,
which describes relevant physical systems (see, e.g., \cite{Breuer2016,BylickaChruscinskiManiscalco2014,ChruscinskiManiscalco2014,Hall2014,Schulte-Herbruggen2017}).
Namely, 
by extending some results given by \cite{MoraJFA,MoraAP},
we construct a regular solution for the non-autonomous linear quantum master equation 
\begin{equation}
\label{eq:3.50}
 \frac{d}{dt} \rho_{t}
 = 
 G \left( t \right) \rho_{t}
 +  \rho_{t}   G\left( t \right)^{\ast} 
 + \sum_{k=1}^{\infty}L_{k} \left( t \right) \rho_{t}  L_{k}\left( t \right)^{\ast}
 \hspace{1cm}
 t \geq 0  ,
\end{equation}
where 
$ \rho_{t}$ is a density operator in $\mathfrak{h}$, 
the initial datum $ \rho_{0}$ is regular,
and 
$G \left( t \right), L_{1} \left( t \right)$, $L_{2} \left( t \right), \ldots$ are linear operators in $\mathfrak{h}$
satisfying (on appropriate domain)
$$
G\left( t \right)=-  \mathrm{i} H\left( t \right)-\frac{1}{2} \sum_{\ell=1}^{\infty}L_{\ell}\left( t \right)^{\ast}L_{\ell} \left( t \right)
$$
with $ H\left( t \right) $ self-adjoint operator in $\mathfrak{h}$.
Furthermore,
we prove the uniqueness of the solution to the adjoint version of (\ref{eq:3.50}),
which models the evolution of the quantum observables in the Heisenberg picture.
This leads to prove the well-posedness of  the GKSL  quantum master equation  
resulting from replacing in (\ref{eq:Laser1})
the unknown values of
$\Tr\left( \sigma^{-}  \rho_t  \right) $ and $\Tr\left( a \,  \rho_t  \right)$
by known functions $\alpha \left( t \right) $ and $  \beta \left( t \right)$.

Our main objective is to develop the mathematical theory for the non-linear equation (\ref{eq:Laser1}).
First, 
we establish the existence and uniqueness of the regular solution to (\ref{eq:Laser1}).
In this direction,
Belavkin \cite{Belavkin1988,Belavkin1989} treated a general class of 
non-linear quantum master equations with bounded coefficients,
and
Kolokoltsov \cite{Kolokoltsov2010} obtained the well-posedness of nonlinear quantum dynamic semigroups
having 
non-linear Hamiltonians that are bounded perturbations of unbounded linear  self-adjoint operators,
together with 
non-linear bounded Gorini-Kossakowski-Sudarshan-Lindblad superoperators.
Arnold and Sparber \cite{ArnoldSparber2004} showed the existence and uniqueness of global solution to 
a non-linear quantum master equation involving Hartree potential by means of semigroup techniques.

Moreover, 
we deal with the equations of motion for the mean values of   
$a$, $ \sigma^{-}$ and  $ \sigma^{3} $.
It is well known that
the following  first-order differential equations is formally obtained from (\ref{eq:Laser1}):
 \begin{equation}
 \label{eq:laser2} 
\left\{ 
\begin{array}{lcl} 
  \frac{d}{dt} \Tr\left( a \, \rho_{t} \right) 
\hspace{-7pt} & = \hspace{-7pt}  &
 - \left( \kappa + \mathrm{i} \omega \right)  \Tr\left( a \, \rho_{t} \right) 
 + g \, \Tr\left( \sigma^{-} \rho_{t} \right)  
 \\
  \frac{d}{dt} \Tr\left( \sigma^{-} \rho_{t}  \right)
\hspace{-7pt} & = \hspace{-7pt}  &
 - \left( \gamma + \mathrm{i} \omega \right)   \Tr\left( \sigma^{-} \rho_{t} \right) 
 + g \ \Tr\left( a \, \rho_{t}  \right)  \Tr\left( \sigma^{3} \rho_{t}   \right) 
 \\
 \frac{d}{dt} \Tr\left(  \sigma^{3}  \rho_{t} \right)
\hspace{-7pt} & = \hspace{-7pt}  &
  - 4 g \ \Re \left(
  \Tr\left( a \, \rho_{t}  \right) \  \overline{ \Tr\left(  \sigma^{-}  \rho_{t}  \right) }
 \right)
 - 2 \gamma \left(  \Tr\left(  \sigma^{3} \rho_{t} \right) -d \right) 
\end{array}
\right.  ,
\end{equation}
where 
$ \geq 0$,
$\gamma = \left( \kappa_+ + \kappa_- \right)/2$
and
$
d = \left( \kappa_+ - \kappa_- \right) / \left( \kappa_+ + \kappa_- \right) 
$
(see, e.g., \cite{BreuerPetruccione}).
In the semiclassical laser theory,
the Maxwell-Bloch equations (\ref{eq:laser2}) describe the evolution of 
the field (i.e., $\Tr\left( a \, \rho_{t} \right)$), 
the polarization (i.e., $\Tr\left( \sigma^{-} \rho_{t}  \right)$) 
and the population inversion  (i.e., $ \Tr\left(  \sigma^{3}  \rho_{t} \right)$) 
of ring lasers like far-infrared $NH_3$ lasers  (see, e.g., \cite{Haken1985,Ohtsubo2013,TartwijkAgrawal1998}).
The  system (\ref{eq:laser2})
has received much attention in the physical literature 
due to its important role in the description of laser dynamics (see, e.g., \cite{BreuerPetruccione,Fowler1982,NingHaken1990}).
In this paper,  we prove rigorously the validity of (\ref{eq:laser2}) 
whenever the initial state is regular enough,
and thus we get an Ehrenfest theorem for (\ref{eq:Laser1})
(see, e.g., \cite{FagMora2013,Friesecke2009,Friesecke2010}).

Finally,
we obtain a probabilistic representation of (\ref{eq:Laser1}).
The solution of the linear quantum master equations in GKSL form
is characterized as the mean value of random pure states given by 
the linear and non-linear stochastic Schr\"odinger equations 
(see, e.g., \cite{Barchielli,BarchielliHolevo1995,BreuerPetruccione,MoraAP,WisemanMilburn2009}).
This representation plays an important tool in the numerical simulation of open quantum systems 
(see, e.g., \cite{BreuerPetruccione,MoraAAP2005,MoraFernBiscay2018,Percival,Schack1995}), 
and it has also been used for proving theoretical properties of  the GKSL quantum master equations  
(see, e.g., \cite{FagnolaMora2015,MoraJFA,MoraAP}).
In this paper,
we get  a probabilistic representation of (\ref{eq:Laser1})
in terms of a mean field version of the linear stochastic Schr\"odinger equation.
To the best of our knowledge this is the first rigorously established result, at the level of infinite dimensional density matrices, with an unbounded nonlinear evolution operator, in the study of nonlinear mean field laser  evolution equations

This paper is organized as follows.
Section \ref{sec:QuantumBifurcation} presents the main results.
Section \ref{sec:LinearQMEs} is devoted to general linear master equations.
In Section \ref{sec:AuxiliaryEquations}  we study 
a linear quantum master equation associated with (\ref{eq:Laser1}),
moreover, for the sake of completeness, we recall the basic properties of the complex Lorenz equations.
All proofs are deferred to Section \ref{sec:Proofs}.

\subsection{Notation}
\label{subsec:not}

In this paper,  
$\left(\mathfrak{h},\left\langle \cdot,\cdot\right\rangle \right) $ is a separable complex Hilbert space,
where the scalar product 
$\left\langle \cdot,\cdot \right\rangle $ is linear in the second variable and anti-linear in the first one. 
The standard basis of $ \mathbb{C}^2 $ is denoted by 
$
e_+ =
\begin{pmatrix}
 1 \\ 0
\end{pmatrix}
$
and 
$
e_- =
\begin{pmatrix}
 0 \\ 1
\end{pmatrix} 
$.
If $A,B$  are linear operators   in  $\mathfrak{h}$,
then 
$ \left[ A,  B \right] = AB - BA$
and 
$\mathcal{D}\left(A\right)$ stands for the domain of $A$.
We take $N= a^\dagger a $.
In case $\mathfrak{X}$, $\mathfrak{Z}$ are normed spaces,
we denote by $\mathfrak{L}\left( \mathfrak{X},\mathfrak{Z}\right) $ the set of all bounded operators from $\mathfrak{X}$ to $\mathfrak{Z}$
and we choose $\mathfrak{L}\left( \mathfrak{X}\right) = \mathfrak{L}\left( \mathfrak{X},\mathfrak{X}\right) $. 
We write $\mathfrak{L}_{1}\left( \mathfrak{h}\right)$ for the set of all trace-class operators on $\mathfrak{h}$ 
equipped with the trace norm.
For simplicity of notation,
generic no-negative constants are denoted by $K$, as well as 
$K \left( \cdot \right)$ stands for different  non-decreasing 
non-negative functions on $\left[ 0, \infty \right[$.

Let $C$ be a self-adjoint positive operator in $\mathfrak{h}$. 
Then,
$\pi _{C}:\mathfrak{h\rightarrow h}$  is defined by 
$\pi_C(x)=x$ if $x\in \mathcal{D}\left( C\right) $ and 
$\pi_C(x)=0$ if $x\notin \mathcal{D}\left( C\right) $,
as well as 
$ \left\Vert x\right\Vert _{C}=
\sqrt{\left\langle x,x\right\rangle _{C}}$
with 
$\left\langle x,y\right\rangle_{C}=\left\langle x,y\right\rangle +\left\langle Cx,Cy\right\rangle $
for  any $x,y\in \mathcal{D}\left( C\right) $.
We write $L^{2}\left( \mathbb{P},\mathfrak{h}\right) $ for the set of all square integrable random variables from $\left( \Omega ,\mathfrak{F},\mathbb{P}\right)$ to $ \left( \mathfrak{h},\mathfrak{B}\left( \mathfrak{h}\right) \right)$,
where  $ \mathcal{B} \left( \mathfrak{Y} \right)$ is the collection of all Borel set of the topological space $ \mathfrak{Y}$.
Finally,
$L_{C}^{2}\left( \mathbb{P},\mathfrak{h}\right) $  denotes  the set of all $\xi \in L^{2}\left( \mathbb{P},\mathfrak{h}\right) $ 
 satisfying $\xi \in \mathcal{D}\left( C\right) $ a.s. and $\mathbb{E} \left( \left\Vert \xi \right\Vert _{C}^{2} \right) <\infty $. 


\section{Basic properties of the mean field laser equation}
\label{sec:QuantumBifurcation}


This section presents the main results of the paper,
which are summarized in Theorem \ref{th:EyU-LaserE} given below.
We start by adapting the notion of regular weak  solution
---of a linear quantum master equation (see, e.g., \cite{MoraAP} and Definition \ref{def:RegularSolQME} given below)---
to the mean field laser equation (\ref{eq:Laser1}).
To this end,
we recall that a density operator $\varrho$ is $C$-regular if, roughly speaking, 
$C \varrho \, C$ is a trace-class operator,
where $C$ is a suitable reference operator (see, e.g., \cite{ChebGarQue98,MoraAP}).

\begin{definition}
\label{def2}
Suppose that $C$ is a self-adjoint positive operator in $\mathfrak{h}$. 
An operator $\varrho \in \mathfrak{L}_1\left( \mathfrak{h} \right)$
is called density operator iff  $\varrho$ is a non-negative operator with unit trace.
The non-negative operator  $\varrho \in \mathfrak{L}\left( \mathfrak{h} \right)$ is said to be $C$-regular 
iff 
$
 \varrho=\sum_{n\in\mathfrak{I}}\lambda_{n}\left\vert u_{n}\rangle\langle u_{n}\right\vert
$
for some countable set $\mathfrak{I}$, summable non-negative real numbers $\left( \lambda_{n}\right) _{n\in\mathfrak{I}}$ and collection $\left( u_{n}\right) _{n\in\mathfrak{I}}$ of elements of $\mathcal{D}\left( C\right) $, which together satisfy:
$
 \sum_{n\in \mathfrak{I}}\lambda_{n}\left\Vert Cu_{n}\right\Vert ^{2}<\infty
 $.
 Let $\mathfrak{L}_{1,C}^{+} \left( \mathfrak{h}\right) $ denote the set of all $C$-regular density operators in $\mathfrak{h}$. 
 \end{definition}

 \begin{definition}
Let $C$ be a self-adjoint positive operator in $\mathfrak{h}$.
A family $\left( \rho_t \right)_{t \geq 0}$ of operators belonging to $\mathfrak{L}_{1,C}^{+} \left( \mathfrak{h}\right) $
is called
$C$-weak solution to (\ref{eq:Laser1}) iff 
 the function $t \mapsto \Tr\left( a \rho_{t}  \right) $ is continuous
 and for all $t\geq0$ we have
 $$
 \frac{d}{dt}\Tr\left( A \rho_{t}  \right) 
 = 
\Tr\left( A  \mathcal{L}_{\star} \left( \rho_t \right)  \rho_t  \right) 
\qquad 
\forall A\in\mathfrak{L}\left( \mathfrak{h}\right) ,
$$
where 
\begin{eqnarray*}
 \mathcal{L}_{\star} \left( \widetilde{\varrho}   \right)  \varrho
  \hspace{-7pt} & =  \hspace{-7pt}& 
  -\frac{\mathrm{i} \omega}{2}
\left[ 2 a^\dagger a +\sigma^{3}, \varrho \right]   
+  2 \kappa \left( a\, \varrho a^\dagger - \frac{1}{2} 
a^\dagger a \varrho - \frac{1}{2}\varrho a^\dagger a\right) 
\\ 
& &
+   \kappa_-  \left( \sigma^{-} \varrho \,\sigma^+ 
-\frac{1}{2} \sigma^+\sigma^{-} \varrho
-\frac{1}{2} \varrho \,\sigma^+\sigma^{-}\right) 
\\ 
& &
+   \kappa_+  \left( \sigma^+ \varrho \,\sigma^{-} 
-\frac{1}{2} \sigma^{-}\sigma^+ \varrho
-\frac{1}{2} \varrho \,\sigma^{-}\sigma^+\right) 
\\  
& &
 +  g \left[  \Tr\left( \sigma^{-}  \widetilde{\varrho}  \right) a^\dagger
                   -  \Tr\left( \sigma^{+}  \widetilde{\varrho}  \right) a, \varrho \right] 
 +  g \left[ \Tr\left( a^\dagger  \widetilde{\varrho}  \right) \sigma^{-}
                   -  \Tr\left( a \,  \widetilde{\varrho}  \right) \sigma^+, \varrho \right] .
\end{eqnarray*}
\end{definition}

Similar to the linear case,
(\ref{eq:Laser1}) is strongly related with the following non-linear stochastic evolution equation on 
$\ell^2 \left(\mathbb{Z}_+ \right) \otimes \mathbb{C}^2$:
\begin{eqnarray}
\label{eq:SSENonlinear}
 Z_{t}\left( \xi \right) 
& = &
\xi 
+ \int_{0}^{t} \left( -  \mathrm{i} H \left( t , Z_{t} \left( \xi \right) \right) - \frac{1}{2} \sum_{\ell = 1}^3 L_{\ell}^* L_{\ell} \right) 
Z_{s}\left( \xi \right) ds 
\\
\nonumber
& &
+ \sum_{\ell=1}^{ 3 }\int_{0}^{t}
L_\ell \, Z_{s}\left( \xi \right) dW_{s}^{\ell} ,
\end{eqnarray}
where 
$L_1  =  \sqrt{2 \kappa} \, a $, 
$L_2 =  \sqrt{ \gamma \left(1-d \right) } \, \sigma^{-}$,
$L_3 =  \sqrt{ \gamma \left(1 +d \right) } \, \sigma^{+} $,
\begin{eqnarray}
\label{eq:HZt}
H \left( t ,  Z_t \left( \xi \right)\right) 
& = &
\frac{\omega}{2} \left( 2 a^\dagger a  +\sigma^{3} \right) 
\\ 
\nonumber
& &
+ \mathrm{i}  g \left(
\mathbb{E} \left\langle Z_{t} \left( \xi \right), \sigma^{-} Z_{t} \left( \xi \right) \right\rangle a^\dagger 
-  
\mathbb{E} \left\langle Z_{t} \left( \xi \right), \sigma^{+} Z_{t} \left( \xi \right) \right\rangle a 
\right)
\\
\nonumber
& &
+ \mathrm{i}  g 
\left( \mathbb{E} \left\langle Z_{t} \left( \xi \right), a^\dagger Z_{t} \left( \xi \right) \right\rangle  \sigma^{-}  
- 
\mathbb{E} \left\langle Z_{t} \left( \xi \right), a \, Z_{t} \left( \xi \right) \right\rangle \sigma^+ \right) ,
\end{eqnarray}
and
$W^1, W^2, W^3$ are real valued independent Wiener processes on a filtered complete probability 
space $\left( \Omega ,\mathfrak{F}, \left(\mathfrak{F}_{t}\right) _{t\geq 0},\mathbb{P}\right) $.
Next, we tailor the concept of regular weak solution
---used in the framework of stochastic Sch\"odinger equations 
(see, e.g., \cite{FagMora2013,MoraReIDAQP,MoraReAAP} and Definition \ref{def:regular-sol} given below)---
to suit (\ref{eq:SSENonlinear}).

\begin{definition}
Let $p \in \mathbb{N}$.
An $\ell^2 \left(\mathbb{Z}_+ \right) \otimes \mathbb{C}^2$-valued adapted process   
with continuous sample paths $\left( Z_{t}\left( \xi \right) \right) _{t \in \mathbb{I}}$
is called strong $N^p$-solution of (\ref{eq:SSENonlinear})
if:
\begin{itemize}
 \item    For all $t \geq 0$:
 $ 
\mathbb{E} \left\Vert Z_{t}\left( \xi \right)  \right\Vert ^{2}
\leq
K \left( t \right)  \mathbb{E}\left\Vert \xi \right\Vert ^{2}
$,
$
Z_{t}\left( \xi \right) \in \mathcal{D}\left( N^p \right) $ a.s.,
and
$$
\sup_{s\in \left[ 0,t\right] }\mathbb{E} \left\Vert N^p X_{s}\left( \xi \right)  \right\Vert ^{2} < \infty .
$$

\item The functions 
$
t \mapsto \mathbb{E} \left\langle Z_{t} \left( \xi \right), \sigma^{-} Z_{t} \left( \xi \right) \right\rangle
$
and 
$
t \mapsto \mathbb{E} \left\langle Z_{t} \left( \xi \right), a \, Z_{t} \left( \xi \right) \right\rangle
$
are continuous.
 
\item a.s. for all $t \geq 0$:
\begin{eqnarray*}
Z_{t}\left( \xi \right) 
& = &
\xi 
+ \int_{0}^{t} \left( -  \mathrm{i} H \left( t \right) - \frac{1}{2} \sum_{\ell = 1}^3 L_{\ell}^* L_{\ell} \right) 
\pi _{N^p}\left( Z_{s}\left( \xi \right) \right) ds 
\\
& &
+ \sum_{\ell=1}^{ 3 }\int_{0}^{t}
L_\ell \,\pi _{N^p}\left( Z_{s}\left( \xi \right) \right)  dW_{s}^{\ell} 
\end{eqnarray*}
with $H \left( t ,  Z_t \left( \xi \right)\right)$ described by (\ref{eq:HZt}), and 
$L_{\ell}$, $W^{\ell}$ as in (\ref{eq:SSENonlinear}).
\end{itemize}
\end{definition}

Now, 
we establish 
the existence and uniqueness of the regular solution to (\ref{eq:Laser1}),
a Ehrenfest-type theorem describing the evolution  of the mean values of the observables 
$a+ a^{\dagger}$, $\sigma^{-} + \sigma^{+}$ and $\sigma^{3}$,
and the probabilistic representation of (\ref{eq:Laser1}).

\begin{theorem}
 \label{th:EyU-LaserE}
Suppose that  
$\varrho \in \mathfrak{L}_{1,N^p}^{+} \left( \ell^2 \left(\mathbb{Z}_+ \right) \otimes \mathbb{C}^2 \right) $, with  $p \in \mathbb{N}$.
Then,
there exists a unique $N^p$-weak solution  $\left( \rho_t  \right)_{t \geq 0}$  to (\ref{eq:Laser1}) 
with initial datum $ \varrho$.
Moreover, the Maxwell-Bloch equations (\ref{eq:laser2}) hold,  and 
\begin{equation}
\label{eq:RepProbNolinear}
 \rho_t = \mathbb{E} \left| Z_{t} \left( \xi \right)\right\rangle \left\langle Z_{t} \left( \xi \right)\right| 
\hspace{2cm}
\forall t \geq 0 , 
\end{equation}
where  $\xi \in L_{N^p}^{2}\left( \mathbb{P},\mathfrak{h}\right) $ satisfies
$
\varrho = \mathbb{E} \left| \xi \right\rangle \left\langle \xi \right| 
$, 
and 
$Z_{t} \left( \xi \right) \in  \ell^2 \left(\mathbb{Z}_+ \right) \otimes \mathbb{C}^2 $ is the strong $N^p$-solution of (\ref{eq:SSENonlinear}).
\end{theorem}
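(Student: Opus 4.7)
The plan is to parametrize the nonlinearity in (\ref{eq:Laser1}) by the two scalar functions $\alpha(t)=\Tr(\sigma^-\rho_t)$ and $\beta(t)=\Tr(a\,\rho_t)$ and recast the problem as a fixed point for the map sending a candidate pair $(\alpha,\beta)$ to the corresponding pair of traces computed from the frozen non-autonomous linear GKSL equation
\begin{equation*}
\frac{d}{dt}\rho^{\alpha,\beta}_t = \mathcal{L}^{\alpha,\beta}_\star(\rho^{\alpha,\beta}_t), \qquad \rho^{\alpha,\beta}_0=\varrho,
\end{equation*}
obtained from $\mathcal{L}_\star$ by substituting $\alpha(t),\beta(t)$ in place of $\Tr(\sigma^-\widetilde\varrho)$ and $\Tr(a\,\widetilde\varrho)$. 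This frozen equation fits (\ref{eq:3.50}) with $L_1=\sqrt{2\kappa}\,a$, $L_2=\sqrt{\gamma(1-d)}\,\sigma^-$, $L_3=\sqrt{\gamma(1+d)}\,\sigma^+$ and time-dependent Hamiltonian $H(t)=\frac{\omega}{2}(2a^\dagger a+\sigma^3)+\mathrm{i} g(\alpha(t)a^\dagger-\overline{\alpha(t)}a)+\mathrm{i} g(\overline{\beta(t)}\sigma^--\beta(t)\sigma^+)$.

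First I would invoke the existence/uniqueness and regularity results for $N^p$-weak solutions of (\ref{eq:3.50}) developed in Sections \ref{sec:LinearQMEs}--\ref{sec:AuxiliaryEquations}: for continuous $(\alpha,\beta)$ and $\varrho\in\mathfrak{L}^+_{1,N^p}(\ell^2(\mathbb{Z}_+)\otimes\mathbb{C}^2)$ they provide a unique $N^p$-weak solution $\rho^{\alpha,\beta}_t$ together with its probabilistic representation $\rho^{\alpha,\beta}_t=\mathbb{E}|Y^{\alpha,\beta}_t(\xi)\rangle\langle Y^{\alpha,\beta}_t(\xi)|$, where $Y^{\alpha,\beta}_t$ is the strong $N^p$-solution of the associated linear SSE driven by $W^1,W^2,W^3$ with initial datum any $\xi\in L^2_{N^p}(\mathbb{P},\mathfrak{h})$ satisfying $\varrho=\mathbb{E}|\xi\rangle\langle\xi|$ (obtained from the spectral decomposition of $\varrho$). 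The moment bound $\sup_{s\le T}\mathbb{E}\|N^p Y^{\alpha,\beta}_s\|^2\le K(T)\mathbb{E}\|N^p\xi\|^2$ supplied by that theory, combined with $\|a\psi\|\le\|(N+I)^{1/2}\psi\|$, makes the map $\Phi\colon(\alpha,\beta)\mapsto(\Tr(\sigma^-\rho^{\alpha,\beta}_\cdot),\Tr(a\,\rho^{\alpha,\beta}_\cdot))$ well defined and uniformly bounded on closed balls of $C([0,T];\mathbb{C}^2)$. To obtain a strict contraction on $C([0,T_0];\mathbb{C}^2)$ for $T_0$ small, I would apply It\^o's formula to $\|Y^{\alpha,\beta}_t(\xi)-Y^{\alpha',\beta'}_t(\xi)\|^2$; after cancellation of the dissipative part the remainder is dominated by $\int_0^t(|\alpha-\alpha'|^2+|\beta-\beta'|^2)\,\mathbb{E}\|(N^{1/2}+I)Y^{\alpha,\beta}_s\|^2\,ds$ plus a Gronwall-absorbed term, which after Cauchy--Schwarz translates into contractivity of $\Phi$ in the sup norm. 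Global extension follows from a priori $N^p$-moment bounds that are independent of the freezing parameters.

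The fixed point $(\alpha_*,\beta_*)$ then yields $\rho_t:=\rho^{\alpha_*,\beta_*}_t$ as the unique $N^p$-weak solution of (\ref{eq:Laser1}), and by construction $Z_t(\xi):=Y^{\alpha_*,\beta_*}_t(\xi)$ solves (\ref{eq:SSENonlinear}) because $\alpha_*(t)=\mathbb{E}\langle Z_t(\xi),\sigma^- Z_t(\xi)\rangle$ and $\beta_*(t)=\mathbb{E}\langle Z_t(\xi),a\,Z_t(\xi)\rangle$, establishing (\ref{eq:RepProbNolinear}). For the Maxwell--Bloch system (\ref{eq:laser2}) I would substitute $A=aP_n$, $A=\sigma^- P_n$, $A=\sigma^3 P_n$ with $P_n$ the projection onto $\mathrm{span}(e_0,\dots,e_n)\otimes\mathbb{C}^2$ in the weak equation of Definition \ref{def2}, compute the resulting commutators and GKSL terms explicitly, and pass $n\to\infty$ using dominated convergence, which is justified by the uniform $N^p$-regularity of $\rho_t$ since $p\ge 1$ controls the necessary second moments of $a,a^\dagger$.

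The main obstacle is closing the contraction step in the presence of the unbounded operators $a,a^\dagger$ multiplying the mean fields in $H(t,Z_t)$: propagating a perturbation of $(\alpha,\beta)$ through the linear flow requires uniform control of $\mathbb{E}\|N^{1/2}Y^{\alpha,\beta}_t\|^2$ on a neighborhood of the prospective fixed point, and this is precisely where the refined non-autonomous estimates from Sections \ref{sec:LinearQMEs}--\ref{sec:AuxiliaryEquations} and the assumption $\varrho\in\mathfrak{L}^+_{1,N^p}$ are indispensable; equally, the uniqueness part demands identifying $\alpha,\beta$ for any putative second solution and applying a Gronwall argument at the level of the SSE rather than the master equation itself.
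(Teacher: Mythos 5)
Your fixed-point scheme is a legitimate strategy, but it is genuinely different from the paper's, and the difference is instructive. The paper never iterates on $(\alpha,\beta)$: it observes that the three mean fields $\Tr(a\rho_t)$, $\Tr(\sigma^{-}\rho_t)$, $\Tr(\sigma^{3}\rho_t)$ satisfy the \emph{closed} finite-dimensional complex Lorenz system (\ref{eq:Lorenz}), solves that ODE first --- globally, with the explicit Lyapunov estimates (\ref{eq:L5}) and (\ref{eq:L4}) of Theorem \ref{th:LorenzEquations-Laser} --- then feeds $\alpha(t)=gS(t)$, $\beta(t)=gA(t)$ into the linear theory of Theorem \ref{th:EyU-Lineal}; the Ehrenfest identities of Theorem \ref{th:Ehrenfest-Lineal} together with ODE uniqueness show that the resulting $\rho_t$ reproduces $(A,S,D)$ as its own mean fields, hence solves the nonlinear equation, and uniqueness of $N^p$-weak solutions reduces in the same way to ODE uniqueness plus linear uniqueness. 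This buys global existence for free and dispenses with your contraction estimate entirely, at the price of being tied to the special algebraic closure of this model; your route would survive in situations where the observables do not close. Two points in your sketch need repair before it is complete. First, the claimed global extension from ``a priori $N^p$-moment bounds independent of the freezing parameters'' is circular as stated: the bound (\ref{eq:3.3}) on $\mathbb{E}\Vert Y^{\alpha,\beta}_t\Vert^2_{N^p}$ has a constant depending on $\sup\vert\alpha\vert$ and $\sup\vert\beta\vert$ (through Condition H4.3), so to continue past $T_0$ you need a uniform bound on the mean fields along the prospective solution --- which in this model is exactly what the Lorenz estimates supply, so you end up invoking the paper's key observation anyway. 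Second, for uniqueness of an \emph{arbitrary} $N^p$-weak solution $\widetilde{\rho}$ of the master equation, a Gronwall argument ``at the level of the SSE'' cannot start until you have attached an SSE to $\widetilde{\rho}$; you must first note that $\widetilde{\rho}$ is a weak solution of the linear equation (\ref{eq:8.4}) with its own continuous mean fields as coefficients and apply Theorem \ref{th:EyU-Lineal} to represent it, after which your Gronwall closes using the $N^{1/2}$-moment bounds. With these two repairs your argument would go through.
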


\begin{proof}
 Deferred to Section  \ref{sec:Proof:EyU-LaserE}.
\end{proof}

\begin{remark}
 If $g^2 d  <   \kappa \gamma $, then
$\left( 0, 0, d \right)$ is an asymptotically stable equilibrium point of (\ref{eq:laser2}).
In fact, 
from  (\ref{eq:L5}) and (\ref{eq:L4}), given below, it follows  that 
$\Tr\left( a \, \rho_{t} \right)$, $\Tr\left( \sigma^{-} \rho_{t}  \right) $ and $\Tr\left(  \sigma^{3}  \rho_{t} \right) - d $
converge exponentially fast to $0$ as $t$ goes to $+ \infty$.
\end{remark}

\section{General linear quantum master equations}
\label{sec:LinearQMEs}

\subsection{Regular solution for the GKSL quantum master equation}

This subsection provides a regular solution for the linear quantum master equation (\ref{eq:3.50}). 
By generalizing  \cite{MoraAP} to the  non-autonomous framework,
we will describe a solution of (\ref{eq:3.50}) with the help of 
the linear stochastic evolution equation in $\mathfrak{h}$:
\begin{equation}
\label {eq:SSE}
X_{t}\left( \xi \right) 
= \xi +\int_{0}^{t}G \left( s \right) X_{s}\left( \xi \right) ds 
+ \sum_{\ell=1}^{\infty }\int_{0}^{t}
L_\ell \left( s \right) X_{s}\left( \xi \right) dW_{s}^{\ell} ,
\end{equation}
where
$W^1, W^2, \ldots$ are real valued independent Wiener processes on a filtered complete probability 
space $\left( \Omega ,\mathfrak{F}, \left(\mathfrak{F}_{t}\right) _{t\geq 0},\mathbb{P}\right) $.

Suppose that the density operator $\rho_0$ is $C$-regular.
According to Theorem 3.1 of \cite{MoraAP} we have
$
\rho_0 = \mathbb{E}\left\vert \xi \rangle \langle \xi \right\vert 
$
for certain $\xi \in L_{C}^{2}\left( \mathbb{P},\mathfrak{h}\right) $.
We set 
\begin{equation}
 \label{eq:def-rho-t}
  \rho_t := \mathbb{E} \left|X_{t} \left( \xi \right)\right\rangle \left\langle X_{t} \left( \xi \right)\right| ,
\end{equation}
where we use Dirac notation,
$X_{t} \left( \xi \right)$ is the unique strong $C$-solution of (\ref{eq:SSE})
(see Definition \ref{def:regular-sol}),
and the mathematical expectation 
can be interpreted as a Bochner integral in both $\mathfrak{L}_{1}\left( \mathfrak{h}\right)$ and $\mathfrak{L}\left( \mathfrak{h}\right)$.
Then, 
$\rho_t $ is a $C$-regular density operator (see  \cite{MoraAP} for details).

\begin{condition}
\label{hyp:L-G-C-domain}
There exists a self-adjoint positive operator $C$ in $\mathfrak{h}$
such that 
 $\mathcal{D}\left(C \right) \subset \mathcal{D}\left( G  \left( t \right) \right)$ 
 and 
 $\mathcal{D}\left(C \right) \subset \mathcal{D}\left( L_{\ell} \left( t \right) \right)$
 for all $t \geq 0$,
 and
 $G \left( \cdot \right) \circ \pi _{C}$ and $L_{\ell}  \left( \cdot \right)  \circ \pi _{C}$
are measurable as  functions from 
 $\left( \left[ 0 , \infty \right[ \times \mathfrak{h}, 
\mathcal{B}\left(  \left[ 0 , 
\infty \right[ \times \mathfrak{h} \right) \right) $ to 
$\left(\mathfrak{h}, \mathcal{B} \left( \mathfrak{h} \right)\right)$.
\end{condition}

\begin{definition}
 \label{def:regular-sol}
Assume Hypothesis \ref{hyp:L-G-C-domain}. 
Let $\mathbb{I}$ be
either $\left[ 0,\infty \right[ $ or $\left[ 0,T\right] $, 
with $T\in \mathbb{R}_{+}$. 
By strong $C$-solution of (\ref{eq:SSE}) with initial condition $\xi$, on the interval $\mathbb{I}$, 
we mean an $\mathfrak{h}$-valued adapted process $\left( X_{t}\left( \xi \right) \right) _{t \in \mathbb{I}}$  
with continuous sample paths such that for all $t\in \mathbb{I}$:
$ 
\mathbb{E} \left\Vert X_{t}\left( \xi \right)  \right\Vert ^{2}
\leq
K \left( t \right)  \mathbb{E}\left\Vert \xi \right\Vert ^{2}
$,
$X_{t}\left( \xi \right) \in \mathcal{D}\left( C\right) $ a.s., 
$
\sup_{s\in \left[ 0,t\right] }\mathbb{E} \left\Vert C X_{s}\left( \xi \right)  \right\Vert ^{2} < \infty 
$,
and 
$$
X_{t}\left( \xi \right) 
=\xi 
+\int_{0}^{t}G \left( s \right) \pi _{C}\left( X_{s}\left( \xi \right) \right) ds
+\sum_{\ell=1}^{\infty }
\int_{0}^{t}L_\ell \left( s \right) \pi _{C}\left( X_{s}\left( \xi \right) \right) dW_{s}^\ell
\hspace{0.3cm} a.s.
$$
\end{definition}

The following theorem,
which extends Theorem 4.4 of \cite{MoraAP} to the non-autonomous context,
asserts that $\rho_t$ given by (\ref{eq:def-rho-t})
is a regular solution to (\ref{eq:3.50}). 


\begin{definition} 
\label{def:RegularSolQME}
Let $C$ be a self-adjoint positive operator in $\mathfrak{h}$.
 A family $\left( \rho_t \right)_{t \geq 0}$ of $C$-regular  density operators is called
 $C$-weak solution to (\ref{eq:3.50}) 
 if and only if 
 \begin{equation}
 \label{3.12}
\frac{d}{dt}\Tr\left( A\rho_{t} \right) 
=
\Tr\left(
A\left( G \left( t \right) \rho_{t}\ +\rho_{t} G  \left( t \right)  ^{\ast}
+\sum_{\ell=1}^{\infty} L_{\ell}  \left( t \right)  \rho_{t} L_{\ell}  \left( t \right)  ^{\ast }\right) 
\right)
\end{equation}
 for all $A\in\mathfrak{L}\left( \mathfrak{h}\right) $ and $t\geq0$.
\end{definition}

\begin{condition}
\label{hyp:formal-conservativity} 
Suppose that $C$ satisfies Hypothesis \ref{hyp:L-G-C-domain}, together with:
\begin{itemize}
\item[(H2.1)] For any $t \geq 0$ and 
$x \in \mathcal{D}\left( C\right)$,
$\left\Vert  G  \left( t \right) x  \right\Vert^{2} \leq K \left( t \right) \left\Vert  x  \right\Vert_{C}^{2}$ .

\item[(H2.2)]  
For any  $t \geq 0$ and $x\in \mathcal{D}\left( C\right) $,
$
 2\Re\left\langle x, G \left( t \right) x\right\rangle 
+ \sum_{\ell=1}^{\infty }\left\Vert L_\ell \left( t \right) x\right\Vert ^{2} = 0.
$

\item[(H2.3)] 
For any initial datum $\xi \in L_{C}^{2}\left( \mathbb{P},\mathfrak{h}\right)$,
(\ref{eq:SSE}) has a unique strong $C$-solution on any bounded interval.

\item[(H2.4)] There exist functions
$
f_k : \left[ 0 , \infty \right[ \times \left[ 0 , \infty \right[  \rightarrow \left[ 0 , \infty \right[
$
such that:
(i) $f_k$ is bounded on bounded subintervals of $\left[ 0 , \infty \right[ \times  \left[ 0 , \infty \right[$;
(ii) 
$
\lim_{s \rightarrow t} f_k \left( s, t \right) = 0
$;
and 
(iii)
for all $s, t \geq 0$ and $x \in \mathcal{D}\left( C\right)$ we have
$
\left\Vert  G  \left( s \right) x - G  \left( t \right) x  \right\Vert^{2} \leq f_{0} \left( s, t \right) \left\Vert  x  \right\Vert_{C}^{2} 
$
and 
$
\left\Vert  L_{\ell}  \left( s \right) x - L_{\ell}  \left( t \right) x  \right\Vert^{2} \leq f_{\ell} \left( s, t \right) \left\Vert  x  \right\Vert_{C}^{2}
$.
\end{itemize}
\end{condition}

\begin{theorem}
\label{teor10}
Let Hypotheses \ref{hyp:L-G-C-domain} and \ref{hyp:formal-conservativity} hold.
Assume that $\varrho_0$ be $C$-regular, 
and that 
$G \left( t \right) , L_{1}  \left( t \right)$, $L_{2} \left( t \right)$, $\ldots $ are closable for all  $t \geq 0$.
Then $\rho_{t} $ given by (\ref{eq:def-rho-t}) is a $C$-weak solution to (\ref{eq:3.50}).
Moreover, for all $t \geq 0$ we have
\begin{equation}
\label{3.11}
\rho_{t} =\rho_0 + \int_{0}^{t}
\left( G \left( s \right) \rho_{s} +\rho_{s} G \left( s \right)^{\ast}
+\sum_{\ell=1}^{\infty } L_{\ell} \left( s \right) \rho_{s} L_{\ell}\left( s \right)^{\ast}\right) ds ,
\end{equation}
where 
we understand the above integral in the sense of Bochner integral in $\mathfrak{L}_{1}\left( \mathfrak{h}\right) $.
\end{theorem}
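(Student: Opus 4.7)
The plan is to extend the probabilistic proof of Theorem 4.4 of \cite{MoraAP} to the non-autonomous setting, using (\ref{eq:SSE}) as the stochastic unravelling of (\ref{eq:3.50}). First, Theorem 3.1 of \cite{MoraAP} yields $\xi \in L^{2}_{C}(\mathbb{P},\mathfrak{h})$ with $\varrho_{0}=\mathbb{E}|\xi\rangle\langle\xi|$, and Hypothesis (H2.3) produces the strong $C$-solution $X_{t}(\xi)$ of (\ref{eq:SSE}) so that $\rho_{t}$ in (\ref{eq:def-rho-t}) is well defined. The uniform bound $\sup_{s\le t}\mathbb{E}\|CX_{s}(\xi)\|^{2}<\infty$ from Definition \ref{def:regular-sol}, combined with a spectral decomposition of the non-negative compact operator $\mathbb{E}|X_{t}(\xi)\rangle\langle X_{t}(\xi)|$, shows that each $\rho_{t}$ belongs to $\mathfrak{L}^{+}_{1,C}(\mathfrak{h})$.

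Next, I would fix $A\in\mathfrak{L}(\mathfrak{h})$ and apply It\^o's formula to $\langle X_{t}(\xi), A X_{t}(\xi)\rangle$. Since $X_{t}(\xi)\in\mathcal{D}(C)\subset\mathcal{D}(G(t))\cap\bigcap_{\ell}\mathcal{D}(L_{\ell}(t))$ a.s., this produces
$$
\langle X_{t}, A X_{t}\rangle = \langle \xi, A\xi\rangle + \int_{0}^{t}\!\left( \langle G(s)X_{s}, AX_{s}\rangle + \langle X_{s}, A G(s)X_{s}\rangle + \sum_{\ell}\langle L_{\ell}(s)X_{s}, A L_{\ell}(s)X_{s}\rangle\right)\! ds + M_{t},
$$
with $M_{t}$ a local martingale. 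Hypothesis (H2.1) together with $\sup_{s\le t}\mathbb{E}\|CX_{s}\|^{2}<\infty$ bound the quadratic variation and make $M_{t}$ a true martingale, while (H2.2) prevents trace growth. Taking expectations and identifying $\mathbb{E}\langle X_{s}, B X_{s}\rangle = \Tr(B\rho_{s})$ yields the integrated scalar form of (\ref{3.12}) tested against $A$.

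To upgrade this scalar identity to the Bochner form (\ref{3.11}) in $\mathfrak{L}_{1}(\mathfrak{h})$, I would use the spectral decomposition $\rho_{s}=\sum_{n}\lambda_{n}^{(s)}|u_{n}^{(s)}\rangle\langle u_{n}^{(s)}|$ with $u_{n}^{(s)}\in\mathcal{D}(C)$ to define $G(s)\rho_{s}$, $\rho_{s}G(s)^{\ast}$ and $\sum_{\ell}L_{\ell}(s)\rho_{s}L_{\ell}(s)^{\ast}$ as trace-class operators; closability of the coefficients together with (H2.1) bounds the trace norm of each expression by $K(t)\sum_{n}\lambda_{n}^{(s)}\|u_{n}^{(s)}\|_{C}^{2}\le K(t)\,\mathbb{E}\|CX_{s}(\xi)\|^{2}$. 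The continuity hypothesis (H2.4), combined with dominated convergence, yields strong continuity of $s\mapsto G(s)\rho_{s}+\rho_{s}G(s)^{\ast}+\sum_{\ell}L_{\ell}(s)\rho_{s}L_{\ell}(s)^{\ast}$ in $\mathfrak{L}_{1}(\mathfrak{h})$, hence Bochner measurability; differentiating (\ref{3.11}) against arbitrary $A$ then returns (\ref{3.12}).

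The principal obstacle is the simultaneous treatment of the infinite Lindblad sum and the time dependence: one must control $\sum_{\ell}L_{\ell}(s)\rho_{s}L_{\ell}(s)^{\ast}$ uniformly in $s\in[0,t]$, interchange this sum with the expectation and the $ds$-integral (via Fubini together with the conservation relation (H2.2) and the $C$-norm bound), and transfer continuity through it using (H2.4). This is where the extension beyond the autonomous framework of \cite{MoraAP} genuinely costs effort, whereas the It\^o computation and the identification with the trace pairing are structurally unchanged.
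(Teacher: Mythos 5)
Your proposal follows the same probabilistic strategy as the paper (It\^o formula applied to the unravelling $X_t(\xi)$, trace-norm estimates from H2.1--H2.2, closability to identify the adjoint terms, H2.4 for continuity in time), differing only in the order of operations: you derive the scalar identity for $\Tr(A\rho_t)$ first and then upgrade to the $\mathfrak{L}_1$-valued Bochner identity (\ref{3.11}), whereas the paper applies a vector-valued It\^o formula to $\left\langle X_t(\xi),x\right\rangle X_t(\xi)$, obtains (\ref{3.11}) first (Lemma \ref{lema23} combined with Theorem 3.2 of \cite{MoraAP}), and reads off (\ref{3.12}) by tracing against $A$. Both orderings work, but note that the strong $\mathfrak{L}_1$-continuity of $s\mapsto G(s)\rho_s+\rho_s G(s)^{\ast}+\sum_{\ell}L_{\ell}(s)\rho_s L_{\ell}(s)^{\ast}$ that you invoke is more than is needed and more than the paper proves: weak continuity of $t\mapsto\Tr\left(A\,\mathcal{L}_{*}(\xi,t)\right)$ (Lemma \ref{lema61b}) together with the uniform trace-norm bound (Lemma \ref{lema61}) already gives Bochner integrability via the duality $\mathfrak{L}_1(\mathfrak{h})^{*}=\mathfrak{L}(\mathfrak{h})$ and Pettis measurability.

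The one step that is wrong as written is the martingale argument. The quadratic variation of $M_t$ is of order $\|A\|^2\int_0^t\|X_s\|^2\sum_{\ell}\left\Vert L_{\ell}(s)X_s\right\Vert^2 ds$, and by H2.2 and H2.1 the integrand is bounded only by a multiple of $\|X_s\|^3\,\|X_s\|_C$. The available hypotheses give $\mathbb{E}\sup_{s\le t}\|X_s\|^2<\infty$ and $\sup_{s\le t}\mathbb{E}\left\Vert CX_s\right\Vert^2<\infty$, which do not control $\mathbb{E}\left(\|X_s\|^3\|X_s\|_C\right)$; so H2.1 plus the $C$-bound do not by themselves make $M_t$ a true martingale. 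The paper's Lemma \ref{lema23} repairs exactly this point with the stopping times $\tau_n=\inf\left\{s\ge 0:\left\Vert X_s(\xi)\right\Vert>n\right\}$: before $\tau_n$ the factor $\|X_s\|^3$ is bounded by $n^3$, the stopped stochastic integral has zero mean, and one then lets $n\to\infty$ using dominated convergence on the drift term, which is genuinely integrable since $\|X_s\|\left\Vert G(s)X_s\right\Vert\le K(s)\left(\|X_s\|^2+\|X_s\|_C^2\right)$. With this localization inserted, the rest of your argument goes through.
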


\begin{proof}
 Deferred to Section \ref{sec:ProofTh-teor10}.
\end{proof}
 
\begin{remark}
Sufficient conditions for the regularity of the solution to the linear stochastic Sch\"odinger equation (\ref{eq:SSE})
(i.e., Hypothesis 2.3)
are given, for instance, in \cite{FagMora2013,MoraMC2004,MoraReIDAQP}.
\end{remark}

\subsection{Uniqueness of the solution to the adjoint  quantum master equation in the GKSL form}

The next theorem introduces the operator $\mathcal{T}_{t}\left( A \right)$
that describes the evolution of the observable $A$ at time $t$ in the Heisenberg picture.
Roughly speaking,
the maps $ A \mapsto \mathcal{T}_{t}\left( A \right)$
is the adjoint operator of the application  $\varrho \mapsto \rho_t $,
where $\rho_t$ is defined by (\ref{eq:def-rho-t}).

\begin{condition}
\label{hyp:Well-posed} 
Let Hypothesis \ref{hyp:L-G-C-domain} hold,
together with Conditions H2.1 and H2.3.
Suppose that 
\begin{itemize}

\item[(H3.1)] For all  $t \geq 0$ and $x\in \mathcal{D}\left( C\right) $,
$$
 2\Re\left\langle x, G \left( t \right) x\right\rangle 
+ \sum_{\ell=1}^{\infty }\left\Vert L_\ell \left( t \right) x\right\Vert ^{2}  \leq K \left( t \right) \left\Vert  x  \right\Vert^{2}.
$$
\end{itemize}
\end{condition}

\begin{theorem}
 \label{th:SesqForm}
 Assume that Hypothesis \ref{hyp:L-G-C-domain} and Conditions H2.1 and H2.3 holds.
Consider $A \in \mathfrak{L}\left( \mathfrak{h}\right)$.   
Then, 
 for every $t \geq 0$ there exists a unique $\mathcal{T}_{t}\left( A \right) \in \mathfrak{L}\left( \mathfrak{h}\right)$ for which:
\begin{equation}
\label{4.2}
\left\langle x, \mathcal{T}_{t}\left( A \right) y \right\rangle = \mathbb{E} \left\langle X_{t} \left( x \right), A  X_{t} \left( y \right) \right\rangle 
\qquad \forall x,y \in \mathcal{D}\left(C  \right).
\end{equation}
Moreover, 
$ \sup_{t \in \left[ 0,T \right] } \left\Vert \mathcal{T}_{t}\left( A \right) \right\Vert < \infty$ for all $T \geq 0$.
\end{theorem}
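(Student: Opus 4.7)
The plan is to identify the right-hand side of (\ref{4.2}) as a bounded sesquilinear form on $\mathcal{D}(C)\times \mathcal{D}(C)$ and then invoke the Riesz lemma to produce $\mathcal{T}_t(A)$. First I would verify that the solution map $x\mapsto X_t(x)$ is linear on $\mathcal{D}(C)$. Given $x,y\in\mathcal{D}(C)$ and $\alpha,\beta\in\mathbb{C}$, the process $\alpha X_\cdot(x)+\beta X_\cdot(y)$ is adapted with continuous paths, takes values in $\mathcal{D}(C)$ almost surely, inherits from its two summands the finite second-moment bounds on both $\|\cdot\|$ and $\|C\,\cdot\|$ required by Definition \ref{def:regular-sol}, and satisfies (\ref{eq:SSE}) with initial datum $\alpha x+\beta y$ by linearity of $G(s)$ and $L_\ell(s)$ on $\mathcal{D}(C)$. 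The uniqueness clause in H2.3 therefore forces $X_t(\alpha x+\beta y)=\alpha X_t(x)+\beta X_t(y)$ almost surely.

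Next I would set
$$B_t(x,y):=\mathbb{E}\langle X_t(x),A\,X_t(y)\rangle,\qquad x,y\in\mathcal{D}(C),$$
which is sesquilinear by the first step and the (anti-)linearity of the inner product. Applying the Cauchy--Schwarz inequality in $L^2(\mathbb{P})$ and the built-in norm bound $\mathbb{E}\|X_t(\xi)\|^2\leq K(t)\,\mathbb{E}\|\xi\|^2$ from Definition \ref{def:regular-sol}, one obtains
$$|B_t(x,y)|\leq \|A\|\,\bigl(\mathbb{E}\|X_t(x)\|^2\bigr)^{1/2}\bigl(\mathbb{E}\|X_t(y)\|^2\bigr)^{1/2}\leq \|A\|\,K(t)\,\|x\|\,\|y\|.$$
Because $C$ is self-adjoint and densely defined, $\mathcal{D}(C)$ is dense in $\mathfrak{h}$, so $B_t$ extends uniquely to a bounded sesquilinear form on $\mathfrak{h}\times\mathfrak{h}$ with norm at most $\|A\|\,K(t)$. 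The Riesz representation theorem then yields a unique $\mathcal{T}_t(A)\in\mathfrak{L}(\mathfrak{h})$ satisfying $\langle x,\mathcal{T}_t(A)y\rangle=B_t(x,y)$ on $\mathfrak{h}\times\mathfrak{h}$; restricting to $\mathcal{D}(C)\times\mathcal{D}(C)$ gives (\ref{4.2}), while the bound $\|\mathcal{T}_t(A)\|\leq \|A\|\,K(t)$ together with the paper's convention that $K(\cdot)$ is non-decreasing yields $\sup_{t\in[0,T]}\|\mathcal{T}_t(A)\|\leq \|A\|\,K(T)<\infty$.

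The only delicate point is that the bound supplied by the definition of strong $C$-solution is in the ambient norm $\|\cdot\|$ and not in the graph norm $\|\cdot\|_C$; this is exactly what is needed to extend $B_t$ from the non-closed subspace $\mathcal{D}(C)$ to the whole of $\mathfrak{h}$. Once the linearity of $x\mapsto X_t(x)$ is in place, everything else---sesquilinearity, density, Riesz, and the local sup bound---is routine bookkeeping and no further input beyond Hypothesis \ref{hyp:L-G-C-domain} and Conditions H2.1 and H2.3 is required.
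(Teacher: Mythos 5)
Your proposal is correct and follows essentially the same route as the paper: bound the sesquilinear form $(x,y)\mapsto\mathbb{E}\left\langle X_{t}\left( x\right), A X_{t}\left( y\right)\right\rangle$ via Cauchy--Schwarz and the moment estimate from Definition \ref{def:regular-sol}, extend by density of $\mathcal{D}\left( C\right)$, and represent it by a bounded operator. The only difference is that you explicitly verify the linearity of $x\mapsto X_{t}\left( x\right)$ from the uniqueness clause of H2.3, a step the paper leaves implicit.
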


\begin{proof}
 Deferred to Section  \ref{sec:ProofTh-SesqForm}.
\end{proof}

Theorem \ref{th:UniquenessLinearHE}  below shows  that  $\mathcal{T}_{t}\left( A \right)$ is the unique possible solution of the adjoint  quantum master equation
\begin{equation}
\label{eq:AdjointQME}
 \frac{d}{dt} \mathcal{T}_{t}\left( A \right)
 = 
 \mathcal{T}_{t}\left( A \right) G \left( t \right) 
 +  G\left( t \right)^{\ast}  \mathcal{T}_{t}\left( A \right)   
 + \sum_{k=1}^{\infty}L_{k} \left( t \right)^{\ast} \mathcal{T}_{t}\left( A \right)  L_{k}\left( t \right) .
 \end{equation}
Thus, 
we generalize Theorem 2.2 of \cite{MoraAP} to the non-autonomous framework.

\begin{theorem}
\label{th:UniquenessLinearHE}
Let  Hypothesis \ref{hyp:Well-posed} hold,
and let $\mathcal{T}_{t}\left( A \right)$ be as in Theorem \ref{th:SesqForm} with $A \in \mathfrak{L}\left( \mathfrak{h}\right)$.
Assume that $\left( \mathcal{A}_{t} \right)_{t \geq 0}$ is a family of operators  belonging to $\mathfrak{L}\left( \mathfrak{h}\right)$
such that 
$\mathcal{A}_{0} = A$, 
$\sup_{s \in \left[ 0,t\right]} \left\Vert \mathcal{A}_{s}  \right\Vert_{\mathfrak{L}\left( \mathfrak{h}\right)} < \infty$,
and 
\begin{equation}
\label{eq:4.1}
 \frac{d}{dt} \left\langle x, \mathcal{A}_{t} y \right\rangle 
= 
\left\langle x, \mathcal{A}_{t} G \left( t \right)  y \right\rangle + \left\langle G \left( t \right)  x, \mathcal{A}_{t} y \right\rangle  
+ \sum_{\ell =1}^{\infty} \left\langle L_{\ell} \left( t \right)  x, \mathcal{A}_{t} L_{\ell} \left( t \right)  y \right\rangle
\end{equation}
for all $x,y \in \mathcal{D}\left(C  \right)$.
Then  $\mathcal{A}_{t} = \mathcal{T}_{t}\left( A \right)$ for all $t \geq 0$.
\end{theorem}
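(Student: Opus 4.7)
The strategy is the standard Heisenberg-picture comparison argument. Fix $t > 0$ and $x, y \in \mathcal{D}(C)$, and introduce
$$
\phi(s) \;=\; \mathbb{E}\langle X_s(x),\, \mathcal{A}_{t-s}\, X_s(y)\rangle, \qquad s \in [0,t],
$$
where $X_s(\cdot)$ is the strong $C$-solution of (\ref{eq:SSE}) provided by (H2.3). Using $\mathcal{A}_0 = A$ together with the defining identity (\ref{4.2}), the boundary values are
$$
\phi(0) \;=\; \langle x, \mathcal{A}_t y\rangle, \qquad \phi(t) \;=\; \mathbb{E}\langle X_t(x), A X_t(y)\rangle \;=\; \langle x, \mathcal{T}_t(A) y\rangle .
$$
Hence the theorem will follow once $\phi'(s) \equiv 0$ is established on $(0,t)$: polarization together with the density of $\mathcal{D}(C)$ in $\mathfrak{h}$ will then promote the scalar identity $\phi(0) = \phi(t)$ to the claimed operator equality $\mathcal{A}_t = \mathcal{T}_t(A)$ in $\mathfrak{L}(\mathfrak{h})$.

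To differentiate $\phi$ I would combine It\^o's formula for $X_s$ with the classical chain rule for the weakly differentiable family $s \mapsto \mathcal{A}_{t-s}$. Because $X_s(x), X_s(y)$ belong to $\mathcal{D}(C)$ almost surely and $\sup_{s \in [0,t]} \mathbb{E}\|X_s(\cdot)\|_C^2 < \infty$, conditions (H2.1) and (H3.1) supply the $L^2$-integrability of $G(s) X_s$ and of the $\ell^2$-sum $\sum_\ell \|L_\ell(s) X_s\|^2$, which legitimates the It\^o expansion and makes the martingale pieces integrable. Taking expectations then produces $\phi'(s)$ as the sum of (i) the It\^o contribution, i.e.\ the GKSL-type bracket in $G(s), L_\ell(s)$ arising from $dX_s$ and the It\^o correction, and (ii) the backward-time derivative of $\mathcal{A}_{t-s}$, which by (\ref{eq:4.1}) applied pathwise to the random test vectors $X_s(x), X_s(y) \in \mathcal{D}(C)$ equals the opposite of the analogous bracket with $G(s), L_\ell(s)$ replaced by $G(t-s), L_\ell(t-s)$.

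The main obstacle, and the point at which the argument genuinely departs from its autonomous counterpart treated in \cite{MoraJFA}, is the time-index mismatch between the It\^o part (evaluated at $s$) and the chain-rule part (evaluated at $t - s$): automatic cancellation is lost. I plan to resolve it by replacing $X_s$ in the definition of $\phi$ with the strong $C$-solution $\widetilde{X}_s$ of the time-reversed stochastic Schr\"odinger equation on $[0,t]$ driven by the coefficients $G(t-\cdot), L_\ell(t-\cdot)$; existence and uniqueness of this solution follow by applying (H2.3) to the reversed family, which inherits (H2.1) and (H3.1). With this substitution both contributions to $\phi'(s)$ carry the common time-index $t-s$ and cancel, yielding $\phi'(s) \equiv 0$ and hence $\langle x, \mathcal{A}_t y\rangle = \mathbb{E}\langle \widetilde{X}_t(x), A \widetilde{X}_t(y)\rangle$.

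It then remains to identify the right-hand side with $\langle x, \mathcal{T}_t(A) y\rangle = \mathbb{E}\langle X_t(x), A X_t(y)\rangle$. Since $\widetilde{X}_t$ and $X_t$ do not share the same law in the non-autonomous setting, this is an independent step, which I would carry out by first verifying through a parallel It\^o expansion that $\mathcal{T}_\cdot(A)$ itself satisfies the integral form of (\ref{eq:4.1}), and then applying the $\phi'(s) \equiv 0$ argument above with $\mathcal{T}(A)$ in place of $\mathcal{A}$: both families then equal the common value $\mathbb{E}\langle \widetilde{X}_t(x), A \widetilde{X}_t(y)\rangle$ and therefore coincide.
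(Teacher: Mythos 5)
Your overall frame---interpolating with $\phi(s)=\mathbb{E}\left\langle X_{s}\left( x\right),\mathcal{A}_{t-s}X_{s}\left( y\right)\right\rangle$, reading the boundary values off $\mathcal{A}_{0}=A$ and (\ref{4.2}), and differentiating by It\^o's formula plus the chain rule for $s\mapsto\mathcal{A}_{t-s}$---is exactly the paper's. The paper, however, carries it out with the \emph{forward} process $X$ throughout: it applies It\^o to $\left\langle R_{n}X_{s}^{\tau_{j}}\left( x\right),\mathcal{A}_{t-s}R_{n}X_{s}^{\tau_{j}}\left( y\right)\right\rangle$ with the resolvent regularization $R_{n}=n\left(n+C\right)^{-1}$ and stopping times $\tau_{j}$ (both needed, and absent from your sketch, because (\ref{eq:4.1}) is only available on $\mathcal{D}\left( C\right)$ and the integrands must be made uniformly continuous and integrable before It\^o applies), and then cancels the It\^o drift against the derivative of $\mathcal{A}_{t-s}$ directly. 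There is no time-reversed equation anywhere in the paper's argument.

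The genuine gap is in your repair of the ``time-index mismatch''. First, Condition H2.3 is an assumption about the given coefficients $G\left(\cdot\right),L_{\ell}\left(\cdot\right)$; it does not automatically transfer to the reversed family $G\left(t-\cdot\right),L_{\ell}\left(t-\cdot\right)$, so the existence of your $\widetilde{X}$ is not granted by the hypotheses of the theorem. Second, and more seriously, your closing step cannot be carried out. The It\^o expansion of $\mathbb{E}\left\langle X_{t}\left( x\right),AX_{t}\left( y\right)\right\rangle$ gives
$$
\frac{d}{dt}\left\langle x,\mathcal{T}_{t}\left( A\right) y\right\rangle
=\mathbb{E}\Bigl[\left\langle G\left( t\right) X_{t}\left( x\right),AX_{t}\left( y\right)\right\rangle
+\left\langle X_{t}\left( x\right),AG\left( t\right) X_{t}\left( y\right)\right\rangle
+\sum_{\ell}\left\langle L_{\ell}\left( t\right) X_{t}\left( x\right),AL_{\ell}\left( t\right) X_{t}\left( y\right)\right\rangle\Bigr],
$$
i.e.\ the GKSL bracket $B\mapsto BG\left( t\right)+G\left( t\right)^{\ast}B+\sum_{\ell}L_{\ell}\left( t\right)^{\ast}BL_{\ell}\left( t\right)$ is applied to $A$ \emph{before} the evolution, whereas the right-hand side of (\ref{eq:4.1}) evaluated at $\mathcal{T}_{t}\left( A\right)$ applies it \emph{after}. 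These coincide only when the coefficients commute across times---which is precisely the failure you invoked to justify the reversal in the first place. Hence $\mathcal{T}_{\cdot}\left( A\right)$ cannot be shown to satisfy (\ref{eq:4.1}) by ``a parallel It\^o expansion'', and your plan to rerun the $\phi'\equiv 0$ argument with $\mathcal{T}\left( A\right)$ in place of $\mathcal{A}$ collapses. At best your reversal argument shows that every solution of (\ref{eq:4.1}) equals $\mathbb{E}\left\langle\widetilde{X}_{t}\left( x\right),A\widetilde{X}_{t}\left( y\right)\right\rangle$, i.e.\ uniqueness; it does not deliver the asserted identity $\mathcal{A}_{t}=\mathcal{T}_{t}\left( A\right)$. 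To prove the theorem as stated you must either justify the direct cancellation with the forward process, as the paper does, or supply an independent argument identifying the reversed representation with $\mathcal{T}_{t}\left( A\right)$.
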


\begin{proof}
Deferred to Section \ref{sec:ProofTh-UniquenessLinearHE}.
\end{proof}

\begin{remark}
In the autonomous case,
\cite{MoraJFA,MoraAP} obtain sufficient conditions for 
$\mathcal{T}_{t}\left( A \right)$ defined by (\ref{4.2}) to be solution of (\ref{eq:AdjointQME}).
Using semigroup methods,
\cite{ChebGarQue98,ChebFagn1,ChebFagn2,Fagnola} 
show the existence and uniqueness of solutions to (\ref{eq:3.50}) and (\ref{eq:AdjointQME}),
in the semigroup sense.
\end{remark}

In order to check Condition H2.3 we establish the following extension of Theorem 2.4 of \cite{FagMora2013}.

\begin{condition}\label{hyp:CF-inequality}
Suppose that $C$ satisfies Hypothesis \ref{hyp:L-G-C-domain},
together with:
\begin{itemize}

\item[(H4.1)]  For any $t \geq 0$ and 
$x \in \mathcal{D}\left( C\right)$,
$\left\Vert  G  \left( t \right) x  \right\Vert^{2} \leq K \left( t \right) \left\Vert  x  \right\Vert_{C}^{2}$ .

\item[(H4.2)] For every $\ell \in \mathbb{N}$ there exists a non-decreasing  function 
$K_{\ell}:  \left[ 0 , \infty \right[ \rightarrow \left[ 0 , \infty \right[$  satisfying
$
\left\Vert  L_{\ell}  \left( t \right) x  \right\Vert^{2} 
\leq K_{\ell} \left( t \right) \left\Vert  x  \right\Vert_{C}^{2}
$
for all $x \in \mathcal{D}\left( C\right)$ and $t \geq 0$.

\item[(H4.3)] There exists a non-decreasing  function 
$\alpha:  \left[ 0 , \infty \right[ \rightarrow \left[ 0 , \infty \right[$ 
and a core $\mathfrak{D}_{1}$ of $C^{2}$
such that for any $ x\in \mathfrak{D}_{1}$ we have
\[
2\Re\left\langle C^{2} x, G \left( t \right) x\right\rangle 
+\sum_{\ell=1}^{\infty }\left\Vert C L_{\ell} \left( t \right) x \right\Vert ^{2}
\leq \alpha \left( t \right)  \left\Vert x\right\Vert_{C}^{2}
\qquad
\forall t \geq 0 .
\]

\item[(H4.4)] There exists a non-decreasing function $\beta:  \left[ 0 , \infty \right[ \rightarrow \left[ 0 , \infty \right[$
and a core $\mathfrak{D}_{2}$ of $C$ such that 
$$
2\Re\left\langle  x, G \left( t \right) x\right\rangle 
+\sum_{\ell=1}^{\infty }\left\Vert  L_{\ell} \left( t \right) x \right\Vert ^{2}
\leq \beta \left( t \right) \left\Vert  x  \right\Vert^{2} 
\qquad
\forall t \geq 0 \text{ and } \forall x \in \mathfrak{D}_{2}.
$$

\end{itemize}
\end{condition}

\begin{theorem}
\label{th:ExistUniqSSE}
In addition to Hypothesis \ref{hyp:CF-inequality},
we assume that $\xi \in L_{C}^{2}\left( \mathbb{P},\mathfrak{h}\right) $ is
$\mathfrak{F}_{0}$-mea\-surable. 
Then (\ref{eq:SSE}) has a unique strong $C$-solution 
$\left( X_{t}\left( \xi \right) \right) _{t \geq0}$ with initial condition $\xi$. 
Moreover, 
\[ 
\mathbb{E} \left\Vert C  X_{t}\left( \xi \right) \right\Vert ^{2} 
\leq 
K \left( t  \right) \left( \mathbb{E} \left\Vert C  \xi  \right\Vert ^{2} 
+  \mathbb{E} \left\Vert  \xi  \right\Vert ^{2} \right).
\]
\end{theorem}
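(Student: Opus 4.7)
The plan is a truncation-and-limit argument adapting the autonomous scheme of Theorem 2.4 of \cite{FagMora2013} to the time-dependent setting.

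First, I would introduce the Yosida bounded approximants $R_{n}:=n(n+C)^{-1}\in\mathfrak{L}(\mathfrak{h})$, which commute with $C$, satisfy $\Vert R_{n}\Vert\le 1$ and $\Vert CR_{n}\Vert\le n$, and verify $R_{n}x\to x$ for every $x\in\mathfrak{h}$ and $CR_{n}x\to Cx$ for every $x\in\mathcal{D}(C)$. Define the truncated coefficients $G_{n}(t):=G(t)R_{n}\pi_{C}$ and $L_{\ell,n}(t):=L_{\ell}(t)R_{n}\pi_{C}$, which by H4.1 and H4.2 are bounded linear operators on $\mathfrak{h}$ and measurable in $t$ by Hypothesis \ref{hyp:L-G-C-domain}. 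Standard theory for linear SDEs in Hilbert spaces with bounded, time-measurable coefficients then supplies, for each $n$, a unique strong solution $X_{t}^{n}(\xi)$ of the truncated equation obtained by replacing $G,L_{\ell}$ with $G_{n},L_{\ell,n}$.

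Second, I would derive uniform a priori bounds in the $\mathfrak{h}$-norm and in the $C$-norm. Applying It\^{o}'s formula to $\Vert X_{t}^{n}(\xi)\Vert^{2}$ and using H4.4 (extended from the core $\mathfrak{D}_{2}$ to $\mathcal{D}(C)$ by a closure-of-sesquilinear-form argument, with H4.1 and H4.2 controlling the cross terms produced by $R_{n}$) yields, via Gr\"onwall, a bound $\mathbb{E}\Vert X_{t}^{n}(\xi)\Vert^{2}\le K(t)\mathbb{E}\Vert\xi\Vert^{2}$ uniformly in $n$. For the $C$-norm I would exploit that $R_{n}$ maps $\mathfrak{h}$ into $\mathcal{D}(C^{k})$ for every $k$, so the trajectories of the truncated equation stay in $\mathcal{D}(C^{2})$; this licenses It\^{o} applied to $\Vert CX_{t}^{n}(\xi)\Vert^{2}$ and the use of H4.3 (extended from $\mathfrak{D}_{1}$ to $\mathcal{D}(C^{2})$ by closure) to obtain
\[
\mathbb{E}\Vert X_{t}^{n}(\xi)\Vert_{C}^{2}\le\mathbb{E}\Vert\xi\Vert_{C}^{2}+\int_{0}^{t}\bigl(\alpha(s)+\beta(s)\bigr)\mathbb{E}\Vert X_{s}^{n}(\xi)\Vert_{C}^{2}\,ds,
\]
and Gr\"onwall delivers the announced bound uniformly in $n$.

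Third, I would pass to the limit. The difference $X_{t}^{n}-X_{t}^{m}$ satisfies a perturbed linear SDE whose inhomogeneity involves $(R_{n}-R_{m})\pi_{C}$ acting on $CX_{s}^{m}(\xi)$; by dominated convergence together with the uniform $C$-bound, $\mathbb{E}\Vert(R_{n}-R_{m})CX_{s}^{m}(\xi)\Vert^{2}\to 0$, and a Burkholder--Davis--Gundy plus Gr\"onwall estimate based on H4.1, H4.2 and H4.4 shows that $(X^{n})$ is Cauchy in the space of continuous adapted $\mathfrak{h}$-valued processes with finite $\sup_{s\le t}\mathbb{E}\Vert\cdot_{s}\Vert^{2}$. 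The limit $X_{t}(\xi)$ is adapted with continuous paths, satisfies $X_{t}(\xi)\in\mathcal{D}(C)$ a.s.\ together with the claimed $C$-bound by lower semicontinuity of $\Vert C\cdot\Vert$ and Fatou's lemma, and solves (\ref{eq:SSE}) by passing to the limit in each term of the truncated integral equation, where H4.1 and H4.2 combined with the uniform $C$-bound justify the limit of the drift and of each diffusion summand. Uniqueness follows from linearity: if $X,Y$ are two strong $C$-solutions, their difference solves (\ref{eq:SSE}) with zero initial datum, and It\^{o} applied to its squared $\mathfrak{h}$-norm together with H4.4 and Gr\"onwall forces $X=Y$.

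The principal obstacle will be the rigorous application of It\^{o}'s formula to $\Vert CX_{t}^{n}\Vert^{2}$ and the extension of the Chebotarev--Fagnola-type inequality H4.3 from the core $\mathfrak{D}_{1}$ to all of $\mathcal{D}(C^{2})$; handling the infinite series $\sum_{\ell}\Vert CL_{\ell}(t)x\Vert^{2}$ in this extension requires combining H4.2 (which controls each summand individually) with the quadratic-form closure afforded by rewriting H4.3 as an upper bound on $\sum_{\ell}\Vert CL_{\ell}(t)x\Vert^{2}$ in terms of $\Vert x\Vert_{C}^{2}$ and $\Re\langle C^{2}x,G(t)x\rangle$. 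The non-autonomous time-dependence contributes only measurability bookkeeping, already provided by Hypothesis \ref{hyp:L-G-C-domain}.
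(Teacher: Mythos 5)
Your overall architecture (Yosida truncation $R_{n}=n(n+C)^{-1}$, uniform a priori bounds in the $\mathfrak{h}$- and $C$-norms via It\^{o} and Gr\"onwall, a Cauchy argument to pass to the limit, and uniqueness by linearity) is exactly the scheme the paper invokes: its proof consists of the single remark that the claim can be proved in much the same way as Theorem 2.4 of \cite{FagMora2013}, and that theorem is proved by precisely this kind of truncation. However, one step of your write-up would fail as stated. You justify applying It\^{o}'s formula to $\Vert CX^{n}_{t}(\xi)\Vert^{2}$ by asserting that $R_{n}$ maps $\mathfrak{h}$ into $\mathcal{D}(C^{k})$ for every $k$, so that the truncated trajectories stay in $\mathcal{D}(C^{2})$. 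Neither assertion is true: the range of $R_{n}=n(n+C)^{-1}$ is $\mathcal{D}(C)$, not $\mathcal{D}(C^{2})$; and even granting regularity of $R_{n}X^{n}_{s}$, the drift of the truncated equation is $G(s)R_{n}X^{n}_{s}$, which under H4.1 lies only in $\mathfrak{h}$ --- nothing in Hypothesis \ref{hyp:CF-inequality} makes $G(t)$ preserve $\mathcal{D}(C)$. Consequently $X^{n}_{t}$ need not belong to $\mathcal{D}(C)$, let alone $\mathcal{D}(C^{2})$, and the quantity $\Vert CX^{n}_{t}\Vert^{2}$ is not even defined.

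The missing device --- the one used in the proof you are adapting --- is to apply It\^{o}'s formula to $\Vert CR_{m}X^{n}_{t}(\xi)\Vert^{2}$ (or to $\Vert CR_{m}X_{t}(\xi)\Vert^{2}$ for the limiting process), where $CR_{m}$ is a \emph{bounded} operator, so that the functional is a genuinely smooth function of the $\mathfrak{h}$-valued It\^{o} process. One then rearranges the resulting drift so that H4.3 (valid on the core $\mathfrak{D}_{1}$ and transported through $R_{m}$, whose commutation with $C$ you correctly note) can be applied, obtains a bound on $\mathbb{E}\Vert CR_{m}X_{t}\Vert^{2}$ uniform in $m$, and concludes $X_{t}\in\mathcal{D}(C)$ a.s.\ together with the stated estimate by letting $m\to\infty$ and invoking Fatou's lemma for the spectral truncations of $C$. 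With this replacement --- and with the form-closure extensions of H4.3 and H4.4 from their cores, which you already flag as the delicate point --- the remainder of your argument (the uniform bounds, the Cauchy estimate for $X^{n}-X^{m}$ driven by $(R_{n}-R_{m})$ acting on $CX^{m}$, and uniqueness via H4.4 and Gr\"onwall) goes through and reproduces the cited proof.
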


\begin{proof}
Our assertion can be be proved in much the same way as Theorem 2.4 of \cite{FagMora2013}.
\end{proof}

\begin{remark}
Theorem \ref{th:ExistUniqSSE} given above asserts that Theorem 2.4 of \cite{FagMora2013}
still holds if we replace the assumption H2.4 of \cite{FagMora2013} by Hypothesis H4.4.
We will apply Theorem \ref{th:UniquenessLinearHE} to the case:
$L_1 =  \sqrt{ 2 \kappa} a^\dagger$, $L_2  =  \sqrt{ \gamma \left(1-d \right) } \sigma^{+}$, 
$L_3  =  \sqrt{ \gamma \left(1 +d \right) } \sigma^{-} $
and 
$
G \left( t \right) 
=
\mathrm{i} H \left( t \right) - \frac{1}{2} \sum_{\ell = 1}^3 L_{\ell} L_{\ell}^*
$
with
$$
H \left( t \right) 
=
 \frac{\omega}{2} \left( 2 a^\dagger a  +\sigma^{3} \right) 
+ \mathrm{i} g \left( \alpha \left( t \right) a^\dagger -  \overline{\alpha \left( t \right)} a \right)
+ \mathrm{i} g \left( \overline{\beta  \left( t \right) } \sigma^{-}  - \beta  \left( t \right) \sigma^+ \right) .
$$
Since 
 $
G \left( t \right) + G \left( t \right)^{\ast} 
+ \sum_{\ell = 1}^3  L_{\ell}^{\ast} L_{\ell}  
=
 4 \kappa ^2 I + 2 \gamma ^2 \left( 1 + d^2 \right) \sigma_3 ,
$
Condition H2.4 of Theorem 2.4 of \cite{FagMora2013} does not apply to our situation.
\end{remark}


\section{Auxiliary equations}
\label{sec:AuxiliaryEquations}


\subsection{Auxiliary linear quantum master equation}

This subsection deals with the linear evolution equation 
obtained by replacing in  (\ref{eq:Laser1})
the unknown functions 
$t \mapsto g \, \Tr\left( \sigma^{-}  \rho_t  \right) $ and $t \mapsto g \, \Tr\left( a \,  \rho_t  \right)$
by general functions $\alpha, \beta : \left[ 0 , \infty \right[ \rightarrow \mathbb{C}$.
More precisely,
we study the well-posedness of the linear quantum master equation 
\begin{equation}
\label{eq:8.4}
\frac{d}{dt}  \rho_t
=
\mathcal{L}_{\star}^h \, \rho_t
+
\left[ 
\alpha \left( t \right) a^\dagger -  \overline{\alpha \left( t \right)} a 
+ \overline{\beta  \left( t \right) } \sigma^{-}  - \beta  \left( t \right) \sigma^+  ,
\rho_t \right] ,
\end{equation}
where 
$ \rho_t \in \mathfrak{L}_{1}^{+}\left( \ell^2(\mathbb{Z}_+)\otimes \mathbb{C}^2 \right) $,
 \begin{eqnarray}
 \label{eq:3.21}
 \mathcal{L}_{\star}^h \, \varrho
 & = &
 \left[ 
- \frac{\mathrm{i} \omega}{2} \left( 2 a^\dagger a  +\sigma^{3} \right)  , \varrho \right] 
+  2 \kappa \left( a\,\varrho a^\dagger - \frac{1}{2}  a^\dagger a \varrho - \frac{1}{2} \varrho a^\dagger a\right) 
 \\
 \nonumber
 & &
 +  \gamma(1-d)\left( \sigma^{-} \varrho \,\sigma^+ 
-\frac{1}{2} \sigma^+\sigma^{-}\varrho 
-\frac{1}{2}\varrho \,\sigma^+\sigma^{-}\right) 
\\
\nonumber
& &
+  \gamma(1+d)\left( \sigma^+\varrho \,\sigma^{-} 
-\frac{1}{2} \sigma^{-}\sigma^+\varrho 
-\frac{1}{2} \varrho \,\sigma^{-}\sigma^+\right) ,
 \end{eqnarray}
$d\in \left]-1,1 \right[$, $\omega \in\mathbb{R}$ and  $\kappa,\gamma>0$.
Furthermore,
we represent (\ref{eq:8.4}) by using 
\begin{equation}
\label {eq:SSEp}
X_{t}\left( \xi \right) 
= \xi +\int_{0}^{t}G \left( s \right) X_{s}\left( \xi \right) ds 
+ \sum_{\ell=1}^{3 }\int_{0}^{t}
L_\ell \left( s \right) X_{s}\left( \xi \right) dW_{s}^{\ell} ,
\end{equation}
where $X_{t}\left( \xi \right) \in \ell^2(\mathbb{Z}_+)\otimes \mathbb{C}^2$,
$W^1, W^2,  W^3$ are real valued independent Wiener processes,
$L_1  =  \sqrt{2 \kappa} \, a$, 
$L_2 =  \sqrt{ \gamma \left(1-d \right) } \, \sigma^{-}$ ,
$L_3 =  \sqrt{ \gamma \left(1 +d \right) } \, \sigma^{+}$
and 
$G \left( t \right)   = -  \mathrm{i} H \left( t \right) - \frac{1}{2} \sum_{\ell = 1}^3 L_{\ell}^* L_{\ell}$
with
$$
H \left( t \right) 
 =
 \frac{\omega}{2} \left( 2 a^\dagger a  +\sigma^{3} \right) 
+ \mathrm{i}  \left( \alpha \left( t \right) a^\dagger -  \overline{\alpha \left( t \right)} a \right)
+ \mathrm{i}  \left( \overline{\beta  \left( t \right) } \sigma^{-}  - \beta  \left( t \right) \sigma^+ \right).
$$

Though the open quantum system (\ref{eq:8.4}) deserves attention in its own right,
our main objective is to develop key tools for proving the results of Section \ref{sec:QuantumBifurcation}.
First, combining Theorems \ref{teor10}, \ref{th:UniquenessLinearHE}  and \ref{th:ExistUniqSSE}
we  obtain the existence and uniqueness of the regular solution to (\ref{eq:8.4}).

\begin{theorem}
 \label{th:EyU-Lineal}
Consider (\ref{eq:8.4}) with $\alpha, \beta : \left[ 0 , \infty \right[ \rightarrow \mathbb{C}$ continuous.
Let $\varrho$ be $N^p $-regular, where  $p \in \mathbb{N}$.
Then,
there exists a unique $N^p $-weak solution $\left( \rho_t \right)_{t \geq 0}$ to (\ref{eq:8.4})
with initial datum $\rho_0 = \varrho$.
Moreover, for any $t \geq 0$ we have
\begin{equation}
\label{3.11n}
\rho_{t} =\rho_0 + \int_{0}^{t} \left(
\mathcal{L}_{\star}^h \, \rho_s
+
\left[ 
 \alpha \left( s \right) a^\dagger -  \overline{\alpha \left( s \right)} a 
+  \overline{\beta  \left( s \right) } \sigma^{-}  - \beta  \left( s \right) \sigma^+  ,
\rho_s \right]
\right) ds 
\end{equation}
and
\begin{equation}
\label{eq:RepProb}
 \rho_t = \mathbb{E} \left|X_{t} \left( \xi \right)\right\rangle \left\langle X_{t} \left( \xi \right)\right| 
\qquad
\forall t \geq 0 , 
\end{equation}
where the integral appearing in (\ref{3.11n}) is understood in the sense of Bochner integral in 
$\mathfrak{L}_{1}\left( \ell^2(\mathbb{Z}_+)\otimes \mathbb{C}^2 \right) $,
$\xi \in L_{N^p}^{2}\left( \mathbb{P}, \ell^2(\mathbb{Z}_+)\otimes \mathbb{C}^2 \right) $ satisfies 
$
\varrho = \mathbb{E}\left\vert \xi \rangle \langle \xi \right\vert 
$
and
$X_{t} \left( \xi \right)$ is the unique strong $N^p$-solution of (\ref{eq:SSEp}).
\end{theorem}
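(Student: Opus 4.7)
The plan is to apply Theorems \ref{th:ExistUniqSSE}, \ref{teor10} and \ref{th:UniquenessLinearHE} in sequence with reference operator $C = N^p$, to the jump operators $L_1, L_2, L_3$ and to $G(t) = -\mathrm{i} H(t) - \frac{1}{2}\sum_{\ell=1}^{3} L_\ell^{*} L_\ell$ defined after (\ref{eq:SSEp}). All dependence on $t$ enters through the scalar functions $\alpha, \beta$ multiplying $a^\dagger, a, \sigma^\pm$ in $H(t)$, so continuity and local boundedness in $t$ are inherited from the assumed continuity of $\alpha, \beta$.

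First I would verify Hypothesis \ref{hyp:CF-inequality} so as to invoke Theorem \ref{th:ExistUniqSSE}. The measurability in Hypothesis \ref{hyp:L-G-C-domain} and the quadratic bounds H4.1--H4.2 reduce to standard relative boundedness of $a, a^\dagger$ by $N^{1/2}$ (hence by $N^p$ for $p \geq 1$) together with local boundedness of $|\alpha(t)|, |\beta(t)|$. For H4.3 I would work on the finite-particle core $\mathfrak{D} = \operatorname{span}\{e_n \otimes e_\pm : n \in \mathbb{Z}_+\}$, which is a core of every power of $N$: using the standard number-operator commutators $[N^p, a] = a((N-1)^p - N^p)$ and $[N^p, a^\dagger] = a^\dagger((N+1)^p - N^p)$, the quadratic form
\[
 2\Re\langle N^{2p} x, G(t) x\rangle + \sum_{\ell=1}^{3} \|N^p L_\ell x\|^2
\]
reduces to polynomial-in-$N$ expressions of total order strictly less than $2p+1$, controlled by Young's inequality by $\alpha(t)\|x\|_{N^p}^2$ with a locally bounded $\alpha$. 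H4.4 holds trivially (with $\beta = 0$) in view of the conservativity H2.2 established below. Theorem \ref{th:ExistUniqSSE} then produces the unique strong $N^p$-solution $X_t(\xi)$ of (\ref{eq:SSEp}) for any $\xi \in L^2_{N^p}(\mathbb{P}, \ell^2(\mathbb{Z}_+)\otimes \mathbb{C}^2)$ representing $\varrho = \mathbb{E}|\xi\rangle\langle\xi|$, the existence of such $\xi$ being ensured by Theorem~3.1 of \cite{MoraAP}.

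Next, to apply Theorem \ref{teor10}, I would verify the remaining items of Hypothesis \ref{hyp:formal-conservativity}: H2.1 coincides with H4.1; H2.3 is the output of the preceding step; H2.2 is the algebraic identity $2\Re\langle x, G(t) x\rangle + \sum_\ell \|L_\ell x\|^2 = 0$, immediate from the defining formula for $G(t)$ together with the symmetry of $H(t)$ on $\mathfrak{D}$; H2.4 is inherited from the continuity of $\alpha, \beta$. Closability of $G(t)$ and of each $L_\ell$ is automatic, since these are polynomials in $a, a^\dagger$ tensored with bounded $2 \times 2$ matrices. Theorem \ref{teor10} then yields that $\rho_t := \mathbb{E}|X_t(\xi)\rangle\langle X_t(\xi)|$ is an $N^p$-weak solution of (\ref{eq:8.4}) satisfying the Bochner integral identity (\ref{3.11n}) and the representation (\ref{eq:RepProb}), which covers existence.

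For uniqueness I would invoke Theorem \ref{th:UniquenessLinearHE}, whose Hypothesis \ref{hyp:Well-posed} holds because H3.1 is weaker than conservativity H2.2. Given a second $N^p$-weak solution $(\tilde\rho_t)$ with $\tilde\rho_0 = \varrho$, the standard duality argument --- pairing the weak formulation (\ref{3.12}) for $\tilde\rho_t$ against the adjoint evolution $\mathcal{T}_t(A)$ of Theorem \ref{th:SesqForm}, characterized uniquely by Theorem \ref{th:UniquenessLinearHE} as the only bounded family satisfying (\ref{eq:4.1}) --- yields
\[
 \Tr(A \tilde\rho_T) = \Tr(\mathcal{T}_T(A)\,\varrho) = \mathbb{E}\langle X_T(\xi), A\, X_T(\xi) \rangle = \Tr(A \rho_T)
\]
for every $A \in \mathfrak{L}(\ell^2(\mathbb{Z}_+) \otimes \mathbb{C}^2)$ and $T \geq 0$, so that $\tilde\rho_T = \rho_T$. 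The principal obstacle I anticipate is the verification of H4.3 for an arbitrary $p \in \mathbb{N}$ in the presence of the driving $\alpha(t), \beta(t)$: the cross-terms stemming from $[N^{2p}, a^\dagger a]$ and from $[N^{2p}, \alpha(t) a^\dagger - \overline{\alpha(t)}\, a]$ must be split by Young's inequality so that the $N^{2p+1}$-order pieces are absorbed by the genuinely dissipative contribution coming from $L_1 = \sqrt{2\kappa}\, a$, with only a $t$-locally bounded multiple of $\|x\|_{N^p}^2$ surviving. The bookkeeping is lengthy but no conceptual difficulty arises.
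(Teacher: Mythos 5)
Your existence argument follows the paper's proof essentially verbatim: verify Hypothesis \ref{hyp:CF-inequality} for $C=N^p$ on the finite-particle core, invoke Theorem \ref{th:ExistUniqSSE} to obtain Condition H2.3, then apply Theorem \ref{teor10} to $\rho_t=\mathbb{E}|X_t(\xi)\rangle\langle X_t(\xi)|$. One small correction there: in the H4.3 computation the top-order terms cancel identically and the $\kappa$-contribution is already non-positive, so nothing needs to be ``absorbed'' by the dissipative part; the $\alpha(t)$ cross-term is of order $k^{2p-1/2}$ and is bounded directly by $K\left\vert\alpha(t)\right\vert\,\Vert x\Vert_{N^p}^2$.

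The uniqueness argument has a genuine gap. You propose the standard duality computation $\Tr(A\tilde\rho_T)=\Tr(\mathcal{T}_T(A)\varrho)$, which requires differentiating $s\mapsto \Tr(\mathcal{T}_{T-s}(A)\tilde\rho_s)$ and hence requires $\mathcal{T}_t(A)$ (in the non-autonomous setting, its two-parameter version) to actually solve the adjoint equation (\ref{eq:4.1}). But Theorems \ref{th:SesqForm} and \ref{th:UniquenessLinearHE} only give existence of $\mathcal{T}_t(A)$ as a bounded operator and uniqueness of solutions to (\ref{eq:4.1}); they do not show that $\mathcal{T}_t(A)$ satisfies (\ref{eq:4.1}), and the remark following Theorem \ref{th:UniquenessLinearHE} makes explicit that this is only known, under extra hypotheses, in the autonomous case. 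The paper circumvents this entirely: taking $A=|y\rangle\langle x|$ in the weak formulation shows that $t\mapsto\rho_t$ itself satisfies an equation of the form (\ref{eq:4.1}) with $G(t),L_\ell$ replaced by $G(t)^{\ast},L_\ell^{\ast}$, so Theorem \ref{th:UniquenessLinearHE} applied to the adjoint coefficients gives at most one bounded family $\rho_t$. This forces one to verify Hypothesis \ref{hyp:Well-posed} for the adjoint system --- in particular H2.3, i.e.\ well-posedness of the stochastic Schr\"odinger equation (\ref{eq:8.3}) driven by $G(t)^{\ast}$ and $L_1^{\ast}=\sqrt{2\kappa}\,a^\dagger$, $L_2^{\ast}$, $L_3^{\ast}$. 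That is the genuinely delicate point: the adjoint system is not conservative, since $\sum_{\ell}\left( L_\ell L_\ell^{\ast}-L_\ell^{\ast}L_\ell\right)$ is a nonzero bounded operator, which is precisely why the paper needed Theorem \ref{th:ExistUniqSSE} with the relaxed condition H4.4 rather than the conservative hypothesis of \cite{FagMora2013}, and the estimates (\ref{eq:8.1})--(\ref{eq:8.2}) must be redone with $a^\dagger$ in place of $a$. Your proposal never examines the adjoint system, and your claim that H4.4 ``holds trivially with $\beta=0$'' is true only for the forward equation, which is not the one relevant to uniqueness.
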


\begin{proof}
 Deferred to Section  \ref{sec:Proofth:EyU-Lineal}.
\end{proof}

\begin{remark}
Assume the framework of Theorem \ref{th:EyU-Lineal}.
From the proof of Theorem \ref{th:EyU-Lineal} it follows that
$
\mathbb{E} \left\Vert X_{t} \left( \xi \right) \right\Vert_{N^p}^2 \leq K \left( t \right) \mathbb{E}  \left\Vert \xi  \right\Vert_{N^p}^2 
$
for all $t \geq 0$.
In the operator language we have 
$
\Tr\left( N^p \, \rho_{t} \, N^p \right) 
\leq
K \left( t \right) \left( 1 + \Tr\left( N^p \, \rho_{0} \, N^p \right) \right)
$
(see, e.g., \cite{MoraAP})
since 
$
\mathbb{E} \left\Vert X_{t} \left( \xi \right) \right\Vert^2 = \mathbb{E}  \left\Vert \xi  \right\Vert^2 =1
$
(see, e.g., \cite{FagMora2013}).
\end{remark}

Using the Ehrenfest-type theorem given in \cite{FagMora2013}
we get a system of  ordinary differential equations that describes 
the evolution of $\Tr\left( \rho_{t} \, a \right)$, $ \Tr\left( \rho_{t} \, \sigma^{-}  \right)$ and $\Tr\left( \rho_{t} \, \sigma^{3}  \right)$.

\begin{theorem}
\label{th:Ehrenfest-Lineal}
Under the assumptions and notation of Theorem \ref{th:EyU-Lineal},
\begin{eqnarray}
\label{eq:8.10}
 \frac{d}{dt} \Tr\left( \rho_{t} \, a \right)
 & =  &
 - \left( \kappa + \mathrm{i} \omega \right)  \Tr\left( \rho_{t} \, a \right) +  \alpha \left( t \right) ,
 \\
 \label{eq:8.11}
  \frac{d}{dt} \Tr\left( \rho_{t} \, \sigma^{-}  \right)
 & =  &
 - \left( \gamma + \mathrm{i} \omega \right)   \Tr\left( \rho_{t} \, \sigma^{-}  \right) 
 +  \beta  \left( t \right)  \Tr\left( \rho_{t} \, \sigma^{3}  \right) ,
 \\
 \label{eq:8.12}
 \frac{d}{dt} \Tr\left( \rho_{t} \, \sigma^{3}  \right)
  & =  &
 - 2  \left(
 \overline{\beta  \left( t \right)}   \Tr\left( \rho_{t} \sigma^{-}  \right) 
 + \beta  \left( t \right) \overline{  \Tr\left( \rho_{t} \sigma^{-}  \right) } 
 \right)
 \\
 \nonumber
 && \quad
 - 2 \gamma \left(  \Tr\left( \rho_{t} \sigma^{3}  \right) -d \right) .
 \end{eqnarray}
\end{theorem}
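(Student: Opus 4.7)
The plan is to derive (\ref{eq:8.10})--(\ref{eq:8.12}) by applying the Ehrenfest-type theorem from \cite{FagMora2013} to each of the three observables $A \in \{a, \sigma^{-}, \sigma^{3}\}$, and to then translate the resulting identities through the probabilistic representation $\rho_{t} = \mathbb{E}|X_{t}(\xi)\rangle\langle X_{t}(\xi)|$ supplied by Theorem \ref{th:EyU-Lineal}.

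First I would check the hypotheses of the Ehrenfest-type theorem. Since $X_{t}(\xi)$ is a strong $N^{p}$-solution of (\ref{eq:SSEp}) with $p \in \mathbb{N}$, one has $X_{t}(\xi) \in \mathcal{D}(N^{p}) \subset \mathcal{D}(a) \cap \mathcal{D}(a^{\dagger})$ a.s.\ together with $\sup_{s \le t}\mathbb{E}\|N^{p} X_{s}(\xi)\|^{2} < \infty$. Combined with the compatibility of $G(t)$ and $L_{\ell}(t)$ with $C = N^{p}$ already exploited in the proof of Theorem \ref{th:EyU-Lineal}, this regularity will make the sesquilinear forms $t \mapsto \mathbb{E}\langle X_{t}(\xi), A X_{t}(\xi)\rangle$ continuously differentiable and will justify the identity
\[
\frac{d}{dt}\mathbb{E}\langle X_{t}(\xi), A\, X_{t}(\xi) \rangle
= \mathbb{E}\langle X_{t}(\xi), \mathcal{L}^{\sharp}_{t}(A)\, X_{t}(\xi) \rangle,
\]
where the formal Heisenberg generator is
\[
\mathcal{L}^{\sharp}_{t}(A) = \mathrm{i}[H(t),A] + \sum_{\ell=1}^{3}\bigl( L_{\ell}^{\ast} A L_{\ell} - \tfrac{1}{2}L_{\ell}^{\ast}L_{\ell}A - \tfrac{1}{2}A L_{\ell}^{\ast}L_{\ell}\bigr).
\]

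Second I would carry out the algebraic computation of $\mathcal{L}^{\sharp}_{t}(A)$ for each observable on the common core $\mathcal{D}(N^{p})$. For $A = a$, the canonical commutation relations $[a^{\dagger}a, a] = -a$ and $[a^{\dagger},a] = -I$ together with the fact that the Pauli operators act on the second tensor factor (and therefore commute with $a$) give $\mathcal{L}^{\sharp}_{t}(a) = -(\kappa + \mathrm{i}\omega)a + \alpha(t)I$; only $L_{1} = \sqrt{2\kappa}\,a$ contributes to the dissipator. For $A = \sigma^{-}$, the bosonic part of $H(t)$ commutes with $\sigma^{-}$, and using $[\sigma^{3},\sigma^{-}] = -2\sigma^{-}$ and $[\sigma^{+},\sigma^{-}] = \sigma^{3}$ together with the atomic jump contributions from $L_{2}, L_{3}$, one obtains $\mathcal{L}^{\sharp}_{t}(\sigma^{-}) = -(\gamma + \mathrm{i}\omega)\sigma^{-} + \beta(t)\sigma^{3}$. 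For $A = \sigma^{3}$, an analogous computation that uses $\sigma^{3}\sigma^{\pm} = \pm\sigma^{\pm}$ and $\sigma^{+}\sigma^{-} + \sigma^{-}\sigma^{+} = I$ yields $\mathcal{L}^{\sharp}_{t}(\sigma^{3}) = -2(\overline{\beta(t)}\,\sigma^{-} + \beta(t)\sigma^{+}) - 2\gamma(\sigma^{3} - d\,I)$. Rewriting each Ehrenfest identity in operator form through $\rho_{t} = \mathbb{E}|X_{t}(\xi)\rangle\langle X_{t}(\xi)|$ and using $\Tr(\rho_{t}\sigma^{+}) = \overline{\Tr(\rho_{t}\sigma^{-})}$ produces exactly (\ref{eq:8.10}), (\ref{eq:8.11}) and (\ref{eq:8.12}).

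The main obstacle will be the first step: verifying the technical assumptions of the Ehrenfest-type theorem for the unbounded observable $A = a$ with the time-dependent coefficients $G(t), L_{1}(t), L_{2}(t), L_{3}(t)$. This amounts to dominating the relevant commutator and product expressions by $C^{2} = N^{2p}$ uniformly on compact time intervals, with the continuity of $\alpha$ and $\beta$ supplying the required time regularity; for the bounded observables $\sigma^{-}$ and $\sigma^{3}$ the verification is immediate, and the subsequent algebra on canonical commutation relations and Pauli matrices is routine.
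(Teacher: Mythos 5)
Your proposal is correct and follows essentially the same route as the paper: equation (\ref{eq:8.10}) is obtained by applying the Ehrenfest-type theorem of \cite{FagMora2013} to the unbounded observable $a$ and computing the Heisenberg generator on a core of $N$, while the identities for $\sigma^{-}$ and $\sigma^{3}$ reduce to routine Pauli-matrix algebra (the paper reads these off directly from the weak master equation (\ref{eq:8.8}) together with Theorem 3.2 of \cite{MoraAP}, rather than reinvoking the Ehrenfest theorem, but this is immaterial since the observables are bounded). Your computed generators $\mathcal{L}^{\sharp}_{t}(a)$, $\mathcal{L}^{\sharp}_{t}(\sigma^{-})$ and $\mathcal{L}^{\sharp}_{t}(\sigma^{3})$ all agree with the stated right-hand sides.
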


\begin{proof}
 Deferred to Section  \ref{sec:Proofth:Ehrenfest-Lineal}.
\end{proof}

\subsection{Complex Lorenz equations}
\label{sec:LorenzEquations}

Taking 
$ A \left( t \right) = \Tr\left( a \, \rho_{t} \right) $,
$ S \left( t  \right)  = \Tr\left( \sigma^{-} \rho_{t}  \right) $
and
$ D \left( t  \right) = \Tr\left(  \sigma^{3}  \rho_{t} \right) $
we rewrite (\ref{eq:laser2}) as
\begin{equation}
 \label{eq:Lorenz}
 \left\{ 
 \begin{array}{lcl} 
  \frac{d}{dt} A \left( t \right)
 & = &
 - \left( \kappa + \mathrm{i} \omega \right) A \left( t \right)  + g \ S \left( t \right) 
 \\
  \frac{d}{dt} S \left( t  \right)
&  =  &
 - \left( \gamma + \mathrm{i} \omega \right)   S \left( t  \right)
 + g \ A  \left( t \right)   D \left( t  \right)
 \\
 \frac{d}{dt} D \left( t  \right)
&  =  &
-  4 g \ \Re \left(
 \overline{ A  \left( t \right)}   S \left( t  \right) 
 \right)
 - 2 \gamma \left(  D \left( t  \right) - d \right) 
\end{array}
  \right. ,
\end{equation}
where $t \geq 0$, $D \left( t \right)  \in \mathbb{R}$ and 
$ A \left( t \right), Y \left( t \right)  \in \mathbb{C}$.
The complex Lorenz equation (\ref{eq:Lorenz})
has received much attention in the physical literature (see, e.g., \cite{Fowler1982,NingHaken1990})
due to its important role in the description of laser dynamics.
Just for the sake of completeness,
we next present  relevant properties of (\ref{eq:Lorenz}),
together with their mathematical proofs. 

\begin{theorem}
 \label{th:LorenzEquations-Laser}
 Suppose that $d\in \left]-1,1 \right[$, $\omega \in\mathbb{R}$, $ g \in \mathbb{R} \smallsetminus \left\{ 0 \right\}$
 and  $\kappa,\gamma>0$.
Then,
for every initial condition  
$A \left( 0 \right) \in \mathbb{C}$, $S \left( 0 \right) \in \mathbb{C}$, $D \left( 0 \right) \in \mathbb{R}$
there exists a unique solution 
 defined on $\left[ 0 , + \infty \right[$
 to the system (\ref{eq:Lorenz}).
 Moreover, we have:
 
\begin{itemize}

\item  If $d < 0$, then for all $t \geq 0$,
\begin{eqnarray}
  \label{eq:L5}
& & 4 \left\vert d \right\vert \left\vert A \left( t  \right) \right\vert ^2 
+ 4 \left\vert S \left( t  \right)  \right\vert ^2
+ \left( D \left( t \right) - d \right)^2
\\
\nonumber
& & \leq 
\hbox{\rm e}^{
-  2 t\, \min \left\{  \kappa  ,  \gamma \right\} }
\left(
4 \left\vert d \right\vert \left\vert A \left( 0  \right) \right\vert ^2 + 4 \left\vert S \left( 0  \right)  \right\vert ^2
+  \left( D \left( 0 \right) - d \right)^2
\right) .
\end{eqnarray}

\item If $d \geq 0$, then for any $t \geq 0$,
 \begin{eqnarray} 
 \label{eq:L4}
 && 
 \left\vert A \left( t  \right) \right\vert ^2 + \frac{g^2}{\gamma \kappa} \left\vert S \left( t  \right)  \right\vert ^2
+ \frac{g^2}{4 \gamma \kappa} \left( D \left( t \right) - d \right)^2
\\
\nonumber
&& \leq
\hbox{\rm e}^{
- t \min \left\{ \kappa - \frac{g^2d}{\gamma}   ,   \gamma - \frac{g^2d}{\kappa}   \right\} }
\left(
\left\vert A \left( 0  \right) \right\vert ^2 + \frac{g^2}{\gamma \kappa} \left\vert S \left( 0  \right)  \right\vert ^2
+ \frac{g^2}{4 \gamma \kappa} \left( D \left( 0 \right) - d \right)^2
\right) .
\end{eqnarray}

\end{itemize}
\end{theorem}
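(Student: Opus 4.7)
Since the right-hand side of (\ref{eq:Lorenz}) is a polynomial vector field on $\mathbb{C}^2\times\mathbb{R}\cong\mathbb{R}^5$, the Cauchy-Lipschitz theorem yields a unique maximal classical solution $(A(t),S(t),D(t))$ on some interval $[0,T_{\max}[$; global existence will follow from the a priori bounds (\ref{eq:L5}) and (\ref{eq:L4}) themselves, each of which I would establish by a Lyapunov argument. For the case $d<0$ the plan is to use
\[
V(t)=4|d|\,|A(t)|^2+4|S(t)|^2+(D(t)-d)^2.
\]
Differentiating along (\ref{eq:Lorenz}) and using $\Re(\bar A S)=\Re(\bar S A)$, the three contributions to the cross term $\Re(\bar A S)$ combine with coefficient proportional to $|d|+d$, which vanishes since $d<0$. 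Hence $\dot V=-8\kappa|d|\,|A|^2-8\gamma|S|^2-4\gamma(D-d)^2\le -2\min\{\kappa,\gamma\}\,V$, and Gronwall's lemma gives (\ref{eq:L5}).

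For the case $d\ge 0$ I would use
\[
W(t)=|A(t)|^2+\frac{g^2}{\gamma\kappa}|S(t)|^2+\frac{g^2}{4\gamma\kappa}(D(t)-d)^2.
\]
The $4{:}1$ ratio between the coefficients of $|S|^2$ and $(D-d)^2$ is chosen exactly to cancel the $D$-dependent part of the cross term, leaving
\[
\dot W=-2\kappa|A|^2-\frac{2g^2}{\kappa}|S|^2-\frac{g^2}{\kappa}(D-d)^2+\frac{2g(\gamma\kappa+g^2d)}{\gamma\kappa}\Re(\bar A S).
\]
Write $\mu_A:=\kappa-g^2d/\gamma$, $\mu_S:=\gamma-g^2d/\kappa$ and $\mu:=\min\{\mu_A,\mu_S\}$. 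Decomposing $A$ and $S$ into real and imaginary parts, the inequality $\dot W\le -\mu W$ is equivalent to the negative semi-definiteness of the symmetric $2\times 2$ matrix with diagonal entries $\mu-2\kappa$ and $g^2(\mu-2\gamma)/(\gamma\kappa)$ and off-diagonal entry $g(\gamma\kappa+g^2d)/(\gamma\kappa)$; the trace conditions are immediate, and the determinant condition simplifies to $(\gamma-\kappa)(\gamma\kappa-g^2d)\ge 0$ when $\mu=\mu_A$, and to the symmetric identity under $\gamma\leftrightarrow\kappa$ when $\mu=\mu_S$. A short case analysis on the signs of $\gamma-\kappa$ and $\gamma\kappa-g^2d$ then shows that whichever of $\mu_A,\mu_S$ realises the minimum is the one whose inequality holds. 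Gronwall's lemma then yields (\ref{eq:L4}).

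Both $V$ and $W$ are coercive positive-definite quadratic forms in the real coordinates of $(A,S,D-d)$, so the Gronwall bounds (valid for any sign of $\mu$) provide uniform a priori control of $(A(t),S(t),D(t))$ on every bounded subinterval of $[0,T_{\max}[$; the standard continuation criterion then forces $T_{\max}=+\infty$. The main obstacle I anticipate is the algebraic bookkeeping in the $d\ge 0$ case: the $4{:}1$ trick eliminates the $D$-dependent part of the cross term, but the residual cross term $\tfrac{2g(\gamma\kappa+g^2d)}{\gamma\kappa}\Re(\bar A S)$ must still be absorbed into the negative diagonal at a rate matching exactly $\min\{\mu_A,\mu_S\}$—not just some smaller rate—and it is precisely the identity $(\gamma-\kappa)(\gamma\kappa-g^2d)\ge 0$ together with the accompanying case split that make this optimal rate accessible.
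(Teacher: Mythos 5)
Your proposal is correct, and its overall architecture (local existence by Cauchy--Lipschitz, the two Lyapunov functions $V$ and $W$ read off from (\ref{eq:L5}) and (\ref{eq:L4}), Gronwall, then continuation to get $T_{\max}=+\infty$) coincides with the paper's. The only genuine divergence is in how the residual cross term is absorbed in the case $d\ge 0$. The paper first passes to the rotating frame $X=e^{\mathrm{i}\omega t}A$, $Y=e^{\mathrm{i}\omega t}S$, $Z=D-d$ (cosmetic, as you implicitly note, since the $\mathrm{i}\omega$ terms are killed by taking real parts), and then disposes of the cross term $\frac{2g(\gamma\kappa+g^2d)}{\gamma\kappa}\Re(\bar A S)$ with a single weighted Cauchy--Schwarz inequality, namely $2\Re\bigl(X\,\overline{\tfrac{g}{\kappa}Y}\bigr)\le |X|^2+\tfrac{g^2}{\kappa^2}|Y|^2$ multiplied by the positive factor $\kappa\bigl(1+\tfrac{g^2d}{\gamma\kappa}\bigr)$; this lands exactly on the coefficients $-\mu_A|A|^2-\tfrac{g^2}{\gamma\kappa}\mu_S|S|^2$ with no case analysis at all, the $(D-d)^2$ term being handled by the trivial bound $4\gamma\ge\mu_S\ge\mu$. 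Your route via negative semidefiniteness of the $2\times 2$ matrix is also valid --- I checked that the determinant condition for $\mu=\mu_A$ does reduce to $(\gamma-\kappa)(\gamma\kappa-g^2d)\ge 0$, which is precisely the condition characterizing $\mu_A\le\mu_S$, so your case split closes --- but it is more labor than needed: the paper's one-line choice of weights in the Young inequality buys the optimal rate $\min\{\mu_A,\mu_S\}$ for free, whereas your argument has to verify a posteriori that the sharp rate is attainable. Everything else, including the observation that the cross-term coefficient in the $d<0$ case is proportional to $|d|+d$ and hence vanishes, matches the paper's computation.
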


\begin{proof}
Deferred to Subection  \ref{sec:ProofTh-LorenzEquations-Laser}.
\end{proof}

\begin{remark}
According to 
$\gamma = \left( \kappa_+ + \kappa_- \right)/2$,
$d = \left( \kappa_+ - \kappa_- \right) / \left( \kappa_+ + \kappa_- \right) $
we have 
$\kappa_- = \gamma \left(1-d \right)$
and $\kappa_+ = \gamma \left(1+d \right)$.
Since 
$\kappa_+, \kappa_- > 0$,
$\gamma > 0$ and $d\in \left]-1,1 \right[$.
\end{remark}

\section{Proofs}
\label{sec:Proofs}

\subsection{Proofs of theorems from  Section \ref{sec:LinearQMEs}}

\subsubsection{Proof of Theorem \ref{teor10}}
\label{sec:ProofTh-teor10}

The proof of Theorem \ref{teor10} follows from combining 
Lemma \ref{lema18}, given below, with
the arguments used in the proof of  Theorem 4.4 of \cite{MoraAP}.
First, we get  the weak continuity  of the map $t\mapsto AX_{t}\left( \xi\right) $
in case $A$ is relatively bounded by $C$. 
 
\begin{lemma}
\label{lema17}
Let Condition H2.3 of Hypothesis \ref{hyp:formal-conservativity} hold. 
Suppose that $\xi \in L_{C}^{2}\left( \mathbb{P},\mathfrak{h}\right) $ 
and $A \in \mathfrak{L}\left( \left( \mathcal{D}\left( C\right) ,\left\Vert \cdot\right\Vert_{C}\right) ,\mathfrak{h}\right) $.
Then,
for any $\psi\in L^{2}\left( \mathbb{P},\mathfrak{h}\right) $ and $t\geq0$ we have
\begin{equation}
\lim_{s\rightarrow t}\mathbb{E}\left\langle \psi,AX_{s}\left( \xi\right)
\right\rangle =\mathbb{E}\left\langle \psi,AX_{t}\left( \xi\right)
\right\rangle.  \label{n3.9}
\end{equation}
\end{lemma}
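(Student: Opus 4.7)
The plan is to establish the stated weak continuity by decomposing $\langle\psi,AX_s(\xi)\rangle$ into pieces driven separately by $X_s(\xi)$ and $CX_s(\xi)$, and then showing each sequence converges weakly in $L^{2}(\mathbb{P},\mathfrak{h})$. Since $A\in\mathfrak{L}((\mathcal{D}(C),\|\cdot\|_C),\mathfrak{h})$ and $(\mathcal{D}(C),\langle\cdot,\cdot\rangle_{C})$ is a Hilbert space, there exists $K_{A}>0$ with $\|Ay\|\leq K_{A}\|y\|_{C}$, and Riesz representation in the graph-norm Hilbert space produces a bounded linear map $R:\mathfrak{h}\to\mathcal{D}(C)$ with $\|R(\phi)\|_{C}\leq K_{A}\|\phi\|$ such that
$$\langle\phi,Ay\rangle=\langle R(\phi),y\rangle+\langle CR(\phi),Cy\rangle,\qquad\phi\in\mathfrak{h},\ y\in\mathcal{D}(C).$$
Applying this identity pointwise with $\phi=\psi(\omega)$ and $y=X_{s}(\xi)(\omega)\in\mathcal{D}(C)$ a.s., and taking expectations,
$$\mathbb{E}\langle\psi,AX_{s}(\xi)\rangle=\mathbb{E}\langle R(\psi),X_{s}(\xi)\rangle+\mathbb{E}\langle CR(\psi),CX_{s}(\xi)\rangle,$$
with $R(\psi),CR(\psi)\in L^{2}(\mathbb{P},\mathfrak{h})$ by the graph-norm bound on $R$. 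It thus suffices to prove $X_{s}(\xi)\rightharpoonup X_{t}(\xi)$ and $CX_{s}(\xi)\rightharpoonup CX_{t}(\xi)$ weakly in $L^{2}(\mathbb{P},\mathfrak{h})$ as $s\to t$.

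Next I would establish both weak convergences by a subsequence argument. Condition H2.3 and the definition of a strong $C$-solution provide a.s.\ continuous sample paths in $\mathfrak{h}$ together with $\sup_{s\in[0,T]}\mathbb{E}\|X_{s}(\xi)\|^{2}<\infty$ and $\sup_{s\in[0,T]}\mathbb{E}\|CX_{s}(\xi)\|^{2}<\infty$ for every $T>0$. Reflexivity of $L^{2}(\mathbb{P},\mathfrak{h})$ then furnishes, along any sequence $s_{n}\to t$, a subsequence with $X_{s_{n_k}}(\xi)\rightharpoonup Y_{1}$ and $CX_{s_{n_k}}(\xi)\rightharpoonup Y_{2}$ weakly in $L^{2}(\mathbb{P},\mathfrak{h})$. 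The identification $Y_{1}=X_{t}(\xi)$ is immediate from a.s.\ convergence $X_{s_{n}}(\xi)\to X_{t}(\xi)$ in $\mathfrak{h}$ combined with the uniform-integrability supplied by the $L^{2}$ bound, tested against $\mathbf{1}_{B}\phi$ with $\phi\in\mathfrak{h}$ and $B\in\mathfrak{F}$. To identify $Y_{2}=CX_{t}(\xi)$, I would exploit self-adjointness of $C$: for $\phi\in\mathcal{D}(C)$,
$$\mathbb{E}\bigl[\mathbf{1}_{B}\langle\phi,CX_{s_{n_k}}(\xi)\rangle\bigr]=\mathbb{E}\bigl[\mathbf{1}_{B}\langle C\phi,X_{s_{n_k}}(\xi)\rangle\bigr]\longrightarrow\mathbb{E}\bigl[\mathbf{1}_{B}\langle C\phi,X_{t}(\xi)\rangle\bigr]=\mathbb{E}\bigl[\mathbf{1}_{B}\langle\phi,CX_{t}(\xi)\rangle\bigr]$$
by dominated convergence, while the left-hand side also tends to $\mathbb{E}[\mathbf{1}_{B}\langle\phi,Y_{2}\rangle]$; density of $\mathcal{D}(C)$ in $\mathfrak{h}$ then gives $Y_{2}=CX_{t}(\xi)$ a.s., and the usual subsequence principle upgrades to weak convergence of the full net.

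Substituting both weak convergences into the decomposition from the first step gives $\mathbb{E}\langle\psi,AX_{s}(\xi)\rangle\to\mathbb{E}\langle\psi,AX_{t}(\xi)\rangle$, as required. The main obstacle is the identification of $Y_{2}$: sample-path continuity provides convergence only in $\mathfrak{h}$-norm and not in graph norm, so $C$ cannot be passed through the limit directly, and the uniform bound on $\|CX_{s}(\xi)\|$ alone is not enough to determine weak cluster points. The decisive ingredient is the self-adjointness of $C$, which lets us shift $C$ onto the test vector $\phi\in\mathcal{D}(C)$ and conclude via the a.s.\ convergence of $X_{s}(\xi)$ in $\mathfrak{h}$.
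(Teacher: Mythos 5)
Your proof is correct, but it follows a genuinely different route from the paper's. The paper packages the three processes into the single sequence $\bigl( X_{s_{n}}(\xi),AX_{s_{n}}(\xi),CX_{s_{n}}(\xi)\bigr)$ in $L^{2}\left( \mathbb{P},\mathfrak{h}^{3}\right)$, extracts a weakly convergent subsequence, and identifies the limit by observing that the linear manifold $\mathfrak{M}=\left\{ \left( \eta,A\eta,C\eta\right) :\eta\in L_{C}^{2}\left( \mathbb{P},\mathfrak{h}\right) \right\}$ is strongly closed (hence weakly closed), so the cluster point has the form $\left( \eta, A\eta, C\eta\right)$; strong $L^{2}$ convergence of the first component, obtained from sample-path continuity and $\mathbb{E}\sup_{s\le t+1}\Vert X_{s}(\xi)\Vert^{2}<\infty$ via dominated convergence, then forces $U=AX_{t}(\xi)$. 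You instead eliminate $A$ at the outset through the Riesz representation in the graph-norm Hilbert space, reducing everything to weak convergence of $X_{s}(\xi)$ and $CX_{s}(\xi)$ separately, and you identify the weak cluster point of $CX_{s_{n}}(\xi)$ by moving $C$ onto the test vector via self-adjointness. Both arguments share the same compactness-plus-subsequence skeleton and the same essential inputs (the a priori bounds from Definition \ref{def:regular-sol} and pathwise continuity). What your route buys is self-containedness: it avoids the auxiliary closedness of $\mathfrak{M}$, which the paper delegates to Lemma 7.15 of \cite{MoraAP}, at the price of using that $C$ is self-adjoint (the paper's manifold argument only needs $C$ closed and $A$ graph-norm bounded). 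What the paper's route buys is uniformity: the triple-sequence device extends with no change to several operators at once, which is exactly how it is reused in Lemma \ref{lema18} for $G(t)X_{t}(\xi)$ and the $L_{\ell}(t)X_{t}(\xi)$. One small point worth making explicit if you write this up: in identifying $Y_{2}$ you should test against a countable subset of $\mathcal{D}(C)$ that is dense in $\mathfrak{h}$ (available by separability) so that the null sets can be collected into one before concluding $Y_{2}=CX_{t}(\xi)$ almost surely.
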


\begin{proof}
Consider a sequence of non-negative real numbers $\left( s_{n}\right) _{n}$
satisfying $s_{n} \rightarrow t$ as $n \rightarrow + \infty$.
Since
$\left( \left( X_{s_{n}}\left( \xi\right) ,AX_{s_{n}}\left( \xi\right) ,CX_{s_{n}}\left( \xi\right) \right) \right)_{n}$ 
is a bounded sequence in $L^{2}\left( \mathbb{P}, \mathfrak{h}^{3} \right) $,
where $\mathfrak{h}^{3} = \mathfrak{h}\times \mathfrak{h}\times  \mathfrak{h}$,
there exists a subsequence $\left( s_{n\left( k\right) }\right) _{k}$ such that
\begin{equation}
 \label{n3.8}
\left( X_{s_{n\left( k\right) }}\left( \xi\right) ,AX_{s_{n\left( k\right)}}\left( \xi\right) ,CX_{s_{n\left( k\right) }}\left( \xi\right) \right)
\longrightarrow_{k\rightarrow\infty}
\left( Y,U,V\right) 
\end{equation}
weakly in $L^{2}\left( \mathbb{P},\mathfrak{h}^{3}\right)$.
Define $\mathfrak{M}=\left\{ \left( \eta,A\eta ,C\eta\right) :\eta\in L_{C}^{2}\left( \mathbb{P},\mathfrak{h}\right) \right\} $.
Thus,
 $$
 \left(X_{s_{n\left( k\right) }}\left( \xi\right) ,AX_{s_{n\left( k\right) }}\left( \xi\right) ,CX_{s_{n\left( k\right) }}\left( \xi\right) \right) 
 \in \mathfrak{M}
 \hspace{1cm}
\forall k \in \mathbb{N} .
 $$ 
 Since 
$\mathfrak{M}$ is a linear manifold of $L^{2}\left( \mathbb{P},\mathfrak{h}^{3}\right) $ 
closed with respect to the strong topology (see, e.g., proof of Lemma 7.15 of \cite{MoraAP}),
(\ref{n3.8})  implies 
$\left( Y,U,V\right) \in\mathfrak{M}$ (see, e.g., Section III.1.6 of \cite{Kato}). 
Using
$
 \mathbb{E} \left( \sup_{s \in \left[0, t+1 \right]}  \left\Vert  X_s\left( \xi \right)  \right\Vert ^{2} \right)
 < \infty
$,
together with the dominated convergence theorem
we obtain that
$$
\mathbb{E} \left\Vert X_{s_{n\left( k\right)}}\left( \xi\right) -X_{t}\left( \xi\right) \right\Vert^{2}
\rightarrow 0
\qquad \text{ as } k \rightarrow + \infty .
$$
Hence 
$Y=X_{t}\left( \xi\right) $, and so
$U=AX_{t}\left( \xi\right) $.
Therefore,
$
AX_{s_{n\left( k\right) }}\left( \xi\right) 
$
converges to
$AX_{t} \left( \xi\right) $
weakly in $L^{2}\left( \mathbb{P},\mathfrak{h}\right)$.
\end{proof}

\begin{lemma}
\label{lema18}
Assume Hypothesis \ref{hyp:formal-conservativity},
together with  
$\xi \in L_{C}^{2}\left( \mathbb{P},\mathfrak{h}\right) $ and $A \in \mathfrak{L}\left( \mathfrak{h}\right) $.
Then, 
$ t  \mapsto L_{k}  \left( t \right) X_{t}\left( \xi\right) $ is continuous as a map from 
$\left[ 0 , + \infty \right[$ to $L^{2}\left( \mathbb{P},\mathfrak{h}\right)$.
Moreover, 
\begin{eqnarray*}
t 
& \mapsto &
\mathbb{E}\left\langle G \left( t \right) X_{t}\left( \xi\right) ,A X_{t}\left( \xi\right) \right\rangle
+
\mathbb{E}\left\langle X_{t}\left( \xi\right) ,A G  \left( t \right) X_{t}\left( \xi\right) \right\rangle
\\
& & 
+\sum_{\ell=1}^{\infty }
\mathbb{E}\left\langle L_{\ell}  \left( t \right) X_{t}\left( \xi\right) ,A L_{\ell}  \left( t \right) X_{t}\left( \xi\right) \right\rangle 
\end{eqnarray*}
is a continuous function.
\end{lemma}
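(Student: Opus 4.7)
The plan is to handle the two assertions of Lemma \ref{lema18} in sequence. For the strong $L^{2}(\mathbb{P},\mathfrak{h})$-continuity of $t\mapsto L_{k}(t)X_{t}(\xi)$, I would split
\[
L_{k}(s)X_{s}(\xi)-L_{k}(t)X_{t}(\xi) = \bigl(L_{k}(s)-L_{k}(t)\bigr)X_{s}(\xi) + L_{k}(t)\bigl(X_{s}(\xi)-X_{t}(\xi)\bigr).
\]
Hypothesis H2.4 controls the first summand: $\mathbb{E}\|(L_{k}(s)-L_{k}(t))X_{s}(\xi)\|^{2}\le f_{k}(s,t)\sup_{r\in[0,T]}\mathbb{E}\|X_{r}(\xi)\|_{C}^{2}$, and this supremum is finite because $X_{\cdot}(\xi)$ is a strong $C$-solution (H2.3), while $f_{k}(s,t)\to 0$ by H2.4(ii). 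For the second summand, combining H2.1 and H2.2 yields the graph-norm bound $\|L_{k}(t)x\|^{2}\le K(t)\|x\|_{C}^{2}$, so matters reduce to showing $\mathbb{E}\|X_{s}(\xi)-X_{t}(\xi)\|_{C}^{2}\to 0$ as $s\to t$.

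The plain $L^{2}$ part, $\mathbb{E}\|X_{s}(\xi)-X_{t}(\xi)\|^{2}\to 0$, follows from sample-path continuity of $X_{\cdot}(\xi)$, the uniform second-moment bound, and dominated convergence. The delicate piece is $\mathbb{E}\|C(X_{s}(\xi)-X_{t}(\xi))\|^{2}\to 0$. Here I would invoke Lemma \ref{lema17} with $A=C$ (which does belong to $\mathfrak{L}((\mathcal{D}(C),\|\cdot\|_{C}),\mathfrak{h})$) to obtain the weak convergence $CX_{s}(\xi)\rightharpoonup CX_{t}(\xi)$ in $L^{2}(\mathbb{P},\mathfrak{h})$, and then upgrade to strong convergence via the Hilbert-space identity that weak convergence together with convergence of the norms is equivalent to strong convergence. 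The required identity $\mathbb{E}\|CX_{s}(\xi)\|^{2}\to\mathbb{E}\|CX_{t}(\xi)\|^{2}$ is the main obstacle: applying It\^o directly to $\|CX_{t}(\xi)\|^{2}$ is not legitimate under Hypothesis \ref{hyp:formal-conservativity} alone, since $CG(t)$ and $CL_{\ell}(t)$ are not controlled. I would therefore approximate $C$ by the bounded operators $C_{n}=C(I+C/n)^{-1}$, derive the It\^o identity for $\|C_{n}X_{t}(\xi)\|^{2}$ (which only requires $G(t)X_{t}(\xi)$ and $L_{\ell}(t)X_{t}(\xi)$), and pass to the limit in $n$ using $\sup_{r\le T}\mathbb{E}\|CX_{r}(\xi)\|^{2}<\infty$ and $C_{n}x\to Cx$ on $\mathcal{D}(C)$; the continuity in $s$ of the limit then follows from the continuity hypotheses H2.4.

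The second assertion follows from the first. Hypothesis H2.1 gives $\|G(t)x\|^{2}\le K(t)\|x\|_{C}^{2}$, so the same splitting argument yields strong $L^{2}$-continuity of $t\mapsto G(t)X_{t}(\xi)$, and consequently the continuity of the two sesquilinear terms $\mathbb{E}\langle G(t)X_{t}(\xi),AX_{t}(\xi)\rangle$ and $\mathbb{E}\langle X_{t}(\xi),AG(t)X_{t}(\xi)\rangle$ for every $A\in\mathfrak{L}(\mathfrak{h})$. For the series $\sum_{\ell}\mathbb{E}\langle L_{\ell}(t)X_{t}(\xi),AL_{\ell}(t)X_{t}(\xi)\rangle$, H2.1, H2.2 and Cauchy--Schwarz combine into the uniform envelope
\[
\sum_{\ell=1}^{\infty}\|L_{\ell}(t)x\|^{2} = -2\Re\langle x,G(t)x\rangle \le K(t)\|x\|_{C}^{2}\qquad (x\in\mathcal{D}(C)),
\]
which, together with $\sup_{r\le T}\mathbb{E}\|X_{r}(\xi)\|_{C}^{2}<\infty$, allows one to interchange limit and summation by dominated convergence. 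Continuity of the full expression then follows term-by-term from the strong $L^{2}$-continuity of $L_{\ell}(t)X_{t}(\xi)$ established above and the boundedness of $A$.
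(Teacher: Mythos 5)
There is a genuine gap in your argument, and it sits exactly where you flagged the ``main obstacle'': the claim that $\mathbb{E}\Vert C(X_{s}(\xi)-X_{t}(\xi))\Vert^{2}\to 0$. Your whole strategy reduces the continuity of $t\mapsto L_{k}(t)X_{t}(\xi)$ and of $t\mapsto G(t)X_{t}(\xi)$ to strong continuity of $X_{\cdot}(\xi)$ in the graph norm $\Vert\cdot\Vert_{C}$, but Hypothesis \ref{hyp:formal-conservativity} gives no route to that. The proposed fix --- apply It\^o's formula to $\Vert C_{n}X_{t}(\xi)\Vert^{2}$ with $C_{n}=C(I+C/n)^{-1}$ and let $n\to\infty$ --- breaks down at the limit: the drift of that It\^o identity is $2\Re\langle C_{n}X_{s}(\xi),C_{n}G(s)X_{s}(\xi)\rangle+\sum_{\ell}\Vert C_{n}L_{\ell}(s)X_{s}(\xi)\Vert^{2}$, and under Hypothesis \ref{hyp:formal-conservativity} there is no guarantee that $G(s)X_{s}(\xi)$ or $L_{\ell}(s)X_{s}(\xi)$ lie in $\mathcal{D}(C)$, so the individual terms can diverge as $n\to\infty$ and there is no dominating bound. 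A combined control of exactly these quantities is what Condition H4.3 of Hypothesis \ref{hyp:CF-inequality} provides, but that hypothesis is \emph{not} assumed in Lemma \ref{lema18}; the definition of a strong $C$-solution only gives $\sup_{s\le t}\mathbb{E}\Vert CX_{s}(\xi)\Vert^{2}<\infty$, not continuity of $s\mapsto\mathbb{E}\Vert CX_{s}(\xi)\Vert^{2}$. The same objection applies to your claim of strong $L^{2}$-continuity of $t\mapsto G(t)X_{t}(\xi)$.

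The paper's proof avoids graph-norm continuity altogether, and the mechanism is worth internalizing. It first gets \emph{weak} $L^{2}(\mathbb{P},\mathfrak{h})$-continuity of $G(t)X_{t}(\xi)$ and of each $L_{\ell}(t)X_{t}(\xi)$ (your splitting plus Lemma \ref{lema17} applied to $G(t)$ and $L_{\ell}(t)$, which are $C$-bounded, does exactly this); weak continuity of $G(t)X_{t}(\xi)$ paired against the \emph{strongly} convergent $AX_{t_{n}}(\xi)$ and $A^{\ast}X_{t_{n}}(\xi)$ already yields the continuity of the two $G$-terms in the statement, with no need for strong continuity of $G(t)X_{t}(\xi)$. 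Then the conservativity identity H2.2 is the key: taking $A=I$ gives
$\sum_{\ell}\mathbb{E}\Vert L_{\ell}(t_{n})X_{t_{n}}(\xi)\Vert^{2}=-2\,\mathbb{E}\Re\langle X_{t_{n}}(\xi),G(t_{n})X_{t_{n}}(\xi)\rangle\to-2\,\mathbb{E}\Re\langle X_{t}(\xi),G(t)X_{t}(\xi)\rangle=\sum_{\ell}\mathbb{E}\Vert L_{\ell}(t)X_{t}(\xi)\Vert^{2}$,
i.e.\ convergence of the \emph{summed} squared norms. Combined with weak convergence of each $L_{\ell}(t_{n})X_{t_{n}}(\xi)$ (which forces $\liminf_{n}\mathbb{E}\Vert L_{\ell}(t_{n})X_{t_{n}}(\xi)\Vert^{2}\ge\mathbb{E}\Vert L_{\ell}(t)X_{t}(\xi)\Vert^{2}$ for every $\ell$), this pins down $\lim_{n}\mathbb{E}\Vert L_{\ell}(t_{n})X_{t_{n}}(\xi)\Vert^{2}=\mathbb{E}\Vert L_{\ell}(t)X_{t}(\xi)\Vert^{2}$ term by term, and weak convergence plus norm convergence gives the strong continuity you want --- the ``weak plus norms implies strong'' upgrade is the right idea, but it must be applied to $L_{\ell}(t)X_{t}(\xi)$ directly via H2.2, not to $CX_{t}(\xi)$. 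Your treatment of the tail of the series should likewise be replaced by a Dini-type argument (monotone partial sums of continuous functions converging to a continuous limit converge uniformly on compacts), since a pointwise envelope alone does not justify interchanging the limit in $t$ with the infinite sum.
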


\begin{proof}
Suppose that $\left( t_{n}\right) _{n}$ is a sequence of non-negative real numbers satisfying
$t_n \rightarrow t$ as $n \rightarrow + \infty$.
By 
$
 \mathbb{E} \left( \sup_{s \in \left[0, t+1 \right]}  \left\Vert  X_s\left( \xi \right)  \right\Vert ^{2} \right)
 < \infty
$
(see, e.g., Th. 4.2.5 of  \cite{Prevot}),
using the dominated convergence theorem gives 
$$
 \mathbb{E} \left\Vert X_{t_{n}}\left( \xi\right) - X_{t}\left( \xi\right) \right\Vert^2 
 \longrightarrow_{n \rightarrow + \infty} 0 ,
$$
and hence 
$
AX_{t_{n}}\left( \xi\right) \longrightarrow_{n\rightarrow\infty}AX_{t}\left( \xi\right)
$
in $L^{2}\left( \mathbb{P},\mathfrak{h}\right)$.
For any $\psi \in L^{2}\left( \mathbb{P},\mathfrak{h}\right) $,
\begin{eqnarray*}
&&  \left\vert 
 \mathbb{E}\left\langle \psi, G \left( s\right) X_{s}\left( \xi\right)  \right\rangle
 - 
 \mathbb{E}\left\langle \psi, G \left( t\right) X_{t}\left( \xi\right)  \right\rangle
\right\vert
\\
& \leq & 
\mathbb{E} \left\Vert \psi \right\Vert   \left\Vert G \left( s \right) X_{s}\left( \xi\right) - G \left( t\right) X_{s}\left( \xi\right) \right\Vert
\\
& &
 +
 \left\vert 
 \mathbb{E}\left\langle \psi, G \left( t\right) X_{s}\left( \xi\right)  \right\rangle
 - 
 \mathbb{E}\left\langle \psi, G \left( t\right) X_{t}\left( \xi\right)  \right\rangle
\right\vert ,
\end{eqnarray*}
and so combining Lemma \ref{lema17}  with
$$
\mathbb{E}   \left\Vert G \left( s \right) X_{s}\left( \xi\right) - G \left( t\right) X_{s}\left( \xi\right) \right\Vert ^2
\leq
f_0 \left( s, t \right) \mathbb{E}   \left\Vert X_{s}\left( \xi\right) \right\Vert_C ^2 
$$
yields
\begin{equation}
\label{eq:13.1}
\lim_{s\rightarrow t}\mathbb{E}\left\langle \psi, G \left( s \right) X_{s}\left( \xi\right)
\right\rangle =\mathbb{E}\left\langle \psi, G \left( t \right)X_{t}\left( \xi\right)
\right\rangle .
\end{equation}
Therefore
\begin{equation}
\label{13.8}
\lim_{n\rightarrow\infty} \mathbb{E}\left\langle G \left( t_{n} \right) X_{t_{n}}\left( \xi\right) ,AX_{t_{n}}\left( \xi\right) \right\rangle
 =
 \mathbb{E}\left\langle G  \left( t \right) X_{t}\left( \xi\right) ,AX_{t}\left( \xi\right) \right\rangle
\end{equation}
(see, e.g., Section III.1.7 of \cite{Kato}). Analysis similar to that in  (\ref{eq:13.1}) shows 
$$
 \lim_{s\rightarrow t}\mathbb{E}\left\langle \psi, L_{\ell} \left( s \right) X_{s}\left( \xi\right)
\right\rangle 
=
\mathbb{E}\left\langle \psi, L_{\ell} \left( t \right)X_{t}\left( \xi\right) \right\rangle ,
$$
and hence 
\begin{equation}
\label{eq:13.2}
 L_{\ell}  \left( t_n \right) X_{t_{n}}\left( \xi\right) 
\longrightarrow_{n \rightarrow \infty}
L_{\ell}  \left( t \right) X_{t}\left( \xi\right)
\qquad
\text{weakly in } \ L^{2}\left( \mathbb{P},\mathfrak{h}\right) .
\end{equation}

According to (\ref{13.8})  with $A$ replaced by $A ^{*}$ we have the continuity of the function
$t \mapsto \mathbb{E}\left\langle A ^{*}X_{t}\left( \xi\right) ,G  \left( t \right) X_{t}\left( \xi\right) \right\rangle $,
and so 
$t \mapsto \mathbb{E}\left\langle X_{t}\left( \xi\right) ,AG  \left( t \right) X_{t}\left( \xi\right) \right\rangle $
is continuous. 
Moreover,
taking $A=I$ in (\ref{13.8}) we deduce that
$$
\mathbb{E} \Re\left\langle X_{t_{n}}\left( \xi\right) ,G   \left( t_n \right) X_{t_{n}}\left( \xi\right) \right\rangle 
\rightarrow_{n\rightarrow\infty}
\mathbb{E} \Re\left\langle X_{t}\left( \xi\right) ,G   \left( t \right) X_{t}\left( \xi\right) \right\rangle  .
$$
Applying Condition H2.2  we now get
\begin{equation}
\label{3.13}
\sum_{ \ell =1}^{\infty}
\mathbb{E}\left\Vert L_{\ell}  \left( t_n \right) X_{t_{n}}\left( \xi\right) \right\Vert ^{2}
\longrightarrow_{n\rightarrow\infty}
\sum_{\ell =1}^{\infty } \mathbb{E}\left\Vert L_{\ell}  \left( t \right) X_{t}\left( \xi\right) \right\Vert ^{2}.
\end{equation}
Combining (\ref{eq:13.2}) and (\ref{3.13}) yields 
$$
 \limsup_{n \rightarrow \infty} \mathbb{E}\left\Vert L_{\ell}  \left( t_n \right) X_{t_{n}}\left( \xi\right) \right\Vert ^{2}
\leq
\mathbb{E}\left\Vert L_{\ell}  \left( t \right) X_{t}\left( \xi\right) \right\Vert ^{2} 
$$
(see, e.g., proof of Lemma 7.16 of  \cite{MoraAP} for details)
which, together with  (\ref{eq:13.2}), implies that 
$L_{\ell} \left( t_n \right) X_{t_{n}}\left( \xi\right) $ converges strongly in $L^{2}\left( \mathbb{P},\mathfrak{h}\right)$ 
to $L_{\ell}  \left( t \right) X_{t}\left( \xi\right) $ as $n \rightarrow \infty$.
Therefore, 
$ t  \mapsto L_{\ell}  \left( t \right) X_{t}\left( \xi\right) $ is continuous as a function from 
$\left[ 0 , + \infty \right[$ to $L^{2}\left( \mathbb{P},\mathfrak{h}\right)$.

Using Condition H2.2 we obtain that
$
\sum_{ \ell = 1}^{n}
\mathbb{E}\left\langle L_{\ell}  \left( t \right) X_{t}\left( \xi\right) , AL_{\ell}  \left( t \right) X_{t}\left( \xi\right) \right\rangle
$ 
converges to
$
\sum_{\ell = 1}^{\infty}
\mathbb{E}\left\langle L_{\ell}  \left( t \right) X_{t}\left( \xi\right) , AL_{\ell}  \left( t \right) X_{t}\left( \xi\right) \right\rangle
$ 
as $n\rightarrow\infty$ uniformly on any finite interval.
Since
$$
\mathbb{E}\left\langle L_{\ell}  \left( t_n \right) X_{t_{n}}\left( \xi\right) ,AL_{ \ell}  \left( t_n \right) X_{t_{n}}\left( \xi\right) \right\rangle
\longrightarrow_{n\rightarrow\infty }
\mathbb{E}\left\langle L_{\ell}  \left( t \right) X_{t}\left( \xi\right) ,AL_{\ell}  \left( t \right) X_{t}\left( \xi\right) \right\rangle ,
$$
the map 
$
t \mapsto 
\sum_{\ell=1}^{\infty}
\mathbb{E}\left\langle L_{\ell}  \left( t \right) X_{t}\left( \xi\right) ,AL_{\ell}  \left( t \right) X_{t}\left( \xi\right) \right\rangle
$
is continuous.
\end{proof}

\begin{lemma}
\label{lema61}
 Let Hypothesis \ref{hyp:formal-conservativity} hold,
 except Condition H2.4.
 For any $\xi \in L_{C}^{2}\left( \mathbb{P},\mathfrak{h}\right) $,
we define
\begin{eqnarray*}
 \mathcal{L}_{*} \left( \xi ,t \right)
& = &
\mathbb{E} \left\vert   G  \left( t \right)  X_{t}\left( \xi\right) \rangle  \langle X_{t}\left( \xi\right)  \right\vert 
+
 \mathbb{E}  \left\vert X_{t}\left( \xi\right)  \rangle  \langle G  \left( t \right)  X_{t}\left( \xi\right) \right\vert 
 \\
 & &
 + \sum_{\ell=1}^{\infty}\mathbb{E} \left\vert  L_{\ell}  \left( t \right)  X_{t}\left( \xi\right) \rangle  \langle L_{\ell}  \left( t \right)  X_{t}\left( \xi \right) \right\vert .
\end{eqnarray*}
Then $\mathcal{L}_{*} \left( \xi ,t \right)$ is a trace-class operator  on $\mathfrak{h}$ whose trace-norm is uniformly bounded with respect to $t$ on bounded time intervals;
the series involved in the definition of $\mathcal{L}_{*}$ converges in $\mathfrak{L} _{1}\left( \mathfrak{h} \right)$.
\end{lemma}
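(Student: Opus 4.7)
The plan is to bound the trace norm of each summand in $\mathcal{L}_*(\xi,t)$ separately using only Hypotheses H2.1 and H2.2 together with the defining bounds of a strong $C$-solution (Definition \ref{def:regular-sol}), and then to promote the resulting summability of traces to convergence of the series in $\mathfrak{L}_1(\mathfrak{h})$ by exploiting positivity of its terms. For the first two (cross) terms, I would use that a rank-one operator $\ketbra{u}{v}$ has trace norm $\|u\|\|v\|$, so Cauchy--Schwarz in $L^2(\mathbb{P})$ yields
\[
\bigl\lVert \mathbb{E}\,\ketbra{G(t) X_t(\xi)}{X_t(\xi)}\bigr\rVert_1
\leq \bigl(\mathbb{E}\|G(t) X_t(\xi)\|^2\bigr)^{1/2}\bigl(\mathbb{E}\|X_t(\xi)\|^2\bigr)^{1/2}.
\]
By H2.1 one has $\mathbb{E}\|G(t) X_t(\xi)\|^2 \leq K(t)\,\mathbb{E}\|X_t(\xi)\|_C^2$, and Definition \ref{def:regular-sol} supplies bounds on $\mathbb{E}\|X_s(\xi)\|^2$ and $\mathbb{E}\|C X_s(\xi)\|^2$ that are uniform in $s$ on bounded intervals; the second cross term is treated identically.

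For the series, I would use H2.2 to re-express the norm sum through $G(t)$: for every $x \in \mathcal{D}(C)$,
\[
\sum_{\ell=1}^{\infty}\|L_\ell(t) x\|^2 = -2\Re\langle x, G(t) x\rangle \leq \|x\|^2 + \|G(t) x\|^2 \leq \|x\|^2 + K(t)\,\|x\|_C^2,
\]
where the last inequality uses H2.1. Substituting $x = X_t(\xi)$, which lies in $\mathcal{D}(C)$ a.s., and taking expectation gives
\[
\sum_{\ell=1}^{\infty}\Tr\bigl(\mathbb{E}\,\ketbra{L_\ell(t)X_t(\xi)}{L_\ell(t)X_t(\xi)}\bigr) = \sum_{\ell=1}^{\infty}\mathbb{E}\|L_\ell(t)X_t(\xi)\|^2 \leq \mathbb{E}\|X_t(\xi)\|^2 + K(t)\,\mathbb{E}\|X_t(\xi)\|_C^2,
\]
and the right-hand side is again bounded uniformly in $t$ on bounded intervals.

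To upgrade this summability of traces to convergence of the series in $\mathfrak{L}_1(\mathfrak{h})$, I exploit positivity. The partial sums
\[
S_N := \sum_{\ell=1}^{N}\mathbb{E}\,\ketbra{L_\ell(t)X_t(\xi)}{L_\ell(t)X_t(\xi)}
\]
form a monotone nondecreasing sequence of positive trace-class operators with $\sup_N \Tr(S_N)<\infty$. Defining $S$ through the bounded sesquilinear form $(\varphi,\psi)\mapsto \sum_{\ell}\mathbb{E}\,\langle\varphi,L_\ell(t)X_t(\xi)\rangle\langle L_\ell(t)X_t(\xi),\psi\rangle$ yields a positive operator with $S_N \uparrow S$ strongly and $\Tr(S)=\lim_N \Tr(S_N)$. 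Since $S - S_N$ is positive trace-class, one has
\[
\|S - S_N\|_1 = \Tr(S - S_N) \longrightarrow 0.
\]
Combined with the finite trace-norm estimates for the two cross terms, this proves $\mathcal{L}_*(\xi,t)\in \mathfrak{L}_1(\mathfrak{h})$ with trace norm bounded uniformly on every bounded time interval. The main obstacle, modest in this setting, is exactly this monotone-trace upgrade; it is rendered routine by the identity $\|S-S_N\|_1 = \Tr(S-S_N)$ for positive trace-class operators, which converts the scalar trace summability proved via H2.2 into genuine $\mathfrak{L}_1$-convergence of the series.
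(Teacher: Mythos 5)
Your proposal is correct and follows essentially the same route as the paper: both bound the two cross terms via $\left\Vert \left\vert u\rangle\langle v\right\vert\right\Vert_{1}=\left\Vert u\right\Vert \left\Vert v\right\Vert$ together with H2.1, and both control the series through H2.2, which converts $\sum_{\ell}\mathbb{E}\left\Vert L_{\ell}\left( t\right) X_{t}\left( \xi\right)\right\Vert^{2}$ into a quantity dominated by $\mathbb{E}\left( \left\Vert X_{t}\left( \xi\right)\right\Vert \left\Vert G\left( t\right) X_{t}\left( \xi\right)\right\Vert\right)$. The only difference is cosmetic: you spell out the positivity/monotone-trace argument that upgrades summability of traces to $\mathfrak{L}_{1}\left( \mathfrak{h}\right)$-convergence, a step the paper delegates to Lemma 7.3 of \cite{MoraAP}.
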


\begin{proof}
By Condition H2.2, 
using 
$
\left\Vert 
\left\vert x \rangle\langle y \right\vert  
\right\Vert_{1}
= 
\left\Vert x \right\Vert \left\Vert y \right\Vert
$ 
and Lemma 7.3 of \cite{MoraAP} we get
\begin{eqnarray*}
&&  \left\Vert  \mathbb{E}   \left\vert   G  \left( t \right) X_{t}\left( \xi\right) \rangle  \langle X_{t} \left( \xi\right)  \right\vert  \right\Vert_{1}
 + 
 \left\Vert  \mathbb{E} \left\vert X_{t}\left( \xi\right)  \rangle   \langle G \left( t \right) X_{t}\left( \xi\right) \right\vert \right\Vert_{1}
\\
& & \qquad
+ 
 \sum_{ \ell=1}^{\infty}
   \left\Vert \mathbb{E}  \left\vert  L_{\ell} \left( t \right) X_{t}\left( \xi\right) \rangle \langle L_{\ell} \left( t \right) X_{t}\left( \xi \right) \right\vert
  \right\Vert_{1} 
 \\
&  \leq &
4 \mathbb{E} \left( \left\Vert X_{t}\left( \xi \right) \right\Vert \left\Vert G \left( t \right) X_{t}\left( \xi \right) \right\Vert \right)
\leq
K \left( t \right) \sqrt{ \mathbb{E} \left\Vert  \xi  \right\Vert^{2} } 
\sqrt{ \mathbb{E} \left\Vert X_{t}\left( \xi \right)    \right\Vert_{C}^{2} },
\end{eqnarray*}
where the last inequality follows from Condition H2.1.
\end{proof}

Applying Lemma 7.3 of \cite{MoraAP}  and Lemma  \ref{lema18} we easily obtain Lemma \ref{lema61b}.

\begin{lemma}
 \label{lema61b}
 Suppose that Hypothesis \ref{hyp:formal-conservativity} hold, $\xi \in L_{C}^{2}\left( \mathbb{P},\mathfrak{h}\right) $,
 and $A \in \mathfrak{L} \left( \mathfrak{h} \right)$.
 Then,
 $t \mapsto   \Tr \left( A   \mathcal{L}_{*} \left( \xi ,t \right) \right)$ is continuous 
 as a function from $\left[0, \infty \right[$ to $ \mathbb{C}$,
 and 
\begin{eqnarray*}
 \Tr \left(A   \mathcal{L}_{*} \left( \xi ,t \right) \right)
 & = &
 \mathbb{E}\left\langle X_{t}\left( \xi\right) ,AG  \left( t \right)  X_{t}\left( \xi\right) \right\rangle
+
\mathbb{E}\left\langle G \left( t \right) X_{t}\left( \xi\right) ,AX_{t}\left( \xi\right) \right\rangle
\\
& &
+ \sum_{\ell =1}^{\infty }
\mathbb{E}\left\langle L_{\ell} \left( t \right) X_{t}\left( \xi\right) ,A L_{\ell} \left( t \right) X_{t}\left( \xi\right) \right\rangle .
\end{eqnarray*}
Here, 
$  \mathcal{L}_{*} \left( \xi ,t \right) $ is as in Lemma \ref{lema61}.
\end{lemma}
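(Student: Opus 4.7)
The plan is to use the trace identity $\Tr(A|x\rangle\langle y|) = \langle y, Ax\rangle$ on each rank-one contribution appearing in the definition of $\mathcal{L}_{*}(\xi,t)$, and then to read off the continuity statement directly from Lemma \ref{lema18}.

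First I would record that Lemma \ref{lema61} guarantees that $\mathcal{L}_{*}(\xi,t)$ lies in $\mathfrak{L}_{1}(\mathfrak{h})$ and that the infinite series in its definition converges in the trace norm. Since the linear functional $\rho \mapsto \Tr(A\rho)$ on $\mathfrak{L}_{1}(\mathfrak{h})$ is continuous whenever $A \in \mathfrak{L}(\mathfrak{h})$, it commutes with this series, so $\Tr(A\,\mathcal{L}_{*}(\xi,t))$ decomposes as the corresponding convergent sum of traces of the individual rank-one expectations.

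Next I would invoke Lemma 7.3 of \cite{MoraAP}, which interchanges $\Tr$ with $\mathbb{E}$ on operators of the form $\mathbb{E}|x\rangle\langle y|$, to obtain
\begin{align*}
\Tr\bigl(A\, \mathbb{E}|G(t) X_{t}(\xi)\rangle\langle X_{t}(\xi)|\bigr) &= \mathbb{E}\langle X_{t}(\xi),\, A G(t) X_{t}(\xi)\rangle, \\
\Tr\bigl(A\, \mathbb{E}|X_{t}(\xi)\rangle\langle G(t) X_{t}(\xi)|\bigr) &= \mathbb{E}\langle G(t) X_{t}(\xi),\, A X_{t}(\xi)\rangle, \\
\Tr\bigl(A\, \mathbb{E}|L_{\ell}(t) X_{t}(\xi)\rangle\langle L_{\ell}(t) X_{t}(\xi)|\bigr) &= \mathbb{E}\langle L_{\ell}(t) X_{t}(\xi),\, A L_{\ell}(t) X_{t}(\xi)\rangle.
\end{align*}
Summing the last identity over $\ell$ (legitimately, by the preceding paragraph) and adding the first two yields the claimed formula for $\Tr(A\,\mathcal{L}_{*}(\xi,t))$.

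The continuity assertion is then immediate, since Lemma \ref{lema18} states precisely that the sum on the right-hand side of that formula is a continuous function of $t$. The only subtlety in the whole argument is justifying the termwise passage of the bounded linear functional $\Tr(A\,\cdot\,)$ through the infinite series defining $\mathcal{L}_{*}(\xi,t)$, and this is handled entirely by the trace-norm convergence supplied by Lemma \ref{lema61}; this is why the authors describe the derivation as routine.
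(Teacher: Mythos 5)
Your proposal is correct and follows essentially the same route as the paper, which simply invokes Lemma 7.3 of \cite{MoraAP} (for the trace identity $\Tr\left(A\,\mathbb{E}\left\vert u\rangle\langle v\right\vert\right)=\mathbb{E}\left\langle v,Au\right\rangle$ applied to each term) together with Lemma \ref{lema18} (for the continuity of the resulting sum). Your extra remark justifying the termwise passage of $\Tr\left(A\,\cdot\,\right)$ through the trace-norm-convergent series via Lemma \ref{lema61} is a sensible elaboration of a step the paper leaves implicit.
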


\begin{lemma}
\label{lema23}
 Adopt  Hypothesis \ref{hyp:formal-conservativity},
 together with  $\xi \in L_{C}^{2}\left( \mathbb{P},\mathfrak{h}\right) $.
Then 
\begin{equation}
\label{13.6}
\rho_{t}
=
 \mathbb{E}\left\vert \xi \rangle \langle \xi \right\vert 
+
\int_{0}^{t} \mathcal{L}_{*} \left( \xi ,s \right)  ds,
\end{equation}
where 
$t\geq0$ 
and
$\mathcal{L}_{*} \left( \xi ,s \right)$ is as in Lemma \ref{lema61};
we understand the above integral in the sense of Bochner integral in $\mathfrak{L}_{1}\left( \mathfrak{h}\right) $.
\end{lemma}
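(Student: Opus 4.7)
The plan is to reduce the operator-valued identity to a family of scalar identities indexed by $A\in\mathfrak{L}(\mathfrak{h})$, obtained by applying It\^o's formula to the complex-valued process $t\mapsto \langle X_{t}(\xi),A X_{t}(\xi)\rangle$, and then to reassemble the result in $\mathfrak{L}_{1}(\mathfrak{h})$ using Lemmas \ref{lema61} and \ref{lema61b}.

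First, I would fix $A\in\mathfrak{L}(\mathfrak{h})$ and apply It\^o's formula to $\langle X_{t}(\xi),A X_{t}(\xi)\rangle$. Since by Definition \ref{def:regular-sol} the process $X_{t}(\xi)$ lies in $\mathcal{D}(C)$ almost surely, Condition H2.1 guarantees that $G(s)X_{s}(\xi)$ and $L_{\ell}(s)X_{s}(\xi)$ are honest $L^{2}$-integrands, and Condition H2.2 controls the Laplacian part. The resulting identity reads
\begin{eqnarray*}
\langle X_{t}(\xi),A X_{t}(\xi)\rangle
&=&
\langle \xi,A\xi\rangle
+ \int_{0}^{t}\!\!\Bigl(\langle G(s)X_{s},A X_{s}\rangle+\langle X_{s},A G(s)X_{s}\rangle\\
&&+\sum_{\ell=1}^{\infty}\langle L_{\ell}(s)X_{s},A L_{\ell}(s)X_{s}\rangle\Bigr)ds
\ +\ M_{t},
\end{eqnarray*}
with $M_{t}$ a sum of stochastic integrals. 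Using $\sup_{s\le t}\mathbb{E}\|CX_{s}(\xi)\|^{2}<\infty$, Conditions H2.1 and Hypothesis \ref{hyp:L-G-C-domain}, each integrand in $M_{t}$ is square-integrable, so $M_{t}$ is a genuine martingale with zero expectation. Taking expectation then yields a scalar identity whose drift, by Lemma \ref{lema61b}, equals $\Tr\bigl(A\,\mathcal{L}_{*}(\xi,s)\bigr)$.

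Second, I would combine these to obtain, for every $A\in\mathfrak{L}(\mathfrak{h})$ and every $t\ge 0$,
\[
\Tr\bigl(A\,\rho_{t}\bigr)
=\Tr\bigl(A\,\mathbb{E}|\xi\rangle\langle\xi|\bigr)
+\int_{0}^{t}\Tr\bigl(A\,\mathcal{L}_{*}(\xi,s)\bigr)\,ds.
\]
To turn this scalar equality into the operator identity (\ref{13.6}), I would verify that the map $s\mapsto \mathcal{L}_{*}(\xi,s)$ is Bochner integrable in $\mathfrak{L}_{1}(\mathfrak{h})$ on $[0,t]$. Continuity of this map in the trace norm follows from the $L^{2}$-continuity $s\mapsto L_{\ell}(s)X_{s}(\xi)$ and $s\mapsto G(s)X_{s}(\xi)$ supplied by Lemma \ref{lema18} (together with $s\mapsto X_{s}(\xi)$), via the elementary estimate $\||u\rangle\langle v\|_{1}=\|u\|\|v\|$; uniform boundedness of the trace norm on bounded intervals is exactly the content of Lemma \ref{lema61}. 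Since $\mathfrak{L}_{1}(\mathfrak{h})$ is separable, strong measurability follows and $\int_{0}^{t}\mathcal{L}_{*}(\xi,s)\,ds$ exists as a Bochner integral.

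Third, commuting $\Tr(A\,\cdot)$ with the Bochner integral (a bounded linear functional) and applying a standard Fubini argument to the stochastic part converts the scalar identity into
\[
\Tr\!\Bigl(A\Bigl(\rho_{t}-\mathbb{E}|\xi\rangle\langle\xi|-\int_{0}^{t}\mathcal{L}_{*}(\xi,s)\,ds\Bigr)\Bigr)=0
\]
for every $A\in\mathfrak{L}(\mathfrak{h})$, which forces (\ref{13.6}) in $\mathfrak{L}_{1}(\mathfrak{h})$. The main obstacle is the careful handling of the stochastic calculus under unbounded coefficients: one must justify It\^o's formula on $\mathcal{D}(C)$ (using $\pi_{C}$ if needed, as in Definition \ref{def:regular-sol}) and control the $L^{2}$-norms of the stochastic integrands uniformly, so that no truncation-limit argument is left dangling; all the analytic ingredients for this are, however, already secured by Hypothesis \ref{hyp:formal-conservativity} and Lemma \ref{lema18}.
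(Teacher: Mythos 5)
Your overall strategy is essentially the one the paper follows: apply It\^o's formula to a quadratic functional of $X_t(\xi)$, kill the martingale part by taking expectations, identify the drift with $\mathcal{L}_{*}(\xi,s)$, and upgrade the resulting family of scalar identities to an identity in $\mathfrak{L}_{1}(\mathfrak{h})$ via Bochner integrability and duality. (The paper tests against vectors $x\in\mathfrak{h}$, working with the $\mathfrak{h}$-valued process $\langle X_{t}(\xi),x\rangle X_{t}(\xi)$, while you test against $A\in\mathfrak{L}(\mathfrak{h})$; that difference is cosmetic.)

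The genuine gap is the claim that ``each integrand in $M_{t}$ is square-integrable, so $M_{t}$ is a genuine martingale,'' justified only by $\sup_{s\le t}\mathbb{E}\Vert CX_{s}(\xi)\Vert^{2}<\infty$ and H2.1. The integrand of the $\ell$-th stochastic integral is bounded by $2\Vert A\Vert\,\Vert X_{s}(\xi)\Vert\,\Vert L_{\ell}(s)X_{s}(\xi)\Vert$, and after using H2.2 and H2.1 its square is controlled only by a constant times $\Vert X_{s}(\xi)\Vert^{3}\,\Vert X_{s}(\xi)\Vert_{C}$. This is a fourth-order expression in $X_{s}(\xi)$, whereas Definition \ref{def:regular-sol} and Hypothesis \ref{hyp:formal-conservativity} supply only second-moment bounds on $\Vert X_{s}(\xi)\Vert$ and $\Vert CX_{s}(\xi)\Vert$; no fourth or higher moments are available. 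Hence square-integrability of the integrand cannot be deduced, and $M_{t}$ is a priori only a local martingale. This is exactly why the paper localizes with $\tau_{n}=\inf\{s\ge 0:\Vert X_{s}(\xi)\Vert>n\}$: on $[0,\tau_{n}]$ the factor $\Vert X_{s}(\xi)\Vert^{3}$ is bounded by $n^{3}$ and the remaining factor is integrable by H2.1, giving $\mathbb{E}M_{t\wedge\tau_{n}}=0$; one then lets $n\to\infty$ using $\mathbb{E}\bigl(\sup_{s\le t}\Vert X_{s}(\xi)\Vert^{2}\bigr)<\infty$ and dominated convergence on the drift. You flag the truncation issue at the end but assert that the needed uniform $L^{2}$ control ``is already secured''; it is not, and the localization-and-limit step is unavoidable. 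The remaining steps of your argument (identification of the drift with $\Tr\bigl(A\,\mathcal{L}_{*}(\xi,s)\bigr)$ via Lemma \ref{lema61b}, Bochner integrability from Lemmas \ref{lema61} and \ref{lema61b} together with separability of $\mathfrak{L}_{1}(\mathfrak{h})$ and Pettis' theorem, and the final duality step) are sound and match the paper.
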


\begin{proof}
Fix $x \in \mathfrak{h}$,
and 
choose 
  $
 \tau_{n}=\inf\left\{ s\geq0:\left\Vert X_{s}\left( \xi\right) \right\Vert >n\right\} 
 $,
 with $n \in \mathbb{N}$.
 Applying the complex It\^{o} formula we obtain that  
\begin{equation}
\label{13.5}
 \left\langle X_{t\wedge \tau_{n}}\left( \xi\right) ,x\right\rangle X_{t\wedge \tau_{n}}\left( \xi\right)
=
\left\langle \xi,x\right\rangle \xi+\mathbb{E}\int_{0}^{t\wedge \tau_{n}} L_{x} \left(X_{s}\left( \xi\right) , s \right) ds
+
M_t,
\end{equation}
where
$$
M_t
=
\sum_{\ell = 1}^{\infty}\int_{0}^{t\wedge \tau_{n}}\left( \left\langle
X_{s}\left( \xi\right) ,x\right\rangle L_{\ell} \left( s \right) X_{s}\left( \xi\right)
+\left\langle L_{\ell} \left( s\right) X_{s}\left( \xi\right) ,x\right\rangle X_{s}\left(
\xi\right) \right) dW_{s}^{\ell}
$$
and  for any $z \in \mathcal{D}\left( C \right)$,
$$
L_{x} \left(z , s \right) 
= 
\left\langle z ,x\right\rangle G \left( s \right) z
+ \left\langle G \left( s\right) z ,x\right\rangle z
+\sum_{k=1}^{\infty}\left\langle L_{k} \left( s\right) z ,x\right\rangle L_{k} \left( s\right) z .
$$
According to Condition H2.2 we have
 \begin{eqnarray*}
&&
\mathbb{E}\sum_{\ell=1}^{\infty}\int_{0}^{t\wedge \tau_{n}}\left\Vert \left\langle X_{s}\left( \xi\right) ,x\right\rangle L_{\ell}  \left( s \right) X_{s}\left(
\xi\right) +\left\langle L_{\ell}  \left( s \right) X_{s}\left( \xi\right) ,x\right\rangle X_{s}\left( \xi\right) \right\Vert ^{2}ds 
\\
& & \leq 
4 n^{3} \left\Vert x\right\Vert ^{2}
\mathbb{E} \int_{0}^{t\wedge \tau_{n}} \left\Vert G  \left( s \right) X_{s} \right\Vert  ds.
\end{eqnarray*}
Therefore 
$
\mathbb{E} M_t =0
$
by Condition H2.1,
and 
so (\ref{13.5}) yields
\begin{equation}
\label{13.1}
\mathbb{E} \left\langle X_{t\wedge \tau_{n}}\left( \xi\right) ,x\right\rangle X_{t\wedge \tau_{n}}\left( \xi\right)
=
 \mathbb{E}\left\langle \xi,x\right\rangle \xi+\mathbb{E}\int_{0}^{t\wedge \tau_{n}} L_{x} \left(X_{s}\left( \xi\right) ,s \right) ds.
\end{equation}

We will take the limit as $n \rightarrow \infty$ in (\ref{13.1}).
Since  $ X \left( \xi\right)$ has continuous sample paths,
$\tau_{n}\nearrow_{n\rightarrow\infty}\infty$. 
By  H2.1 and H2.2,  
applying the dominated convergence yields
$$
\lim_{n \rightarrow \infty} \mathbb{E}\int_{0}^{t\wedge \tau_{n}} L_{x} \left(X_{s}\left( \xi\right) , s\right) ds
=
\mathbb{E}\int_{0}^{t} L_{x} \left(X_{s}\left( \xi\right) ,s \right) ds.
$$
Combining
$
 \mathbb{E} \left( \sup_{s \in \left[0, t+1 \right]}  \left\Vert  X_s\left( \xi \right)  \right\Vert ^{2} \right)
 < \infty
$
with the dominated convergence theorem gives 
$
  \lim_{n \rightarrow \infty}  \mathbb{E} \left\langle X_{t\wedge \tau_{n}}\left( \xi\right) , x\right\rangle X_{t\wedge \tau_{n}}\left( \xi\right)
  =
  \mathbb{E} \left\langle X_{t}\left( \xi\right) ,x\right\rangle X_{t}\left( \xi\right).
$
Then, 
letting first $n \rightarrow \infty$ in (\ref{13.1}) and then using  Fubini's theorem we get
\begin{equation}
\label{6.3}
\mathbb{E}\left\langle X_{t}\left( \xi\right) ,x\right\rangle X_{t}\left( \xi\right) 
=
\mathbb{E}\left\langle \xi,x\right\rangle \xi
+ 
\int_{0}^{t} \mathbb{E} L_{x} \left(X_{s}\left( \xi\right) , s\right).
\end{equation}

By Condition H2.2, 
the dominated convergence theorem leads to
$$
\mathbb{E} \sum_{k=1}^{\infty} \left\langle L_{k} \left( s \right) X_{s}\left( \xi\right) ,x\right\rangle L_{k}  \left( s \right) X_{s}\left( \xi\right)
=
\sum_{k=1}^{\infty}\mathbb{E} \left\langle L_{k}  \left( s \right)  X_{s}\left( \xi\right) ,x\right\rangle L_{k}  \left( s \right)  X_{s}\left( \xi\right) ,
$$
and so Lemma 7.3 of \cite{MoraAP}  yields 
$
\mathbb{E} L_{x} \left(X_{s}\left( \xi\right) ,s \right)
=
\mathcal{L}_{*} \left( \xi ,s \right) x
$,
hence
\begin{equation}
\label{6.1}
 \int_{0}^{t} \mathbb{E} L_{x} \left(X_{s}\left( \xi\right) , s \right)
=
\int_{0}^{t} \mathcal{L}_{*} \left( \xi ,s \right) x  ds .
\end{equation}

Since  the dual of $ \mathfrak{L}_{1}\left( \mathfrak{h}\right)$ consists in all linear maps 
$\varrho \mapsto \Tr \left( A \varrho \right)$ with $A \in  \mathfrak{L}\left( \mathfrak{h}\right)$,
Lemma \ref{lema61b} implies that
$
t \mapsto \mathcal{L}_{*} \left( \xi ,t \right)
$
is measurable as a function from $\left[0, \infty \right[$
to $ \mathfrak{L}_{1}\left( \mathfrak{h}\right)$.
Furthermore,
using Lemma \ref{lema61} we get that 
$
t \mapsto \mathcal{L}_{*} \left( \xi ,t \right)
$
is a Bochner integrable $\mathfrak{L}_{1}\left( \mathfrak{h}\right) $-valued function on bounded intervals.
Then
(\ref{6.3}), together with  (\ref{6.1}), 
gives (\ref{13.6}).
\end{proof}

\begin{proof}{\it (of Theorem \ref{teor10})}
According to Theorem 3.2  of \cite{MoraAP} we have
$$
A G \left( t \right) \rho_{t}  
= 
\mathbb{E} \left\vert  A G  \left( t \right) X_{t} \left( \xi\right) \rangle\langle  X_{t}\left( \xi\right) \right\vert .
$$
Since $G \left( t \right), L_{1}  \left( t \right), L_{2}  \left( t \right), \ldots $ are closable, 
$G \left( t \right)^{\ast}, L_{1} \left( t \right)^{\ast}, L_{2} \left( t \right)^{\ast}, \ldots $ are densely defined 
and $G \left( t \right)^{\ast \ast}$, $L_{1} \left( t \right)^{\ast \ast}, \ldots$  coincide with the closures of  $G \left( t \right), L_{1} \left( t \right), \ldots$ respectively 
(see, e.g., Theorem III.5.29 of \cite{Kato}).
Now,
Theorem 3.2  of \cite{MoraAP}   yields
$
A \rho_{t} G \left( t \right)^{\ast}  = \mathbb{E} \left\vert  AX_{t}\left( \xi\right) \rangle\langle G \left( t \right)X_{t}\left( \xi\right)  \right\vert
$
and
$$
 A L_{k} \left( t \right) \rho_{t} L_{k} \left( t \right)^{\ast} 
 =
\mathbb{E} \left\vert A L_{k} \left( t \right)X_{t}\left( \xi\right) \rangle\langle  L_{k} \left( t \right) X_{t}\left( \xi\right) \right\vert .
$$
Therefore
\begin{equation}
\label{eq:6.4}
 \mathcal{L}_{*} \left( \xi ,t \right)
=
G  \left( t \right) \rho_{t}  +  \rho_{t} G \left( t \right)^{\ast} 
 +\sum_{k=1}^{\infty} L_{k} \left( t \right) \rho_{t} L_{k} \left( t \right)^{\ast} ,
\end{equation}
where $\mathcal{L}_{*} \left( \xi ,t \right)$  is as in Lemma \ref{lema61}.
Combining (\ref{eq:6.4}) with Lemma \ref{lema23} we get (\ref{3.11}),
and so 
$
\Tr\left( A\rho_{t} \right) 
=
\Tr\left( A \varrho \right) 
+
\int_{0}^{t} 
\Tr\left(
A \mathcal{L}_{*} \left( \xi ,s \right)
\right) ds
$
for all $t \geq 0$.
Using the continuity of $\mathcal{L}_{*} \left( \xi ,\cdot \right)$ we obtain (\ref{3.12}).
\end{proof}

\subsubsection{Proof of Theorem \ref{th:SesqForm}}
\label{sec:ProofTh-SesqForm}

\begin{proof}
For any $x,y \in \mathcal{D}\left( C \right)$ we set
$
\lbrack x, y \rbrack
=
\mathbb{E} \left\langle X_{t} \left( x \right), A  X_{t} \left( y \right) \right\rangle 
$.
According to Definition \ref{def:regular-sol} we have
 $$
 \left\vert \lbrack x, y \rbrack \right\vert
 =
  \left\vert  \mathbb{E} \left\langle X_{t} \left( x \right), A  X_{t} \left( y \right) \right\rangle \right\vert
  \leq
  K \left( t \right) \left\Vert A \right\Vert  \left\Vert x \right\Vert  \left\Vert y \right\Vert 
  \qquad
 \forall x,y \in \mathcal{D}\left( C \right) .
 $$
 Since $\mathcal{D}\left( C \right)$ is dense in $\mathfrak{h}$,
  $\lbrack \cdot, \cdot \rbrack$ 
 can be extended uniquely to a sesquilinear form 
 $\lbrack \cdot, \cdot \rbrack$ over $\mathfrak{h} \times \mathfrak{h}$ 
satisfying
$
 \left\vert \lbrack x, y \rbrack \right\vert
\leq
  K \left( t \right) \left\Vert A \right\Vert \left\Vert x \right\Vert  \left\Vert y \right\Vert
$
for any $x,y \in \mathfrak{h}$.
Hence,
there exists a unique bounded operator $\mathcal{T}_{t}\left( A \right)$ on $\mathfrak{h}$ such that $ \left\vert \lbrack x, y \rbrack \right\vert = \left\langle x, \mathcal{T}_{t}\left( A \right) y \right\rangle$ for all $x,y$ in $\mathfrak{h}$.
Moreover, 
$\left\Vert \mathcal{T}_{t}\left( A \right) \right\Vert \leq  K \left( t \right) \left\Vert  A \right\Vert$.
\end{proof}

\subsubsection{Proof of Theorem \ref{th:UniquenessLinearHE}}
\label{sec:ProofTh-UniquenessLinearHE}

\begin{proof}
Using  It\^o's formula
we will prove that 
for all  $x,y \in \mathcal{D}\left(  C \right)$,
\begin{equation}
 \label{4.5}
 \mathbb{E} \left\langle  X_{t}\left( x \right),  A   X_{t} \left( y \right) \right\rangle
 =
  \left\langle x, \mathcal{A}_{t} y  \right\rangle.
\end{equation}
This, together with Theorem \ref{th:SesqForm}, implies 
 $\mathcal{A}_{t} = \mathcal{T}_{t}\left( A \right)$.
 
Motivated by  $\mathcal{A}_t$ is only a weak solution,
we fix an orthonormal basis $\left( e_{n} \right)_{n \in \mathbb{N}}$ of $\mathfrak{h}$
and  consider the function 
$
F_{n} :  \left[0 , t \right] \times \mathfrak{h}  \times \mathfrak{h}  
\rightarrow 
\mathbb{C}
$ 
defined by
\[
F_{n} \left( s, u, v \right) = \left\langle  R_{n} \overline{u}, \mathcal{A}_{t-s}  R_{n} v \right\rangle,
\]
where 
$R_{n} = n \left(n+C \right)^{-1}$
and 
 $
\bar{u}
=
\sum_{n \in \mathbb{N}} \overline{\left\langle e_{n}, u \right\rangle} e_{n}
$.
Since the range of $R_{n}$ is contained in $\mathcal{D}\left( C \right)$, 
\begin{equation}
\label{4.6}
\frac{d}{ds} F_{n} \left( s, u, v \right) = - g \left( s, R_{n} \overline{u}, R_{n} v \right) ,
\end{equation}
with
$
g \left( s, x, y \right)
=
\left\langle  x, \mathcal{A}_{t-s}  G y \right\rangle
+ \left\langle   G x, \mathcal{A}_{t-s} y \right\rangle
+ \sum_{k=1}^{\infty} \left\langle   L_{k}  x, \mathcal{A}_{t-s}  L_{k}  y \right\rangle 
$.
We have that
$t \longmapsto  \left\langle u, \mathcal{A}_{t} v \right\rangle$ is continuous
for all $u,v \in \mathfrak{h}$,
and so 
combining $C R_{n} \in \mathfrak{L}\left( \mathfrak{h}\right)$ with Hypothesis \ref{hyp:Well-posed} 
we get  the uniformly continuity of 
$
\left(s, u, v \right)
 \longmapsto
g \left( s, R_{n} \overline{u}, R_{n} v \right)
$
on bounded subsets of $\left[ 0, t \right] \times \mathfrak{h} \times \mathfrak{h}$.
Then,
we can apply It\^o's formula to 
$ F_{n} \left( s \wedge\tau_{j}, \overline{X_{s}^{\tau_{j}}\left( x \right)}, X_{s}^{\tau_{j}}\left( y \right) \right) $,
with
$$
\tau _{j} = 
\inf{ \left\{  t \geq0: \left\Vert  X_{t}\left( x \right)  \right\Vert + \left\Vert  X_{t}\left( y \right)  \right\Vert> j \right\} } .
$$

Fix $x,y \in \mathcal{D}\left(  C \right)$.
Combining  It\^o's formula with (\ref{4.6})
we deduce that
\[
 F_{n} \left( t \wedge\tau_{j}, \overline{X_{t}^{\tau_{j}}\left( x \right)}, X_{t}^{\tau_{j}}\left( y \right) \right) 
  = 
  F_{n} \left( 0, \overline{X_{0}\left( x \right)}, X_{0}\left( y \right) \right)
  + I_{t \wedge\tau_{j}}^n + M_t ,
\]
where for $s \in \left[ 0, t \right]$:
\begin{eqnarray*}
 M_{s}
 & =  &
 \sum_{k = 1}^{\infty} \int_{0}^{s \wedge \tau_{j}} \left\langle 
R_{n}  X_{r} ^{\tau_{j}} \left( x \right),
 \mathcal{A}_{t-r}
 R_{n} L_{k}  X_{r}^{\tau_{j}} \left( y \right) 
\right\rangle dW^{k}_{r}
\\
& &
 +
\sum_{k = 1}^{\infty}  \int_{0}^{s \wedge \tau_{j}} \left\langle 
R_{n} L_k X_{r} ^{\tau_{j}} \left( x \right),
 \mathcal{A}_{t-r}
 R_{n}  X_{r}^{\tau_{j}} \left( y \right) 
\right\rangle dW^{k}_{r}
\end{eqnarray*}
 and 
$
I_s^n
=
\int_{0}^{s} 
  \left( 
  -  g \left( r, R_n  X_{r} \left( x \right) , R_n X_{r} \left( y \right) \right) 
   +
    g_{n}  \left( r, X_{r} \left( x \right) , X_{r} \left( y \right) \right) 
   \right)  dr
$
with
$$
g_{n} \left( r, u, v \right) 
=
\left\langle R_{n} u, \mathcal{A}_{t-r}    R_{n} G v \right\rangle
+ \left\langle R_{n}  G u, \mathcal{A}_{t-r}  R_{n} v \right\rangle
+ \sum_{k=1}^{\infty} \left\langle  R_{n} L_{k}  u, \mathcal{A}_{t-r}  R_{n} L_{k}  v \right\rangle .
$$

We next establish the martingale property of $M_s$.
For all $r \in \left[0, t \right]$ we have 
\[
\left\Vert R_{n}  X_{r} ^{\tau_{j}} \left( x \right)  \right\Vert^2
 \left\Vert \mathcal{A}_{t-r} \right\Vert^2 
  \left\Vert R_{n} L_{k}  X_{r}^{\tau_{j}} \left( y \right)  \right\Vert^2 
 \leq 
 j^2 \sup_{s \in \left[0, t \right]} \left\Vert \mathcal{A}_{s} \right\Vert ^2
 \left\Vert L_{k}  X_{r}^{\tau_{j}} \left( y \right)  \right\Vert^2 .
\]
By H2.1 and H3.1,
$
\mathbb{E}
\int_{0}^{t \wedge \tau_{j}} 
\sum_{k = 1}^{\infty}
\left|
\left\langle 
R_{n}  X_{r} ^{\tau_{j}} \left( x \right),
 \mathcal{A}_{t-r}
 R_{n} L_{k}  X_{r}^{\tau_{j}} \left( y \right) 
\right\rangle
\right|^{2} ds 
<
\infty 
$.
Thus
$$
\left(
\sum_{k = 1}^{\infty}
\int_{0}^{s \wedge \tau_{j}} \left\langle 
R_{n}  X_{r} ^{\tau_{j}} \left( x \right),
 \mathcal{A}_{t-r}
 R_{n} L_{k}  X_{r}^{\tau_{j}} \left( y \right) 
\right\rangle dW^{k}_{r}
\right)_{s \in \left[ 0, t \right]}
$$
is a martingale.
The same conclusion can be draw for
\[
\sum_{k = 1}^{\infty}
\int_{0}^{s \wedge \tau_{j}} \left\langle 
R_{n} L_k X_{r} ^{\tau_{j}} \left( x \right),
 \mathcal{A}_{t-r}
 R_{n}  X_{r}^{\tau_{j}} \left( y \right) 
\right\rangle dW^{k}_{r},
\]
and so $\left( M_s \right)_{s \in \left[ 0, t \right]}$ is a martingale. 
Hence
\begin{equation}
\label{4.15}
\mathbb{E}
  \left\langle  R_{n} X_{ t }^{\tau_{j}}\left( x \right) , 
  \mathcal{A}_{t - t \wedge\tau_{j}}  
  R_{n} X_{t}^{\tau_{j}}\left( y \right) 
 \right\rangle
 =
 \left\langle R_{n} x , \mathcal{ A}_{t} R_{n}  y  \right\rangle
 +
 \mathbb{E}  I_{t \wedge\tau_{j}}^n .
 \end{equation}

 We will take the limit as  $j \rightarrow \infty$ in  (\ref{4.15}).
 Since 
$
 \mathbb{E} \left( \sup_{s \in \left[0, t \right]}  \left\Vert  X_s\left( \xi \right)  \right\Vert ^{2} \right)
 < \infty
$
for $\xi = x, y$
(see, e.g., Th. 4.2.5 of  \cite{Prevot}),
using the dominated convergence theorem,
together with the continuity of 
$t \longmapsto  \left\langle u, \mathcal{A}_{t} v \right\rangle$,
we get
\[
\mathbb{E} \left\langle R_{n}  X_{t}^{\tau_{j}}\left( x \right),  \mathcal{A}_{t- t \wedge \tau_{j}} R_{n}  X_{t}^{\tau_{j}}\left( y \right) \right\rangle
\rightarrow_{j \rightarrow \infty} 
\mathbb{E} \left\langle R_{n}  X_{t}\left( x \right),  A  R_{n}  X_{t} \left( y \right) \right\rangle.
\]
Applying again the dominated convergence theorem yields
$
 \mathbb{E}  I_{t \wedge\tau_{j}}^n 
 \longrightarrow_{j \rightarrow \infty} 
 \mathbb{E}  I_{t }^n 
$,
and hence letting $j \rightarrow \infty$ in  (\ref{4.15}) we deduce that
 \begin{eqnarray}
\label{4.14}
&&  \mathbb{E} \left\langle R_{n} X_{t} \left( x \right), A R_{n} X_{t} \left( y \right) \right\rangle 
 -  \left\langle R_{n} x , \mathcal{ A}_{t} R_{n}  y  \right\rangle  
 \\
 & & =
 \nonumber
 \mathbb{E} \int_{0}^{t} 
  \left(
  -   g \left( s, R_n X_{s} \left( x \right) , R_n X_{s} \left( y \right) \right) 
  +  g_{n}  \left( s, X_{s} \left( x \right), X_{s} \left( y \right) \right) 
   \right)  ds.
\end{eqnarray}

Finally,
we take the limit as $n \rightarrow \infty$ in (\ref{4.14}).
Since  
$\left\Vert R_{n} \right\Vert \leq 1$
and
$ R_{n}$ tends pointwise to $I$ as $n \rightarrow \infty$,
the  dominated convergence theorem yields 
\[
\lim_{n \rightarrow \infty} \mathbb{E} \int_{0}^{t} 
  g_{n}  \left( s, X_{s} \left( x \right), X_{s} \left( y \right) \right)   ds
  =
  \mathbb{E} \int_{0}^{t} 
  g  \left( s, X_{s} \left( x \right), X_{s} \left( y \right) \right)   ds .
\]
For any $x \in \mathcal{D}\left( C \right)$, 
$ \lim_{n \rightarrow \infty} C R_{n} x = C x$.
By
$\left\Vert C R_{n} x \right\Vert \leq \left\Vert C x \right\Vert $, 
using  the dominated convergence theorem gives
\[
\lim_{n \rightarrow \infty} \mathbb{E} \int_{0}^{t} 
 g \left( s, R_n X_{s} \left( x \right) , R_n X_{s} \left( y \right) \right)
  ds
  =
  \mathbb{E} \int_{0}^{t} 
  g  \left( s, X_{s} \left( x \right), X_{s} \left( y \right) \right)   ds.
\]
Thus,
letting $n \rightarrow \infty$ in (\ref{4.14}) we obtain (\ref{4.5}).
 \end{proof}

\subsection{Proofs of theorems from  Section \ref{sec:AuxiliaryEquations}}

\subsubsection{Proof of Theorem \ref{th:EyU-Lineal}}
\label{sec:Proofth:EyU-Lineal}

 \begin{proof}
 First, we show that $ \rho_t$ given by (\ref{eq:RepProb}) is a $N^p $-weak solution to (\ref{eq:8.4}).
 To this end, 
 we will verify that $C = N^p$ satisfies Hypothesis \ref{hyp:formal-conservativity},
 where, here and subsequently,
 $H \left( t \right)$, $G \left( t \right)$, $L_1$, $L_2$, $L_3$ are defined as in Theorem \ref{th:EyU-Lineal}.
 Since  $L_2, L_3 \in \mathfrak{L}\left(  \ell^2(\mathbb{Z}_+)\otimes \mathbb{C}^2\right)$, 
 $L_1, L_1^{*}L_1$ are relatively bounded  with respect to $N$
 and 
 $$
 \left\Vert  H \left( t \right)  x  \right\Vert^2
 \leq 
 K \max \left(  \left\vert  \alpha \left( t \right) \right\vert  ,  \left\vert  \beta \left( t \right) \right\vert  \right)  
 \left\Vert  x  \right\Vert_N 
 \qquad
 \forall x \in \mathcal{D}\left( N \right) ,
 $$
 $C$ fulfills Condition H2.1 of Hypothesis \ref{hyp:formal-conservativity}. 
 By definition of $ G \left( t \right)$ and $ L_\ell $,
  $$
 2\Re\left\langle x, G \left( t \right) x\right\rangle 
+ \sum_{\ell=1}^{3}\left\Vert L_\ell \, x\right\Vert ^{2} = 0
\qquad
\forall x \in  \mathcal{D}\left( N \right) ,
$$
and hence Condition H2.2 holds.
Condition H2.4 follows from the continuity of $\alpha$ and $\beta$.

 In order to check Condition H2.3, 
 we denote by $\mathfrak{D}$ the set of all  $x \in  \ell^2(\mathbb{Z}_+)\otimes \mathbb{C}^2$ 
 such that 
 $x \left( n,\eta \right) := \left\langle e_n \otimes e_{\eta}, x \right\rangle$ 
 is equal to $0$  for all combinations of  $n \in \mathbb{Z}_{+}$ and $\eta = \pm$ except a finite number. 
 Consider $x \in \mathfrak{D}$.
 A careful computation yields
\begin{eqnarray}
  \label{eq:8.13}
 & & 2\Re\left\langle N^{2p} x, G \left( t \right) x\right\rangle +\sum_{\ell=1}^{3 }   \left\Vert N^{p} L_{\ell}x \right\Vert ^{2}
 \\
 \nonumber
 & & = 
 \sum_{k \in \mathbb{Z}_{+}, \eta = \pm} 
  2  \, \Re \left( \alpha \left( t \right)  x \left( k, \eta \right) \overline{ x \left(k+1, \eta \right) }  \right)
  \sqrt{k+1} \left(  \left( k+1 \right)^{2p} - k^{2p} \right)
  \\
  \nonumber
  && \quad \quad \quad  +
  \sum_{k \in \mathbb{Z}_{+}, \eta = \pm}  2 \kappa \left\vert  x \left( k, \eta \right) \right\vert ^2 
 k \left(  \left( k-1 \right)^{2p} - k^{2p} \right) .
 \end{eqnarray}
Since  
\begin{eqnarray*}
&& \sum_{k \in \mathbb{Z}_{+}, \eta = \pm} 
  2  \, \Re \left( \alpha \left( t \right)  x \left( k, \eta \right) \overline{ x \left(k+1, \eta \right) }  \right)
  \sqrt{k+1} \left(  \left( k+1 \right)^{2p} - k^{2p} \right)
\\
&& \leq 
  2   \left\vert    \alpha \left( t \right) \right\vert  
  \sum_{k \in \mathbb{Z}_{+}, \eta = \pm} 
    \left\vert x \left( k, \eta \right)  \right\vert  \left\vert x \left(k+1, \eta \right) \right\vert \phi \left( k \right)   
\\
& &  \leq 
 2   \left\vert    \alpha \left( t \right) \right\vert  
  \sum_{k \in \mathbb{Z}_{+}, \eta = \pm} 
    \left\vert x \left( k, \eta \right)  \right\vert^2 \phi \left( k \right) 
\end{eqnarray*}
with
$
\phi \left( k \right) 
= 
\sqrt{k+1} \left(  \left( k+1 \right)^{2p} - k^{2p} \right)
=
\sqrt{k+1} \sum_{j=0}^{2p-1} 
\left( \begin{array}{c} 2p \\ j  \end{array} \right)  
k^j
$, 
\begin{eqnarray}
 \label{eq:8.14}
&& \sum_{k \in \mathbb{Z}_{+}, \eta = \pm} 
  2  \, \Re \left( \alpha \left( t \right)  x \left( k, \eta \right) \overline{ x \left(k+1, \eta \right) }  \right)
  \sqrt{k+1} \left(  \left( k+1 \right)^{2p} - k^{2p} \right) 
\\
\nonumber
&  \leq & 
   \left\vert   \alpha \left( t \right) \right\vert  
  K \sum_{k \in \mathbb{Z}_{+}, \eta = \pm} 
    \left\vert x \left( k, \eta \right)  \right\vert^2  \left( 1 + k^{2p-1/2}  \right) .
 \end{eqnarray}
Combining (\ref{eq:8.13}) with (\ref{eq:8.14})  we get  
$$
 2\Re\left\langle N^{2p} x, G \left( t \right) x\right\rangle +\sum_{\ell=1}^{3 }   \left\Vert N^{p} L_{\ell}x \right\Vert ^{2} 
 \leq  K  \left\vert  \alpha \left( t \right) \right\vert \left\Vert x \right\Vert_{ N^{p}} ^{2},
$$
and so Condition H4.3 of Hypothesis \ref{hyp:CF-inequality} holds
because $\mathfrak{D}$ is a core of $N^p$.
Then,
applying Theorem 2.4 of \cite{FagMora2013} (see also Theorem \ref{th:ExistUniqSSE}) 
we obtain that for any  initial condition $\xi \in L_{N^p}^{2}\left( \mathbb{P}, \ell^2(\mathbb{Z}_+)\otimes \mathbb{C}^2\right) $
there exists a unique strong $N^p$-solution of (\ref{eq:SSEp}), 
together with
\begin{equation}
\label{eq:3.3}
\mathbb{E} \left\Vert X_{t} \left( \xi \right) \right\Vert_{N^p}^2 \leq K \left( t \right) \mathbb{E}  \left\Vert \xi  \right\Vert_{N^p}^2 .
\end{equation}
Therefore, Condition H2.3 holds,
and so we have checked Hypothesis \ref{hyp:formal-conservativity} with $C = N^p$.

Applying Theorem 3.1 of \cite{MoraAP}  yields 
$
\varrho = \mathbb{E}\left\vert \xi \rangle \langle \xi \right\vert 
$
for certain 
$$\xi \in L_{N^p}^{2}\left( \mathbb{P}, \ell^2(\mathbb{Z}_+)\otimes \mathbb{C}^2\right) . $$
Using Theorem \ref{teor10} we obtain that
$
 \rho_t := \mathbb{E} \left|X_{t} \left( \xi \right)\right\rangle \left\langle X_{t} \left( \xi \right)\right| 
$
satisfies the relation (\ref{3.11n}) and 
\begin{equation}
 \label{eq:8.8}
 \left\{ 
 \begin{array}{lll}
  \frac{d}{dt}\Tr\left( A\rho_{t} \right) 
  & = &
\Tr\left(
A\left( G \left( t \right) \rho_{t}\ +\rho_{t} G  \left( t \right)  ^{\ast}
+\sum_{\ell=1}^{3} L_{\ell}    \rho_{t} L_{\ell}  ^{\ast }\right) 
\right) 
 \\
 \rho_0  & =  & \varrho 
 \end{array}
  \right. 
\end{equation}
for all $A\in\mathfrak{L}\left(  \ell^2(\mathbb{Z}_+)\otimes \mathbb{C}^2\right)$.

Second, 
we will prove that (\ref{eq:8.4}) has at most one  $N^p $-weak solution
provided that the initial condition is  $N^p $-regular.
Suppose that (\ref{eq:8.8}) holds.
Taking $A=\left\vert y \rangle \langle x \right\vert $ in (\ref{eq:8.8}) we get 
 \begin{equation}
\label{eq:8.5}
 \frac{d}{dt} \left\langle x, \rho_{t} y \right\rangle 
= 
\left\langle  G \left( t \right)^{\ast}  x,  \rho_{t} y \right\rangle + \left\langle  x, \rho_{t}   G \left( t \right)^{\ast} y \right\rangle  
+ \sum_{\ell =1}^{3} \left\langle L_{\ell}^{\ast}   x, \rho_{t} L_{\ell}^{\ast}   y \right\rangle 
 \end{equation}
for all $ x,y \in  \mathcal{D}\left( N^p \right) $.
Relation (\ref{eq:8.5}) coincides with 
(\ref{eq:4.1}) with 
$\mathcal{A}_{t}$, $G \left( t \right)$, $L_{1}$, $L_{2}$ and $L_{3}$ replaced by 
$ \rho_{t} $, $G \left( t \right)^{\ast}$, $L_{1}^{\ast}$, $L_{2}^{\ast}$ and $L_{3}^{\ast}$.
This suggests us to apply Theorem \ref{th:UniquenessLinearHE} to (\ref{eq:8.5})
in order to prove the uniqueness of the solution of (\ref{eq:8.8}).
To this end,
we next deduce that the linear stochastic Schr\"odinger equation 
\begin{equation}
\label{eq:8.3}
Y_{t}\left( \xi \right) 
= \xi +\int_{0}^{t}G \left( s \right)^{\ast} Y_{s}\left( \xi \right) ds 
+ \sum_{\ell=1}^{3 }\int_{0}^{t}
L_\ell ^{\ast} \, Y_{s}\left( \xi \right) dW_{s}^{\ell} 
\end{equation}
satisfies Hypothesis \ref{hyp:CF-inequality} with $C=N^p$.

Now, we check  Hypothesis  \ref{hyp:CF-inequality} with 
$G \left( t \right)$, $L_{1}$, $L_{2}$ and $L_{3}$ replaced by 
$G \left( t \right)^{\ast}$, $L_{1}^{\ast}$, $L_{2}^{\ast}$ and $L_{3}^{\ast}$.
Take  $C=N^p$.
Since $a^\dagger $ is relatively bounded with respect to $N$,
using analysis similar to that in the second paragraph
we can check that 
$G \left( t \right)^{\ast} 
=
\mathrm{i} H \left( t \right) - \frac{1}{2} \sum_{\ell = 1}^3 L_{\ell}^{\ast} L_{\ell} $ 
satisfies Condition H4.1 of Hypothesis \ref{hyp:CF-inequality} with 
$G \left( t \right)$ substituted by $G \left( t \right)^{\ast}$,
as well as 
Condition  H4.2  holds with $L_{\ell} \left( t \right)$ replaced by 
$
L_1^{^{\ast}} =  \sqrt{ 2 \kappa} a^\dagger
$,
$
L_2 ^{\ast} =  \sqrt{ \gamma \left(1-d \right) } \sigma^{+}
$,
$ 
L_3  ^{\ast} =  \sqrt{ \gamma \left(1 +d \right) } \sigma^{-} 
$.
On $\mathfrak{D}$ we have 
\begin{eqnarray*}
G \left( t \right)^{\ast} + \left( G \left( t \right)^{\ast} \right)^{\ast} 
+
\sum_{\ell = 1}^3 \left( L_{\ell}^{\ast} \right)^{\ast}  L_{\ell}^{\ast}
& =  &
\sum_{\ell = 1}^3 \left(  L_{\ell}  L_{\ell}^{\ast} - L_{\ell}^{\ast} L_{\ell}  \right)
\\
 & = &
 4 \kappa ^2 I + 2 \gamma ^2 \left( 1 + d^2 \right) \sigma_3 , 
\end{eqnarray*}
which gives Condition H4.4.
For any $x \in \mathfrak{D}$,
\begin{eqnarray}
 \label{eq:8.1}
 && 2\Re\left\langle N^{2p} x, \mathrm{i} H \left( t \right) x\right\rangle 
 \\
 \nonumber
&& = 
 \sum_{k \in \mathbb{Z}_{+}, \eta = \pm} 
  2 \Re \left( \alpha \left( t \right)  x \left( k, \eta \right) \overline{ x \left(k+1, \eta \right) }  \right)
  \sqrt{k+1} \left(  \left( k+1 \right)^{2p} - k^{2p} \right)
 \end{eqnarray}
and
\begin{eqnarray}
 \label{eq:8.2}
&&  \left\langle x, 
   \left(  L_1  N^{2p}  L_1^{\ast} - \frac{1}{2} L_1^{\ast} L_1  N^{2p}   - \frac{1}{2} N^{2p}  L_1^{\ast} L_1  \right) x\right\rangle 
\\
\nonumber
&& =
  \sum_{k \in \mathbb{Z}_{+}, \eta = \pm}  2 \kappa \left\vert  x \left( k, \eta \right) \right\vert ^2 
  \left( \left( k +1 \right)^{2p+1} - k^{2p+1}  \right) .
 \end{eqnarray}
Since $L_2$, $L_3$ are bounded operators with conmute with $N^{2p}$,
using (\ref{eq:8.1}) and (\ref{eq:8.2}) yields 
$$
2\Re\left\langle N^{2p} x, G \left( t \right)^{\ast} x\right\rangle 
+ \sum_{\ell=1}^{3 }   \left\Vert N^{p} L_{\ell} ^{\ast} x \right\Vert ^{2}
\leq
K \left( t \right)  \left\Vert N^{p} x \right\Vert ^{2}
$$
and hence Condition  H4.3 holds.
By Theorem \ref{th:ExistUniqSSE},
(\ref{eq:8.3}) has a unique strong $N^p$-solution whenever  $\xi \in L_{C}^{2}\left( \mathbb{P}, \ell^2(\mathbb{Z}_+)\otimes \mathbb{C}^2\right) $.
It follows from Theorem \ref{th:UniquenessLinearHE} that (\ref{eq:8.5}) has at most one solution 
$ \varrho_t  \in \mathfrak{L} \left(  \ell^2(\mathbb{Z}_+)\otimes \mathbb{C}^2 \right) $
satisfying $ \varrho_0 = \varrho $.
Thus, (\ref{eq:8.4}) has a unique  $N^p$-regular solution, which is equal to
$
 \rho_t := \mathbb{E} \left|X_{t} \left( \xi \right)\right\rangle \left\langle X_{t} \left( \xi \right)\right| 
$.
\end{proof}

\subsubsection{Proof of Theorem \ref{th:Ehrenfest-Lineal}}
\label{sec:Proofth:Ehrenfest-Lineal}

 \begin{proof}
 From Theorem \ref{th:EyU-Lineal} it follows that 
 (\ref{eq:SSEp})  has a unique strong $N^p$-solution  $X_{t} \left( \xi \right)$
for any initial datum  $\xi \in L_{N^p}^{2}\left( \mathbb{P}, \ell^2(\mathbb{Z}_+)\otimes \mathbb{C}^2\right) $.
In order to establish (\ref{eq:8.10}) we apply Theorem 4.1 of  \cite{FagMora2013} to obtain 
\begin{eqnarray}
 \label{eq:8.9}
 \Tr \left(a \rho_t \right)
\hspace{-7pt} & = \hspace{-7pt} &
\Tr \left(a \rho_0 \right)
+
\sum_{\ell = 1}^{3} \int_0^t 
\mathbb{E} \left\langle  L_{\ell}  X_{s} \left( \xi \right) , a L_{\ell}  X_{s} \left( \xi \right) \right\rangle
ds
\\ \nonumber
& &
+
\int_0^t \left(
\mathbb{E} \left\langle a^\dagger  X_{s} \left( \xi \right) , G \left( s \right) X_{s} \left( \xi \right) \right\rangle
+ \mathbb{E} \left\langle G \left( s \right) X_{s} \left( \xi \right) , a  X_{s} \left( \xi \right) \right\rangle 
\right) ds ,
 \end{eqnarray}
where, throughout the proof, 
$G \left( t \right)$, $H \left( t \right)$, $L_1$, $L_2$, $L_3$ are as in Theorem \ref{th:EyU-Lineal}.
Therefore, 
$t \mapsto  \Tr \left(a \rho_t \right)$ is a continuous function.

Suppose that $x \in \mathfrak{D}$,
where $\mathfrak{D}$ is the set of all  $x \in  \ell^2(\mathbb{Z}_+)\otimes \mathbb{C}^2$ 
satisfying
$ \left\langle e_n \otimes e_{\eta}, x \right\rangle = 0$ 
for all combinations of  $n \in \mathbb{Z}_{+}$ and $\eta = \pm$ except a finite number. 
Since $a$ conmutes with $\sigma^{3}$ and $ \sigma^{\pm}$,
using $\left[ a , a^\dagger \right] = I$ we deduce that
\begin{eqnarray*}
 \left\langle a^\dagger  x , -  \mathrm{i} H \left( s \right)  x \right\rangle + \left\langle - \mathrm{i} H \left( s \right)  x , a x \right\rangle
& = &
\left\langle   x ,  \mathrm{i} \left[ H \left( s \right) , a  \right]  x \right\rangle
\\
& = &
\left\langle   x ,  \left[ 
\mathrm{i}  \omega \, a^\dagger a  
- \alpha \left( t \right) a^\dagger +  \overline{\alpha \left( t \right)} a 
 , a  \right]  x \right\rangle
\\
& = &
 \left\langle   x , 
 \left( - \mathrm{i} \omega \, a + \alpha \left( t \right)  \right) x \right\rangle 
\end{eqnarray*}
and
\begin{eqnarray*}
&& \sum_{\ell = 1}^{3}  \left\langle   x , \left(
L_{\ell}^{\star}  a L_{\ell} 
- \frac{1}{2} a L_{\ell}^{\star}  L_{\ell} -  \frac{1}{2}  L_{\ell}^{\star}  L_{\ell} a \right)
x \right\rangle
\\
&& = 
\left\langle   x , \left(
L_{1}^{\star}  a L_{1} 
- \frac{1}{2} a L_{1}^{\star}  L_{1} -  \frac{1}{2}  L_{1}^{\star}  L_{1} a \right)
x \right\rangle
 =
- \kappa \left\langle   x , a x \right\rangle .
\end{eqnarray*} 
Because $\mathfrak{D}$ is a core for $N$,
we obtain that for all $x \in \mathcal{D}\left( N \right)$,
$$
 \left\langle  a^\dagger   x ,  G \left( s \right) x \right\rangle
 +
 \left\langle G \left( s \right) x , a x \right\rangle 
 +
 \sum_{\ell = 1}^{3} \left\langle  L_{\ell} x , a L_{\ell}  x \right\rangle
 =
 \left\langle   x , - \left( \kappa   + \mathrm{i} \omega \right) a x + \alpha \left( t \right)   x \right\rangle .
$$
Then, from (\ref{eq:8.9}) it follows that
$$
\Tr \left(a \rho_t \right)
=
\Tr \left(a \rho_0 \right)
+
\int_0^t \left(
 - \left( \kappa   + \mathrm{i} \omega \right) \Tr \left(a \rho_s \right)
 +
  \alpha \left( s \right) 
\right) ds ,
$$
which leads to (\ref{eq:8.10}).

Fix $\eta = -$ or  $\eta = 3$.
According to (\ref{eq:8.8}) we have 
$$
 \frac{d}{dt}\Tr\left( \rho_{t} \sigma^{\eta} \right) 
 =
 \Tr\left(
\sigma^{\eta} \left( G \left( t \right) \rho_{t}\ +\rho_{t} G  \left( t \right)  ^{\ast}
+\sum_{\ell=1}^{3} L_{\ell}    \rho_{t} L_{\ell}  ^{\ast }\right) 
\right) ,
$$
and so applying Theorem 3.2 of \cite{MoraAP} we deduce that
\begin{eqnarray*}
&& \frac{d}{dt}\Tr\left( \rho_{t} \sigma^{\eta} \right) 
= 
\Tr\left( \rho_{t} 
 \left( \sigma^{\eta} G \left( t \right)  + G  \left( t \right)  ^{\ast} \sigma^{\eta}
+\sum_{\ell=1}^{3}   L_{\ell}  ^{\ast }  \sigma^{\eta} L_{\ell} \right)
\right) 
\\
&& = 
\Tr\left( \rho_{t}  \left(
-  \mathrm{i} \left[ \sigma^{\eta} , H \left( t \right) \right]
+ \sum_{\ell=1}^{3} \left(  L_{\ell}  ^{\ast }  \sigma^{\eta} L_{\ell} - \frac{1}{2}  \sigma^{\eta}  L_{\ell}  ^{\ast }  L_{\ell}
 - \frac{1}{2}   L_{\ell}  ^{\ast }  L_{\ell}  \sigma^{ \eta}\right)
\right) \right) 
\\
&&
 = 
\Tr\left( -  \mathrm{i}  \rho_{t} 
\left[ \sigma^{\eta} , 
\frac{\omega}{2}  \sigma^{3}+ \mathrm{i}  \left( \overline{\beta  \left( t \right) } \sigma^{-}  - \beta  \left( t \right) \sigma^+ \right)
 \right] \right) 
 \\
 && \quad
 + \sum_{\ell=2}^{3} \Tr\left( \rho_{t}  \left(
   \left(  L_{\ell}  ^{\ast }  \sigma^{\eta} L_{\ell} - \frac{1}{2}  \sigma^{\eta}  L_{\ell}  ^{\ast }  L_{\ell}
 - \frac{1}{2}   L_{\ell}  ^{\ast }  L_{\ell}  \sigma^{\eta}\right)
\right) \right) .
\end{eqnarray*}
Now, 
we use the commutation relations
$$
  \left[ \sigma^+, \sigma^-\right] = \sigma^3, \quad
   \left[\sigma^3, \sigma^+ \right] = 2 \sigma^+, \quad
    \left[\sigma^-, \sigma^3 \right] = 2 \sigma^- 
$$
to derive  (\ref{eq:8.11}) and (\ref{eq:8.12}).
\end{proof}

\subsubsection{Proof of Theorem \ref{th:LorenzEquations-Laser}}
\label{sec:ProofTh-LorenzEquations-Laser}

\begin{proof}
Fix $A \left( 0 \right)  \in \mathbb{C}$, $S \left( 0 \right)  \in \mathbb{C} $ and $D \left( 0 \right)  \in \mathbb{R}$.
Since (\ref{eq:Lorenz}) is an ordinary differential equation with locally Lipschitz coefficients,
(\ref{eq:Lorenz}) has a unique solution defined on a maximal interval $\left[ 0, T \right[$  
(see, e.g., \cite{Hirsch2013}).

For all $t \in  \left[ 0, T \right[$, we set
$X \left( t \right) =  \exp \left( i \omega t \right)  A \left( t \right)$, 
$ Y \left( t \right) =  \exp \left( i \omega t \right)  S \left( t \right) $
and 
$ Z \left( t \right)  =  D \left( t \right) - d $.
Thus, 
(\ref{eq:Lorenz}) becomes 
$$
 \left\{ 
\begin{array}{lcl}
 X ^{\prime} \left( t \right) 
 & = &
 - \kappa \, X \left( t \right)  + g \, Y \left( t \right)   
 \\
  Y ^{\prime} \left( t  \right)
&  =  &
d  g \, X \left( t  \right)  - \gamma \, Y \left( t  \right) + g \, X \left( t  \right) Z \left( t  \right)
 \\
 Z ^{\prime} \left( t  \right)
&  =  &
 -  4 g \,   \Re \left(
 \overline{ X  \left( t \right)}  \ Y \left( t  \right) 
 \right)
 - 2 \gamma  \ Z \left( t  \right)  
 \end{array}
  \right. .
$$
Therefore,
$$
 \frac{d}{dt} \left\vert X \left( t \right) \right\vert^2
 = 
 2 \, \Re \left( X^{\prime} \left( t \right)  \overline{ X \left( t \right) } \right) 
 = 
 -2 \kappa \left\vert X \left( t \right) \right\vert^2 + 2 g \, \Re \left( Y \left( t \right)  \overline{ X \left( t \right) } \right) 
$$
and 
$$
 \left\{
 \begin{array}{lcl}
 \frac{d}{dt} \left\vert Y \left( t \right) \right\vert^2
 & = &
  2 d g \, \Re \left( X \left( t \right)  \overline{ Y \left( t \right) } \right) 
 - 2 \gamma \left\vert  Y \left( t  \right)  \right\vert ^2
 + 2 g \, Z \left( t \right)  \Re \left( X \left( t \right)  \overline{ Y \left( t \right) } \right)
\\
\frac{d}{dt} Z \left( t \right)^2
& = &
- 4 \gamma  Z \left( t \right)^2 - 8 g \, Z \left( t \right)  \Re \left( \overline{ X \left( t \right) } Y \left( t \right)  \right)
 \end{array}
 \right. .
$$
Hence, 
\begin{equation}
\label{eq:L2}
4 \, \frac{d}{dt} \left\vert Y \left( t \right) \right\vert^2 + \frac{d}{dt}  Z \left( t \right)^2
=
 8 d g \, \Re \left( X \left( t \right)  \overline{ Y \left( t \right) } \right) 
- 8 \gamma \left\vert  Y \left( t  \right)  \right\vert ^2
- 4 \gamma Z \left( t \right)^2 .
\end{equation}

Suppose, for a moment, that $d < 0$.
Then 
$$
- 4 d \, \frac{d}{dt} \left\vert X \left( t \right) \right\vert^2
+ 4 \, \frac{d}{dt} \left\vert Y \left( t \right) \right\vert^2 + \frac{d}{dt} Z \left( t \right)^2
=
 8 d  \kappa \, \left\vert X \left( t \right) \right\vert^2 
 - 8 \gamma \left\vert  Y \left( t  \right)  \right\vert ^2
- 4 \gamma  Z \left( t \right)^2 .
$$
This gives
\begin{eqnarray*}
&& \frac{d}{dt} \left( - 4 d  \, \left\vert X \left( t \right) \right\vert^2 + 4 \,  \left\vert Y \left( t \right) \right\vert^2 
+  \left( Z \left( t \right) \right)^2 \right)
\\
&& \leq
- \min \left\{  2 \kappa, 2 \gamma \right\}
\left( - 4 d  \, \left\vert X \left( t \right) \right\vert^2 + 4 \,  \left\vert Y \left( t \right) \right\vert^2 
+   Z \left( t \right)^2 \right) ,
\end{eqnarray*}
which implies 
 \begin{eqnarray}
 \label{eq:L5a}
&& 4 \left\vert d \right\vert \, \left\vert X \left( t \right) \right\vert^2 + 4 \,  \left\vert Y \left( t \right) \right\vert^2 
+  Z \left( t \right)^2
\\
\nonumber
&& \leq
\exp \left(  - 2 t \, \min \left\{  \kappa, \gamma \right\} \right)
\left(  
4 \left\vert d \right\vert \, \left\vert X \left( 0 \right) \right\vert^2 + 4 \,  \left\vert Y \left( 0 \right) \right\vert^2 
+   Z \left( 0 \right)^2
\right)
 \end{eqnarray}
for any $t \in  \left[ 0, T \right[$.

On the other hand,
assume that $d \geq 0$.
Combining 
\begin{eqnarray*}
&& \frac{d}{dt}  \left\vert X \left( t \right) \right\vert^2
+
\frac{g^2}{4 \gamma \kappa} \left(  
4 \, \frac{d}{dt} \left\vert Y \left( t \right) \right\vert^2 + \frac{d}{dt}  Z \left( t \right)^2
\right)
\\
&& =
2 g \left( 1 + \frac{g^2 d}{\gamma \kappa} \right) \Re \left(   X  \left( t \right)  \overline{Y \left( t  \right)} \right) 
  - 2 \kappa \left\vert  X \left( t  \right)  \right\vert ^2
 - 2 \frac{g^2}{\kappa} \left\vert  Y \left( t  \right)  \right\vert ^2
 - \frac{g^2}{\kappa} Z \left( t \right) ^2 
\end{eqnarray*}
with
$
2 \Re \left(   X  \left( t \right)  \overline{\frac{g}{\kappa}  Y \left( t  \right)} \right) 
\leq
 \left\vert  X \left( t  \right)  \right\vert ^2 +  \frac{g^2}{\kappa^2} \left\vert  Y \left( t  \right)  \right\vert ^2
$
we obtain
\begin{eqnarray*}
&& \frac{d}{dt} \left(  \left\vert X \left( t \right) \right\vert^2
+ \frac{g^2}{\gamma \kappa} \left\vert Y \left( t \right) \right\vert^2 
+ \frac{g^2}{4 \gamma \kappa}  Z \left( t \right)^2 \right)
\\
&& \leq
\left( - \kappa + \frac{g^2 d}{\gamma}  \right)   \left\vert X \left( t \right) \right\vert^2
+ \left( - \gamma +   \frac{g^2 d}{\kappa} \right) \frac{g^2 }{\gamma  \kappa} \left\vert Y \left( t \right) \right\vert^2 
- 4 \gamma  \frac{g^2}{4 \gamma \kappa}  Z \left( t \right)^2 .
\end{eqnarray*}
Therefore, for all $t \in  \left[ 0, T \right[$ we have
\begin{eqnarray*}
&& \frac{d}{dt} \left(  \left\vert X \left( t \right) \right\vert^2
+ \frac{g^2}{\gamma \kappa} \left\vert Y \left( t \right) \right\vert^2 
+ \frac{g^2}{4 \gamma \kappa}  Z \left( t \right)^2 \right)
\\
&& \leq
- \min \left\{ \kappa - \frac{g^2 d}{\gamma} ,  \gamma  -  \frac{g^2 d}{\kappa} \right\}
\left(  \left\vert X \left( t \right) \right\vert^2
+ \frac{g^2}{\gamma \kappa} \left\vert Y \left( t \right) \right\vert^2 
+ \frac{g^2}{4 \gamma \kappa} Z \left( t \right)^2 \right) .
\end{eqnarray*}
This yields 
\begin{eqnarray}
  \label{eq:L4a}
&& \left\vert X \left( t \right) \right\vert^2
+ \frac{g^2}{\gamma \kappa} \left\vert Y \left( t \right) \right\vert^2 
+ \frac{g^2}{4 \gamma \kappa}  Z \left( t \right)^2
\\
\nonumber
&& \leq
\hbox{\rm e}^{ - t \min \left\{ \kappa - \frac{g^2 d}{\gamma} ,  \gamma  -  \frac{g^2 d}{\kappa} \right\} }
\left(  \left\vert X \left( 0 \right) \right\vert^2
+ \frac{g^2}{\gamma \kappa} \left\vert Y \left( 0 \right) \right\vert^2 
+ \frac{g^2}{4 \gamma \kappa} Z \left( 0 \right)^2 \right) .
 \end{eqnarray}

Suppose that $T < + \infty$.
According to (\ref{eq:L5a}) and (\ref{eq:L4a}) we have that  
$$
\left\Vert  \left( A \left( t \right), S \left( t \right), D \left( t \right)  \right) \right\Vert
< K ,
$$
where $K >0$ and $t \in \left[ 0, T \right[$.
This contradicts the property 
$$
\lim_{t \rightarrow T} \left\Vert  \left( A \left( t \right), S \left( t \right), D \left( t \right)  \right) \right\Vert
=
\infty .
$$
Therefore, $T = + \infty$.
Moreover, 
(\ref{eq:L5a}) and (\ref{eq:L4a})  lead to (\ref{eq:L5}) and (\ref{eq:L4}), respectively.
 \end{proof}

\subsection{Proof of Theorem \ref{th:EyU-LaserE}}
\label{sec:Proof:EyU-LaserE}

\begin{proof}
Let 
$\left( A \left( t \right), S \left( t \right), D \left( t \right)  \right) $
be the unique global solution of (\ref{eq:Lorenz}) with 
$  A \left( 0 \right) =  \Tr\left( a \varrho \right) $, $ S \left( 0 \right) = \Tr\left( \sigma^{-}  \varrho \right)$ 
and $ D \left( 0 \right) = \Tr\left(  \sigma^{3}  \varrho \right) $.
According to Theorem \ref{th:EyU-Lineal} we have that there exists a unique 
$N^p$-weak solution $\left( \rho_t \right)_{t \geq 0}$ to (\ref{eq:8.4})
with 
$ \alpha \left( t \right) =  g \, S \left( t \right)  $, $  \beta \left( t \right) =   g \, A \left( t \right) $
and initial datum $\rho_0 = \varrho$.
Moreover, 
Theorem \ref{th:EyU-Lineal} ensures that 
$\rho_t = \mathbb{E} \left| Z_{t} \left( \xi \right)\right\rangle \left\langle Z_{t} \left( \xi \right)\right| $,
where
$Z_{t} \left( \xi \right)$ is the strong $N^p$-solution of (\ref{eq:SSEp}) 
with 
$ \alpha \left( t \right) =  g \, S \left( t \right)  $,
$  \beta \left( t \right) =   g \, A \left( t \right) $
and
initial condition 
$\xi \in L_{N^p}^{2}\left( \mathbb{P}, \ell^2(\mathbb{Z}_+)\otimes \mathbb{C}^2\right) $ such that 
$
\varrho = \mathbb{E} \left| \xi \right\rangle \left\langle \xi \right| 
$.
Applying Theorem \ref{th:Ehrenfest-Lineal} we deduce that the evolutions of 
$\Tr\left( a \, \rho_{t} \right)$, $ \Tr\left( \sigma^{-} \rho_{t}   \right)$ and $ \Tr\left(  \sigma^{3}  \rho_{t} \right)$
are governed by  
\begin{equation}
 \label{eq:laser3}
 \left\{ 
 \begin{array}{lcl}
  \frac{d}{dt} \Tr\left( a \, \rho_{t} \right)
 & = &
 - \left( \kappa + \mathrm{i} \omega \right)  \Tr\left( a \, \rho_{t}  \right) + g \ S \left( t \right)  
 \\
  \frac{d}{dt} \Tr\left( \sigma^{-} \rho_{t}   \right)
&  =  &
 - \left( \gamma + \mathrm{i} \omega \right)   \Tr\left( \sigma^{-} \rho_{t}   \right) 
 + g \ A \left( t \right)  \Tr\left( \sigma^{3} \rho_{t}   \right) 
\\
 \frac{d}{dt} \Tr\left(  \sigma^{3}  \rho_{t} \right)
 & = &
 - 4 g \ \Re \left( \overline{A \left( t \right)}  \  \Tr\left(  \sigma^{-}  \rho_{t} \right)  
 \right)
 - 2 \gamma \left(  \Tr\left(  \sigma^{3} \rho_{t}  \right) -d \right) 
 \end{array}
   \right. .
\end{equation}
From the uniqueness of solution to (\ref{eq:laser3}) we find
$\Tr\left( a \, \rho_{t} \right) = A \left( t \right)$, $ \Tr\left( \sigma^{-} \rho_{t}   \right) = S \left( t \right)$ 
and $ \Tr\left(  \sigma^{3}  \rho_{t} \right) = D \left( t \right)$.
Hence
\begin{equation}
 \label{eq:Laser5}
 \left\{ 
  \begin{array}{lcl}
  \frac{d}{dt}\Tr\left( A \rho_{t}  \right) 
 & =  &
\Tr\left( A  \mathcal{L}_{\star} \left( \rho_t \right)  \rho_t \right) 
\quad \quad \quad \quad
\forall A\in\mathfrak{L}\left(  \ell^2(\mathbb{Z}_+)\otimes \mathbb{C}^2\right)
 \\
 \rho_0  & =  &  \varrho 
\end{array}
  \right. ,
\end{equation}
as well as
$ \alpha \left( t \right) =  g \, \mathbb{E} \left\langle Z_{t} \left( \xi \right), \sigma^{-} Z_{t} \left( \xi \right) \right\rangle $
and
$ \beta \left( t \right) =   g \,  \mathbb{E} \left\langle Z_{t} \left( \xi \right), a \, Z_t \left( \xi \right) \right\rangle$
(see, e.g., \cite{MoraAP}).
Therefore,
$Z_{t} \left( \xi \right)$ is a  strong $N^p$-solution of (\ref{eq:SSENonlinear}).

Let $Z_{t} \left( \xi \right)$ and $\widetilde{Z}_{t} \left( \xi \right)$ be  strong $N^p$-solutions of (\ref{eq:SSENonlinear})
with initial datum $\xi$ belonging to $ L_{N^p}^{2}\left( \mathbb{P}, \ell^2(\mathbb{Z}_+)\otimes \mathbb{C}^2\right) $.
Then, 
$Z_{t} \left( \xi \right)$ is the strong $N^p$-solution of (\ref{eq:SSEp})
with initial datum $\xi \in L_{N^p}^{2}\left( \mathbb{P}, \ell^2(\mathbb{Z}_+)\otimes \mathbb{C}^2\right) $,
$ \alpha \left( t \right) =  g \, \mathbb{E} \left\langle Z_{t} \left( \xi \right), \sigma^{-} Z_{t} \left( \xi \right) \right\rangle $
and
$ \beta \left( t \right) =   g \,  \mathbb{E} \left\langle Z_{t} \left( \xi \right), a \, Z_t \left( \xi \right) \right\rangle$.
Since 
$t \mapsto \mathbb{E} \left\langle Z_{t} \left( \xi \right), \sigma^{-} Z_{t} \left( \xi \right) \right\rangle $
and
$ t \mapsto \mathbb{E} \left\langle Z_{t} \left( \xi \right), a \, Z_t \left( \xi \right) \right\rangle$
are continuous functions,
applying Theorems \ref{th:EyU-Lineal} and \ref{th:Ehrenfest-Lineal},
together with Theorem 3.2 of \cite{MoraAP},
we deduce that 
$$
\mathbb{E} \left\langle Z_{t} \left( \xi \right), \sigma^{-} Z_{t} \left( \xi \right) \right\rangle ,
\quad
\mathbb{E} \left\langle Z_{t} \left( \xi \right), a \, Z_t \left( \xi \right) \right\rangle ,
\quad
\mathbb{E} \left\langle Z_{t} \left( \xi \right), \sigma^{3} Z_{t} \left( \xi \right) \right\rangle 
$$
is a solution of (\ref{eq:Lorenz}) with initial condition 
$  A \left( 0 \right) =  \Tr\left( a \, \varrho \right) $, $ S \left( 0 \right) = \Tr\left( \sigma^{-}  \varrho \right)$ 
and $ D \left( 0 \right) = \Tr\left(  \sigma^{3}  \varrho \right) $.
The same is true for $\widetilde{Z}_{t} \left( \xi \right)$ in place of $ Z_{t} \left( \xi \right)$,
and so Theorem \ref{th:LorenzEquations-Laser} leads to 
$
\mathbb{E} \left\langle Z_{t} \left( \xi \right), \sigma^{-} Z_{t} \left( \xi \right) \right\rangle 
=
\mathbb{E} \left\langle \widetilde{Z}_{t} \left( \xi \right), \sigma^{-} \widetilde{Z}_{t} \left( \xi \right) \right\rangle
$
and
$$
\mathbb{E} \left\langle Z_{t} \left( \xi \right), a \, Z_t \left( \xi \right) \right\rangle
=
\mathbb{E} \left\langle \widetilde{Z}_{t} \left( \xi \right), a \, \widetilde{Z}_t \left( \xi \right) \right\rangle
$$
for all $t \geq 0$.
Now,
the uniqueness of the strong $N^p$-solution of (\ref{eq:SSEp}) implies $Z = \widetilde{Z}$.

On the other hand,
suppose that 
$\left( \rho_t \right)_{t \geq 0}$ and $\left(\widetilde{\rho}_t \right)_{t \geq 0}$ 
are families of $N^p$-regular operators satisfying (\ref{eq:Laser5}) 
such that 
$ \rho_{0} = \widetilde{ \rho}_{0}  = \varrho$
and 
$t \mapsto \Tr\left( a \, \rho_{t}  \right) $, $t \mapsto \Tr\left( a \, \widetilde{\rho}_{t}  \right) $ are continuous.
Then, 
$\left( \rho_t \right)_{t \geq 0}$ is a $N^p$-weak solution to (\ref{eq:8.4})
with $ \alpha \left( t \right) =  g \, \Tr\left( \sigma^{-} \rho_{t}   \right) $ 
and $  \beta \left( t \right) =  g \, \Tr\left( a \, \rho_{t} \right) $,
as well as 
$\left(\widetilde{\rho}_t \right)_{t \geq 0}$ is a  $N^p$-weak solution to (\ref{eq:8.4})
with 
$ \alpha \left( t \right) =  g \, \Tr\left( \sigma^{-} \widetilde{\rho}_{t}  \right) $  
and $  \beta \left( t \right) = g \,  \Tr\left( a \widetilde{\rho}_{t} \right) $.
Using Theorem \ref{th:Ehrenfest-Lineal} 
we get that
$\left( \Tr\left( a \, \rho_{t} \right),  \Tr\left( \sigma^{-} \rho_{t}   \right),  \Tr\left(  \sigma^{3}  \rho_{t} \right) \right)$
and 
$$
\left( \Tr\left( a \, \widetilde{\rho}_{t} \right),  \Tr\left( \sigma^{-} \widetilde{\rho}_{t}   \right),  \Tr\left(  \sigma^{3} \widetilde{\rho}_{t} \right) \right)
$$
are solutions of (\ref{eq:Lorenz}) with initial condition 
$  A \left( 0 \right) =  \Tr\left( a \, \varrho \right) $, $ S \left( 0 \right) = \Tr\left( \sigma^{-}  \varrho \right)$ 
and $ D \left( 0 \right) = \Tr\left(  \sigma^{3}  \varrho \right) $.
Since the solution of (\ref{eq:Lorenz}) is unique (see, e.g., Theorem \ref{th:LorenzEquations-Laser}), 
$\Tr\left( a \, \rho_{t} \right) = \Tr\left( a \, \widetilde{\rho}_{t} \right)$,
$ \Tr\left( \sigma^{-} \rho_{t}   \right) = \Tr\left( \sigma^{-} \widetilde{\rho}_{t}   \right)$
and
$\Tr\left(  \sigma^{3}  \rho_{t} \right)  = \Tr\left(  \sigma^{3} \widetilde{\rho}_{t} \right) $.
Therefore,
$\left( \rho_t \right)_{t \geq 0}$ and $\left(\widetilde{\rho}_t \right)_{t \geq 0}$
are  $N^p$-weak solution to (\ref{eq:8.4})
with the same $ \alpha \left( t \right)$ and $ \beta \left( t \right) $,
and hence 
using  Theorem \ref{th:EyU-Lineal} yields $\rho_t = \widetilde{\rho}_t$ for all $t \geq 0$. 
\end{proof}

\end{document}